\algrenewcommand\algorithmicrequire{\textbf{Input:}}
\algrenewcommand\algorithmicensure{\textbf{Output:}}
\setlist{leftmargin=*, topsep=3pt, itemsep=3pt}
\newcommand{\R}{\mathbb{R}}
\newcommand{\cost}{\mathrm{cost}}
\newtheorem{theorem}{Theorem}[section]
\newtheorem{lemma}[theorem]{Lemma}
\newtheorem{definition}[theorem]{Definition}
\newtheorem{claim}[theorem]{Claim}
\newtheorem{coro}[theorem]{Corollary}
\DeclareMathOperator*{\E}{{\mathbb{E}}}
\newcommand{\calC}{{\mathcal{C}}}
\newcommand{\poly}{{\mathrm{poly}}}
\newcommand{\opt}{{\mathrm{opt}}}
\newcommand{\ball}{\mathrm{Ball}}
\newcommand{\profit}{\mathrm{profit}}
\newcommand{\onefive}{\frac{1}{5}}
\newcommand{\twofive}{\frac{2}{5}}
\newcommand{\threefive}{\frac{3}{5}}
\newcommand{\fourfive}{\frac{4}{5}}
\newcommand{\topk}{\mathrm{top}_k}
\newcommand{\order}{\mathrm{order}}
\newcommand{\OPT}{\mathrm{OPT}}
\begin{document}\begin{CJK*}{UTF8}{gbsn}

\title{Simultaneously Approximating All Norms for Massively Parallel Correlation Clustering}

\author{ 
{Nairen Cao \footnote{The work of NC was supported by NSF grant CCF-2008422.}} \\ Department of Computer Science \\ Boston College \\ Chestnut Hill, MA, United States \\ 
\href{mailto:nc1827@nyu.edu}{nc1827@nyu.edu} 
\and Shi Li \footnote{The work of SL and JY was supported by the State Key Laboratory for Novel Software Technology, and the New Cornerstone Science Laboratory.} \\ School of Computer Science, \\ Nanjing University, \\ Nanjing, Jiangsu Province, China \\ \href{mailto:shili@nju.edu.cn}{shili@nju.edu.cn} \and Jia Ye \footnotemark[\value{footnote}] \\ School of Computer Science, \\ Nanjing University, \\ Nanjing, Jiangsu Province, China \\\href{mailto:jiaye@smail.nju.edu.cn}{jiaye@smail.nju.edu.cn}}
\date{}

\maketitle

\begin{abstract}
We revisit the simultaneous approximation model for the correlation clustering problem introduced by Davies, Moseley, and Newman~\cite{davies2023one}. The objective is to find a clustering that minimizes given norms of the disagreement vector over all vertices. 

We present an efficient algorithm that produces a clustering that is simultaneously a $63.3$-approximation for all monotone symmetric norms. This significantly improves upon the previous approximation ratio of $6348$ due to Davies, Moseley, and Newman~\cite{davies2023one}, which works only for $\ell_p$-norms. 

To achieve this result, we first reduce the problem to approximating all top-$k$ norms simultaneously, using the connection between monotone symmetric norms and top-$k$ norms established by Chakrabarty and Swamy \cite{chakrabarty2019approximation}. Then we develop a novel procedure that constructs a $12.66$-approximate fractional clustering for all top-$k$ norms. Our $63.3$-approximation ratio is obtained by combining this with the $5$-approximate rounding algorithm by Kalhan, Makarychev, and Zhou~\cite{kalhan2019correlation}.

We then demonstrate that with a loss of $\epsilon$ in the approximation ratio, the algorithm can be adapted to run in nearly linear time and in the MPC (massively parallel computation) model with poly-logarithmic number of rounds. 

By allowing a further trade-off in the approximation ratio to $(359+\epsilon)$, the number of MPC rounds can be reduced to a constant.
\end{abstract}

\thispagestyle{empty} 
\clearpage

\tableofcontents
\clearpage

\section{Introduction}
Clustering is a classic problem in unsupervised machine learning. It aims to classify a given set of data elements based on their similarities, with the goal of maximizing the similarity between elements within the same class and minimizing the similarity between elements in different classes. Among the various graph clustering problems, correlation clustering stands out as a classic model. Initially proposed by Bansal, Blum, and Chawla \cite{BBC04}, the model has numerous practical applications, including automated labelling \cite{chakrabarti2008graph,agrawal2009generating}, community detection and mining \cite{chen2012clustering,DBLP:conf/www/VeldtGW18,shi2021scalable}, and disambiguation task \cite{kalashnikov2008web}, among others.

The input of the standard correlation clustering problem is a complete graph over a set $V$ of $n$ vertices, where edges are partitioned into  the set $E^+$  of $+$edges and the set $E^-$ of $-$edges.  %signed graph $G=(V,E)$, where $E^+ = E$ and $E^- =\binom{V}{2} \setminus E$. Here, $E^+$ represents the set of $+$edges and $E^-$ represents the set of $-$edges. For simplicity, we set $n = |V|$ and $m = |E^+|$. 
The output of the problem is a clustering  (or a partition) $\calC$ of $V$ that minimizes the number of edges in disagreement: an edge $uv \in {V \choose 2}$ is in disagreement if $uv \in E^+$ but $u$ and $v$ are in different clusters in $\calC$, or $uv \in E^-$ but $u$ and $v$ are in a same cluster in $\calC$.  Throughout the paper, we shall use a graph $G = (V, E)$ to denote a correlation clustering instance, with $E$ being $E^+$ and ${V \choose 2} \setminus E$ being $E^-$. %Let $m = |E|$ be the number of $+$edges. Shi: The definition of $m$ is not needed here.

This problem is known to be APX-Hard \cite{CGW05}. There has been a long stream of $O(1)$-approximation algorithms for the problem \cite{BBC04,CGW05,ACN08,CMSY15,CLN22,CLLN23, cao2024understanding}, with the current best approximation ratio being $1.437$~\cite{cao2024understanding}. In the same paper, the authors presented an improved hardness of 24/23 for the problem, which also made the constant explicit.

Besides the standard setting, other objectives have been studied recently, with the goal of minimizing some norm of the \textit{disagreement vector} of the clustering $\calC$ over vertices. For a clustering $\mathcal{C}$ of $V$, the disagreement vector of $\calC$ is defined as $\cost_{\calC} \in \mathbb{Z}_{\ge 0}^{n}$, where $\cost_{\calC}(u)$ for every $u \in V$ is the number of edges incident to $u$ that are in disagreement with respect to $\mathcal{C}$. Given some norm $f:\R_{\geq 0}^n \to \R_{\geq 0}$ \footnote{This means $f$ satisfies $f(\alpha x) = \alpha f(x)$ for every real $\alpha \geq 0$ and $x \in \R_{\geq 0}^n$, and $f(x + y) \leq f(x) + f(y)$ for every $x, y \in \R_{\geq 0}^n$}, the goal of the problem is to minimize $f(\cost_{\calC})$. Notice that the standard correlation clustering problem corresponds to the case where $f$ is the $\ell_1$ norm.

Puleo and Milenkovic~\cite{puleo2016correlation} initiated the study of correlation clustering with the goal of minimizing the $\ell_p$ norm of the disagreement vector, where $p \in [1, \infty]$. They proved that the problem is NP-hard for the $\ell_\infty$-norm objective. Given a fixed $p \in [1, \infty]$, for the $\ell_p$-norms objective, they gave a $48$-approximation algorithm. The approximation ratio was subsequently improved by Charikar, Gupta, and Schwartz \cite{charikar2017local} to $7$ for the $\ell_\infty$-norm, and by Kalhan, Makarychev and Zhou \cite{kalhan2019correlation} to $5$ for the $\ell_p$-norm with any fixed $p \in [1, \infty]$. Very recently, Heidrich, Irmai, and Andres~\cite{heidrich20244} improved the approximate ratio to 4 for the $\ell_\infty$-norm.

Davies, Moseley and Newman \cite{davies2023one} introduced the concept of simultaneous approximation for all $\ell_p$-norms. They developed an efficient algorithm that outputs a single clustering $\calC$, which is simultaneously an $O(1)$-approximation for the $\ell_p$ norm for all $p \in [1, \infty]$. This is rather surprising, as it was not known a priori whether such a clustering $\calC$ even exists. To achieve the goal, they first construct a fractional clustering $x$ that is simultaneously an $O(1)$-approximation for all $\ell_p$ norms and then use the $5$-approximate rounding algorithm of Kalhan, Makarychev, and Zhou~\cite{kalhan2019correlation} to round $x$ into an integral clustering $\calC$.  Crucially, the algorithm of \cite{kalhan2019correlation}  guarantees a per-vertex $5$-approximation, meaning that $\cost_\calC(u)$ is at most $5$ times the fractional number of edges in disagreement incident to $u$, for every $u \in V$.  This strong property is necessary to obtain the final simultaneous $O(1)$-approximation in \cite{davies2023one}.

% MPC model
In light of the growing networks, it is imperative to develop efficient parallel algorithms. % to address this problem. 
This urgency is particularly pronounced in machine learning and data mining applications, where timely and efficient processing is essential for extracting meaningful insights from vast datasets.
Many works in the literature aim to design efficient parallel algorithms \cite{blelloch2012greedy, chierichetti2014correlation, DBLP:conf/nips/PanPORRJ15, fischer2019tight, DBLP:conf/wdag/CambusCMU21, cohen2021correlation, DBLP:conf/innovations/Assadi022, cambus20243+, cao2024breaking}. 
The MPC model, as a theoretical abstraction of several real-world parallel models such as MapReduce \cite{dean2008mapreduce}, is a prevalent methodology employed in these works.
% This problem has also been studied in other settings, including online settings \cite{mathieu2010online, NEURIPS2021_250dd568, CLMP22}, streaming settings \cite{pmlr-v37-ahn15, DBLP:conf/soda/BehnezhadCMT23, makarychev2024single, CambusKLPU-SODA24}, settings with fair or local guarantees \cite{charikar2017local, kalhan2019correlation, ahmadian2023improved, heidrich20234} and settings with differential privacy guarantees \cite{bun2021differentially, Daogao2022, DBLP:journals/corr/abs-2203-01440}. 

% \nairen{While Davies, Moseley, and Newman \cite{davies2023one} achieve an $O(1)$ approximation ratio simultaneously for all $l_p$ norms through a sequential algorithm, their algorithm is nearly linear when the graph is sparse. A natural question arises: Can we design a fast parallel algorithm that achieves simultaneous $O(1)$ approximation for all $l_p$ norms with nearly linear-time work? We answer this question affirmatively.}

\subsection{Our results}
In this paper, we revisit and generalize the simultaneous approximation model for the correlation clustering that was introduced by \cite{davies2023one}. Instead of considering only $\ell_p$ norms, we consider all \emph{monotone symmetric norms}. We say a norm $f:\R_{\geq 0}^n \to \R_{\geq 0}$ is monotone if for every $x, y \in \R_{\geq 0}^n$ with $x \leq y$, we have $f(x) \leq f(y)$. We say $f$ is symmetric if $f(x) = f(x')$ for every $x, x' \in \R_{\geq 0}^n$ such that $x'$ is a permutation of $x$. Such norms were considered in \cite{chakrabarty2019approximation} in the context of load balancing and clustering.  Our first result is that there exists simultaneous $O(1)$-approximation for all monotone symmetric norms for correlation clustering and it can be constructed in polynomial time.
\begin{definition}
    Given a correlation clustering instance $G = (V, E)$ and $\alpha \geq 1$, we say a clustering $\calC$ over $V$ is simultaneously $\alpha$-approximate, or a simultaneous $\alpha$-approximation, for a family $F$ of norms, if we have $f(\cost_\calC) \leq \alpha \cdot f(\cost_{\OPT_f})$ for every $f \in F$, where $\OPT_f$ is the optimum clustering for $G$ under norm $f$.
\end{definition}

\begin{theorem}
\label{thm:sequentialAlgorithm}
    Given a correlation clustering instance $G = (V, E)$, in polynomial time we can construct a simultaneous $63.3$-approximate clustering $\calC$ for the family of monotone symmetric norms.
\end{theorem}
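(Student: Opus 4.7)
The plan is to implement the three-step strategy announced in the introduction and abstract.

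First, I would invoke the Chakrabarty--Swamy reduction: every monotone symmetric norm $f$ on $\R_{\geq 0}^n$ can be approximated up to a universal constant by a maximum over a polynomially-sized family of scaled top-$k$ norms, where $\topk(x)$ denotes the sum of the $k$ largest coordinates of $x$. Consequently, it suffices to exhibit a single clustering $\calC$ such that
\[
\topk(\cost_\calC) \le \beta \cdot \topk(\cost_{\OPT_k})
\]
holds simultaneously for every $k \in \{1,2,\ldots,n\}$, where $\OPT_k$ denotes the optimum clustering under the top-$k$ norm. A simultaneous $\beta$-approximation for all top-$k$ norms then yields a simultaneous $O(\beta)$-approximation for the entire family of monotone symmetric norms.

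Second, and this will be the main technical obstacle, I would construct a fractional clustering $x \in [0,1]^{\binom{V}{2}}$ satisfying the triangle inequality whose per-vertex fractional cost
\[
\cost_x(u) := \sum_{v:\,uv \in E^+} x_{uv} + \sum_{v:\,uv \in E^-} (1 - x_{uv})
\]
satisfies $\topk(\cost_x) \le 12.66 \cdot \topk(\cost_{\OPT_k})$ for every $k$. A natural starting point is the standard LP relaxation of correlation clustering, augmented with auxiliary variables $\theta_k \ge 0$ and slacks $z^k_u \ge 0$ enforcing $z^k_u \ge \cost_x(u) - \theta_k$, so that $k\theta_k + \sum_u z^k_u$ upper-bounds $\topk(\cost_x)$. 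Using standard $(1+\eps)$-geometric enumeration, I would guess, for each $k$, a value $B_k$ within a constant factor of $\topk(\cost_{\OPT_k})$, and impose the per-$k$ budget constraint $k\theta_k + \sum_u z^k_u \le \beta B_k$. The crux is designing a combinatorial (likely ball-growing or region-assignment) procedure that, using the guessed profile $\{B_k\}$, either produces a fractional clustering meeting every per-$k$ budget simultaneously or identifies a $k$ whose guess must be raised. The $12.66$ constant is expected to emerge as the worst-case loss of this simultaneous reconciliation across all $k$, and verifying it will require a careful charging argument relating the cost contribution of each vertex to the fractional optima of the specific values of $k$ for which that vertex lies among the top-cost coordinates.

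Third, I would apply the per-vertex $5$-approximate rounding algorithm of Kalhan, Makarychev, and Zhou to $x$, obtaining an integral clustering $\calC$ with $\cost_\calC(u) \le 5 \cdot \cost_x(u)$ for every $u \in V$. Since the top-$k$ norm is a monotone and symmetric function of the coordinate vector, per-vertex approximation immediately yields $\topk(\cost_\calC) \le 5 \cdot \topk(\cost_x)$ for every $k$. Chaining the three steps gives $\topk(\cost_\calC) \le 5 \cdot 12.66 \cdot \topk(\cost_{\OPT_k}) = 63.3 \cdot \topk(\cost_{\OPT_k})$ for every $k$, and invoking the Chakrabarty--Swamy reduction from the first step converts this into a simultaneous $63.3$-approximation for every monotone symmetric norm, establishing Theorem~\ref{thm:sequentialAlgorithm}. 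The running time is polynomial because the enumeration over budget profiles is polynomial, each LP can be solved in polynomial time, and the rounding procedure of Kalhan, Makarychev, and Zhou runs in polynomial time.
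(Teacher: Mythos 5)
Your high-level scaffolding matches the paper exactly: reduce to top-$k$ norms via Chakrabarty--Swamy, construct a norm-oblivious fractional metric with $\cost^k_x \le 12.66\cdot\opt^k$ for all $k$, then round with the per-vertex 5-approximate KMZ procedure and multiply $5\times 12.66$. Two remarks on the first and third steps before the main issue. On the reduction, you say monotone symmetric norms are "approximated up to a universal constant" by scaled top-$k$ norms; the Chakrabarty--Swamy statement is in fact exact: every such norm equals a supremum of ordered norms, each of which is a nonnegative combination of top-$k$ norms, so a simultaneous $\rho$-approximation for all top-$k$ norms is already a $\rho$-approximation for every monotone symmetric norm with no extra constant. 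Were there a genuine constant loss here your chain would yield $c\cdot 63.3$, not $63.3$, so this exactness matters. The rounding step is correct and matches the paper.

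The central gap is step two, which is the paper's actual technical contribution, and your plan for it is both unexecuted and structurally different from what works. You propose to augment the correlation-clustering LP with auxiliary variables $\theta_k, z_u^k$ encoding each top-$k$ norm, guess a budget profile $\{B_k\}_{k=1}^n$ by geometric enumeration, and run some unspecified ball-growing procedure to find an $x$ meeting all budgets. But enumerating a separate $B_k$ for each of the $n$ values of $k$, even over $O(\log n)$ scales each, is exponential in $n$, so the claimed polynomial running time does not follow. More fundamentally, the hard part is proving that \emph{any} single $x$ satisfying all $n$ budget constraints exists; writing the constraints into an LP does not make the LP feasible, and "the $12.66$ constant is expected to emerge" is not an argument. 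The paper sidesteps all of this: it never solves or even writes down a per-$k$ LP. It instead gives a single explicit combinatorial construction --- delete $+$edges $uv$ with $|N(u)\,\Delta\, N(v)| > \beta\max(d(u),d(v))$ to form a subgraph $H$, then set $x_{uv}=1-|N_H(u)\cap N_H(v)|/\max(d(u),d(v))$ --- and proves $\cost^k_x\le 12.66\cdot\opt^k$ simultaneously for all $k$ by a charging argument. The engine of that analysis is a simple observation you would need and do not have: if the fractional cost can be written as $\sum_r c(r)\cdot\cost_{\calC}(r)$ with $|c|_\infty\le\alpha$ and $|c|_1\le\alpha k$, then it is at most $\alpha\cdot\cost^k_\calC$ (because $c/\alpha$ is a convex combination of $\{0,1\}$-vectors with at most $k$ ones). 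The paper then exhibits such coefficient vectors for each edge class. Without this device, or some replacement for it, your proposal does not constitute a proof.
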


Next, we are concerned with the running time of the algorithm and its implementation under the MPC model. To state the result, we need a formal description of the MPC model. 

\paragraph{The MPC model.} In the MPC model, data is distributed across a set of machines, and computation proceeds in synchronous rounds. During each round, each machine first receives messages from other machines, then performs computations based on this information and its own allocated memory, and finally sends messages to other machines to be received at the start of the next round. Each machine has limited local memory, restricting the total number of messages it can receive or send in a round. %The time complexity is defined by the total number of rounds needed to solve the problem.
The efficiency of the algorithm is measured by the number of rounds, the memory used by each machine, the total memory used by all machines, and the running time over all machines, also known as the total work.

In this paper, we consider the MPC model in the \emph{strictly sublinear regime}: Each machine has $O(n^\delta)$ local memory, where $n$ is the input size and $\delta > 0$ is a constant that can be made arbitrarily small. Under this model, we assume the input received by each machine has size $O(n^\delta)$. 
%I removed O(m) space. m is not defined at this point. 

We then describe the correlation clustering problem under the MPC model in the strictly sublinear regime. We use $n = |V|$ and $m = |E|$ to denote the number of vertices and edges respectively in the input graph $G = (V, E)$. The edges $E$ are distributed across the machines, where each machine has $O(n^\delta)$ memory for a constant $\delta > 0$ which can be made arbitrarily small. At the end of the algorithm, each machine needs to store in its local memory the IDs of the clusters for all the vertices incident to its assigned edges.

Our main result regarding MPC algorithm is given as follows,

%TODO: use the constant that is better than 63.3.
%This paper is focused on optimizing all $l_p$-norms$(p\geq 1)$ for correlation clustering at the same time. The main result of the paper is as follows: \snote{We need to state the result for all monotone symmetric norms, not just $l_p$ norms.}

\begin{theorem}
\label{thm:mainthmlp}
    Let $\epsilon \in (0, 1)$. 
%For any constant $\delta> 0$, and a small enough constant $\epsilon$, 
    There exists a randomized MPC algorithm in the strictly sublinear regime that, given a correlation clustering instance $G = (V, E)$, in $O(\log^3 n)$ rounds outputs a simultaneous $(63.3 + O(\epsilon))$ clustering for $G$ for all monotone symmetric norms. This algorithm succeeds with high probability. It uses %$O(n^\delta)$ memory per machine, 
    $\tilde{O}(m / \epsilon^6)$ total memory and $\tilde{O}(m / \epsilon^6)$ total work.\footnote{As usual, we use $\tilde O(\cdot)$ to hide a poly-logarithmic factor in the input size.}
\end{theorem}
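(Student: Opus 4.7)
The plan is to parallelize each of the three components behind Theorem~\ref{thm:sequentialAlgorithm}: (i) the Chakrabarty--Swamy reduction from monotone symmetric norms to top-$k$ norms, (ii) the construction of a $12.66$-approximate fractional clustering that is simultaneously good for all top-$k$ norms, and (iii) the $5$-approximate per-vertex rounding of Kalhan, Makarychev, and Zhou. The reduction (i) is purely structural: it dictates only what the output must satisfy and requires no MPC implementation, so it contributes nothing to the round complexity or memory.

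The main algorithmic effort is in step~(ii). I would formulate the simultaneous top-$k$ objectives as a single LP whose constraints express, for a geometric sequence of values $k \in \{1,2,4,\dots,n\}$, that the $k$-th largest per-vertex fractional disagreement is within a constant factor of the optimum for the corresponding top-$k$ norm; this discretization costs only a $(1+\eps)$ factor and reduces the problem to $O(\log n)$ levels. I would solve this LP approximately in the MPC model via a multiplicative-weights-update (MWU) scheme, where dual weights on vertices and on top-$k$ levels are updated from aggregated per-vertex disagreement statistics. Each MWU iteration can be implemented in $O(\log n)$ MPC rounds using standard sorting and aggregation primitives in the strictly sublinear regime, and the number of iterations is $\poly(\log n, 1/\eps)$, which gives $O(\log^3 n)$ rounds overall. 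The $1/\eps^6$ factor in memory and work arises from combining the $1/\eps^2$ penalty of MWU, a further $1/\eps^2$ coming from graph sparsification needed so that per-vertex neighborhood information fits within the memory budget, and a $1/\eps^2$ from discretizing the top-$k$ levels and boosting the success probability to high probability.

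For step~(iii) I would implement the Kalhan--Makarychev--Zhou ball-growing rounding via a parallel pivot-selection scheme in the spirit of the MPC correlation-clustering algorithms of~\cite{fischer2019tight,cao2024breaking,cambus20243+}. The rounding is essentially local: pivots claim balls of vertices whose fractional distances lie below a randomly chosen threshold, and a random ordering of pivots can be resolved in $O(\log^2 n)$ MPC rounds while preserving the per-vertex $5$-approximation of the underlying fractional solution. Composing steps~(ii) and~(iii) yields the claimed $(12.66 + O(\eps)) \cdot 5 = 63.3 + O(\eps)$ approximation, with high probability, within $O(\log^3 n)$ rounds and $\tilde O(m/\eps^6)$ total memory and work.

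The main obstacle will be maintaining the \emph{simultaneous, per-vertex} character of the $12.66$-approximation throughout the approximate LP solve in step~(ii). Unlike the classical $\ell_1$ objective, we cannot settle for an $(1+\eps)$-approximation of a single LP value: every discretized top-$k$ level must be respected within a multiplicative $(1+\eps)$ factor simultaneously, which demands uniform convergence of the MWU across all levels. To control this I would use a potential function that aggregates multiplicative violations across levels, and sparsify the input graph only by keeping edges whose contribution to some level is non-negligible, so that per-vertex statistics are preserved up to a $(1\pm\eps)$ factor. This uniformity requirement is precisely what forces both the $O(\log^3 n)$ round count and the $1/\eps^6$ dependence.
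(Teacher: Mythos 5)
Your plan diverges from the paper in a way that hides the central difficulty of step (ii), and as written it has a genuine gap.

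\paragraph{The combinatorial solution is not obtained by LP solving.} The paper never solves an LP. Its fractional solution is given by a \emph{closed-form combinatorial formula}: build a subgraph $H$ by deleting $+$edges $uv$ with $|N(u)\Delta N(v)| > \beta\,\max\{d(u),d(v)\}$, then set $x_{uv} = 1 - |N_H(u)\cap N_H(v)| / \max\{d(u),d(v)\}$. This single, $k$-oblivious vector is then \emph{proved} to satisfy $\cost^k_x \le 12.66\,\opt^k$ for all $k$ (Lemma~\ref{lemma:boundratiomain}). Your proposal instead writes down an LP with constraints ``$\cost^k_x \le O(1)\cdot\opt^k$ for a geometric scale of $k$'' and wants to solve it approximately via MWU. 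But you have not argued that this LP is \emph{feasible}: the existence of a single fractional $x$ that is simultaneously $O(1)$-approximate for every top-$k$ norm is exactly the nontrivial structural fact that the paper establishes, and without it the MWU has nothing to converge to. Moreover the LP constraints require (constant-factor estimates of) $\opt^k$ for each $k$, which must themselves be obtained somehow. The paper sidesteps all of this; the only approximations needed in MPC are for the degree-intersection quantities $|N_H(u)\cap N_H(v)|$, done by sampling in $O(1)$ rounds (Algorithm~\ref{alg:pre-clusteringlpEfficient}, Theorem~\ref{thm:efficientpre-clusteringlp}). Consequently your bookkeeping for the $1/\eps^6$ factor (MWU $\times$ sparsification $\times$ level discretization) does not reflect the actual source, which is the sampling threshold $\tau = \Theta(\log n/\eps^5)$ combined with the size $|K| = \tilde O(m/\eps)$ of the set of $-$edges whose LP value must be estimated.

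\paragraph{The rounding step needs the approximate-max-$L$ idea, not random pivot ordering.} The per-vertex $5$-approximation of KMZ requires, in each round, choosing a pivot $w$ with $L_t(w)$ close to $\max_u L_t(u)$. The paper's parallelization keeps all candidates within a $(1+\eps)$ factor of $L_t^{\max}$, samples them, and resolves conflicts with a Luby-type step; the resulting per-vertex loss is only $(1+\eps)$ and one proves an $O(\log^3 n)$ round bound by showing that either $L^{\max}$ decays or the conflict degree halves. Random pivot orderings (as in the MPC PIVOT literature you cite) give $\ell_1$ guarantees but do not yield a per-vertex LP-rounding guarantee; they are solving a different problem. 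Your description is thus in the right spirit but omits the actual mechanism that preserves the per-vertex bound. Both gaps are fixable by importing the paper's combinatorial construction and its $(1+\eps)$-approximate pivot selection, but as stated the proposal relies on unproved feasibility and on an inapplicable parallel-pivot primitive.
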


%Then we show that the clustering $\calC$ can be constructed in nearly-linear time with high probability, with a loss of $1+\epsilon$ in the approximation ratio. Indeed, the algorithm can be implemented in the MPC model with poly-logarithmic rounds. 

In particular, the algorithm can be converted into a nearly linear time algorithm that with high probability outputs a $(63.3+O(\epsilon))$-simultaneous approximation for all monotone symmetric norms. \smallskip

Along the way, we develop an MPC rounding algorithm with a per-vertex $(5 + 55\epsilon)$ approximation guarantee, based on the sequential algorithm due to \cite{kalhan2019correlation}. Given its potential independent interest, we state it here for future references.
\begin{restatable}{theorem}{thmroundingmain}
\label{thm:roundingmaintheorem}
Let $\epsilon \in (0,1)$ be a constant. Given a graph $G = (V, E)$ and a set of LP value $( x_{uv} )_{u,v \in V}$ satisfying the approximate triangle inequality, that is, for any $u,v,w \in V$, we have $x_{uv} + x_{uw} + \epsilon \geq x_{vw}$. Let $y_u = \sum_{uv \in E} x_{uv} + \sum_{uv \in {V \choose 2} \setminus E}(1 - x_{uv})$ be the LP disagreement for node $u$. There exists an MPC algorithm that computes a clustering $\calC$ such that for any node $u$, we have 
\begin{align*}
    \cost_{\calC}(u) \leq (5 + 55\epsilon) y_u.
\end{align*}

This algorithm always succeeds but terminates in $O(\log^3 n / \epsilon)$ rounds with high probability and requires $O(n^\delta)$ memory per machine. Moreover, let $K = E \cup \{ uv \in {V \choose 2} \setminus E \mid x_{uv} < 1 \}$ be the set of $+$edges and $-$edges whose LP value is less than 1. The algorithm uses a total memory of $O(|K|\log n)$ and a total work of $O(|K|\log^3 n /\epsilon)$.
\end{restatable}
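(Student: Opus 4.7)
The plan is to adapt the sequential rounding algorithm of Kalhan, Makarychev, and Zhou (KMZ) to the MPC model while preserving its per-vertex $5$-approximation up to an additive $O(\epsilon)$ loss coming from the approximate triangle inequality, the discretization of radii, and the parallel pivot selection. Recall that the KMZ algorithm iteratively picks a pivot $u$, chooses a threshold radius $r$, and forms a cluster $\{v : x_{uv} \leq r\}$; the radius is tuned so that, for every vertex $v$ in this cluster, both its intra-cluster cost (from $-$edges $vw$ with $x_{vw}$ small inside the cluster) and its inter-cluster cost (from $+$edges leaving the cluster) can be charged to $5 y_v$ via the triangle inequality.

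My first step is to parallelize pivot selection. I would assign each active vertex an independent uniform priority $\pi(v) \in [0,1]$ and declare $v$ to be a pivot in the current phase if $\pi(v)$ is minimal among all $u$ sharing an edge with $v$ in $K$. By the approximate triangle inequality, any two such pivots $u, u'$ satisfy $x_{uu'} \geq 1 - \epsilon$, so their clusters (taken with radius bounded away from $1/2$) remain essentially disjoint and can be processed simultaneously. A standard coupling argument shows that a constant fraction of active vertices is clustered per phase, so $O(\log n)$ phases suffice with high probability.

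My second step is to implement each phase under the strictly sublinear MPC regime. For each pivot $u$, computing the ball $\{v : x_{uv} \leq r\}$ and choosing the adaptive KMZ threshold requires aggregating over the neighbours of $u$ in $K$. Using standard sorting, broadcasting, and prefix-sum primitives on machines of $O(n^\delta)$ memory, this can be performed in $O(\log^2 n / \epsilon)$ MPC rounds per phase, after discretizing the candidate thresholds to multiples of $\epsilon$. Because only edges in $K$ are explicitly stored and manipulated, the total memory per phase is $\tilde O(|K|)$ and the total work is $\tilde O(|K|/\epsilon)$, matching the claimed bounds after summing over the $O(\log n)$ phases and giving an overall $O(\log^3 n / \epsilon)$ round complexity.

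The main obstacle will be preserving the per-vertex $(5+O(\epsilon))$ guarantee in this parallel, approximate setting. The sequential KMZ analysis charges each unit of cost incurred by $v$ to $y_v$ using identities that exploit (i) the exact triangle inequality and (ii) the fact that the pivot for $v$'s cluster sees the ``whole'' ball when its radius is chosen. Here I face three sources of slack: the $\epsilon$-approximate triangle inequality, the $\epsilon$-discretization of candidate radii, and the possibility that a vertex $v$ lies near the boundary of several tentative balls created by parallel pivots. I would resolve boundary conflicts by assigning $v$ to the pivot of smallest priority whose ball, shrunk by $\epsilon$, contains $v$, and I would re-run the KMZ per-vertex charging with a tracked $O(\epsilon) y_v$ slack on each edge contribution. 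Accumulating these $O(\epsilon)$ losses across the three sources and carefully tracking constants yields the stated $(5+55\epsilon) y_u$ bound, completing the theorem.
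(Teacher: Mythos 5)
The core difficulty you identify — parallelizing the KMZ pivot selection while preserving the per-vertex $(5+O(\epsilon))$ bound — is the right one, but your proposed resolution would not give the theorem, because it drops the one feature of KMZ that makes the constant $5$ work. In KMZ, the pivot at each step is chosen to (approximately) maximize $L_t(w)=\sum_{v\in\ball_{V_t}(w,r)}(r-x_{vw})$, and the per-vertex charging argument hinges on this: for any vertex $u$, the ``low'' profit $P_{\mathrm{low}}(u)\geq -L_t(u)$ must be absorbed by the ``high'' profit $P_{\mathrm{high}}(u)\geq(1+\epsilon)L_t(w)$ coming from the pivot $w$ that captures $u$, which only works because $L_t(u)\leq(1+\epsilon)L_t(w)$. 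If instead you assign independent random priorities $\pi(v)$ and declare local minima in the $K$-neighborhood graph to be pivots — the Chierichetti--Dalvi--Kumar-style parallel pivoting — you lose this relation entirely: a local-minimum pivot $w$ can have $L_t(w)$ much smaller than $L_t(u)$ for some $u$ it absorbs, and the profit argument collapses. That style of parallel pivoting gives an $O(1)$ per-vertex guarantee at best, not $5+O(\epsilon)$.

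The paper resolves this differently. It defines the candidate set $M_t=\{u\in V_t: L_t(u)\geq L_t^{\max}\}$ where $L_t^{\max}$ is a geometric rounding of $\max_w L_t(w)$ to a power of $1+\epsilon$, so every candidate has near-maximal $L_t$. It then runs a Luby-style maximal-independent-set step on $M_t$: sample each candidate with probability $p_t=1/(2\Delta_t)$ where $\Delta_t$ is the maximum number of candidates within distance $2/5$ of any candidate, and keep a sampled candidate only if no other sampled candidate is within $2/5$. This preserves the invariant $L_t(u)\leq(1+\epsilon)L_t(w)$ for the eventual center $w$ capturing $u$, so the KMZ charging goes through with the $(1+\epsilon)$ slack. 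Your claim about termination is also unsupported: with random-priority pivoting one argues a constant fraction of vertices is settled per phase, but in the $M_t$-restricted scheme that is false. The paper instead shows (Lemma 5.10) that every $O(\log n)$ rounds, either $L_t^{\max}$ drops by a $(1+\epsilon)$ factor or $\Delta_t$ halves, giving $O(\log^2 n\cdot\log_{1+\epsilon}n)=O(\log^3 n/\epsilon)$ rounds. You would need to supply both of these pieces — restrict pivots to near-maximal $L_t$ candidates and give a potential argument of this kind for round complexity — to make the proof correct.

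A smaller point: the radius in this algorithm is fixed at $2/5$, not adaptively tuned per pivot, and the assignment rule resolves ties among centers by minimum ID rather than by shrunken balls; the per-vertex analysis sorts centers by ID and considers the first one within distance $4/5$, so these details matter for the case analysis.
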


% \begin{coro}
%     Let $\epsilon \in (0, 1)$. There is a $\tilde O(m/\epsilon^6)$-time randomized algorithm that, given a correlation clustering instance $G = (V, E)$, outputs a clustering $\calC$, that is a simultaneous $(63.3 + \epsilon)$-approximation for $G$ for all monotone symmetric norms with high probability.
% \end{coro}

The $O(\log^3 n)$ round in the above theorem might not be desirable for many applications. Our next result shows that we can reduce the number of rounds to $O(1)$, albeit with a worse $O(1)$ approximation ratio:
%We also show that by using the pre-clustering procedure in \cite{cohen2021correlation}, 
\begin{restatable}{theorem}{thmconstantMPCAlgorithmForCC} \label{thm:constantMPCAlgorithmForCC}
%For any constant $\delta> 0$, and a small enough constant $\epsilon$, 
Let $\epsilon \in (0, 1)$ be a constant.
There exists a randomized MPC algorithm in the strictly sublinear regime that, given a correlation clustering instance $G = (V, E)$, in $\mathbf{O(1)}$ rounds outputs a clustering that is simultaneously a $(359 + \epsilon)$-approximation, for all monotone symmetric norms. This algorithm succeeds with high probability, %uses $n^\delta$ memory per machine, 
and uses a total memory of $\tilde{O}(m / \epsilon^2)$ and a total work of $\tilde{O}(m / \epsilon^2)$.
\end{restatable}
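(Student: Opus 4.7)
Our plan is to reuse the two-stage framework underlying Theorem~\ref{thm:mainthmlp}—build a fractional clustering that is simultaneously $O(1)$-approximate for every top-$k$ norm (and hence, via the Chakrabarty--Swamy reduction, for every monotone symmetric norm), then round it with a per-vertex approximation guarantee—but replace each stage with a constant-round MPC analogue. The jump in the approximation ratio from $63.3$ to $359$ reflects the slack we pay for compressing $O(\log^3 n)$ rounds down to $O(1)$.

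For the fractional stage, the $12.66$-approximate construction underlying Theorem~\ref{thm:mainthmlp} iterates a peeling/ball-growing subroutine, costing $\Omega(\log n)$ rounds per iteration and $\Omega(\log n)$ iterations in the worst case. We would replace it by a one-shot randomized ball-growing, in the spirit of a low-diameter decomposition implementable in $O(1)$ MPC rounds: first sparsify the admissible pairs so that only $\tilde O(m/\epsilon^2)$ values are ever stored, then let each vertex draw an exponential delay and attach to the center realizing the minimum shifted distance in the admissible-distance graph. This loses an additional $O(1)$ factor per top-$k$ norm but preserves the simultaneous $O(1)$ guarantee across all such norms.

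For the rounding stage, the $(5+O(\epsilon))$-per-vertex guarantee of Theorem~\ref{thm:roundingmaintheorem} is achieved by peeling one cluster at a time over $O(\log^3 n/\epsilon)$ rounds. We would replace it by a constant-round parallel pivot: sample a random set of pivots of appropriate density, then let each non-pivot attach to the fractionally-closest pivot, breaking ties by a random priority. Because the fractional values satisfy the approximate triangle inequality, the cost $\cost_\calC(u)$ contributed by a vertex $u$ in its assigned cluster can be bounded by an absolute constant times $y_u$, yielding a per-vertex $O(1)$-approximation in $O(1)$ rounds. Combined with the previous stage we obtain a simultaneous $O(1)$-approximation for all monotone symmetric norms; a careful bookkeeping of the two constants is what pins the final ratio at $359+\epsilon$.

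The main obstacle is preserving the per-vertex guarantee in the constant-round regime. In the sequential rounding of Theorem~\ref{thm:roundingmaintheorem}, each vertex is charged only once, at the moment its ball is peeled, so the radius of that ball directly bounds $\cost_\calC(u)/y_u$ by an absolute constant. In a single-shot parallel pivot, by contrast, all non-pivots commit simultaneously, so $u$ may be dragged into a cluster that is ``polluted'' by other non-pivots which share the same pivot but sit far from $u$ in LP distance. Controlling this pollution requires a two-layer charging—one layer for the pivot's own LP neighborhood and one for the extra mass contributed by co-assigned non-pivots—and is the dominant source of the blow-up from $63.3$ to $359$. Once this is in place, the bounds of $\tilde O(m/\epsilon^2)$ total memory and total work follow from the sparsification in the fractional stage together with the observation that each constant-round subroutine touches every stored pair only $O(1)$ times.
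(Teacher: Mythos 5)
Your proposal diverges from the paper in a fundamental way, and moreover contains a gap at the step you yourself flag as the crux. The paper does not try to compress the fractional--solve--then--round pipeline into $O(1)$ rounds at all. Instead, for this theorem it abandons the metric LP and the KMZ-style rounding entirely and falls back on the \emph{pre-clustering} algorithm of Cohen-Addad et al.~\cite{cohen2021correlation} (Algorithm~\ref{alg:pre-clustering}): build the filtered graph $H$, mark a vertex light if it lost a $\lambda$-fraction of its $G$-degree, delete light--light edges to obtain $\tilde G$, and output the connected components $F$. Connected components of $\tilde G$ can be found in $O(1)$ MPC rounds, which is why the round complexity drops. The approximation ratio is then established combinatorially: Lemma~\ref{lemma:preclusteringratio} gives $\cost^k_F \le (3/\beta + 1/\lambda + 1/(\beta\lambda) + 8)\cdot\opt^k$ via a coefficient-charging argument (Lemmas~\ref{lemma:g+3}, \ref{lemma:g+1+2+3}) for the deleted edges, plus the structural fact (Lemma~\ref{lemma:preclusteringconnecttomostnodesinsameclustering}, from \cite{cohen2021correlation}) that every vertex in a component $CC$ has $d(u,CC) \ge (1-8\beta-\lambda)|CC|$, which yields a $7$-approximation inside the pruned graph (Lemma~\ref{lem:nearoptimalindeletedgraph}). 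Choosing $\beta=0.0275$, $\lambda=0.155$ gives the $359$.

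Your proposal, by contrast, keeps the two-stage framework and replaces the rounding by a one-shot random-pivot assignment. That is where the argument has a real hole. You correctly identify the obstacle --- in a single-shot parallel pivot, a vertex $u$ can be ``dragged'' into a polluted cluster and accrue disagreements unrelated to its own LP neighborhood, which destroys the per-vertex guarantee that the KMZ analysis delivers and which the whole simultaneous-norm reduction hinges on --- but the ``two-layer charging'' you invoke to close it is named, not given. The KMZ per-vertex bound crucially uses that each cluster center $w$ is chosen to (near-)maximize $L_t(w)$; random pivots do not satisfy anything like this, so the profit argument of Section~\ref{sec:MPC-solve-rounding} does not transfer, and I do not see how to recover a per-vertex (in particular top-$1$) $O(1)$ bound from a single sampling pass. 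Until that charging is made precise, the claimed $359$ is not supported. Separately, a factual correction: you state that the fractional construction ``costs $\Omega(\log n)$ rounds per iteration and $\Omega(\log n)$ iterations.'' That is not so --- Theorem~\ref{thm:efficientpre-clusteringlp} already produces $\tilde x$ in $O(1)$ MPC rounds; only the rounding of Section~\ref{sec:MPC-solve-rounding} uses $O(\log^3 n)$ rounds. So the fractional stage needs no redesign; the bottleneck is exclusively in the rounding, which the paper removes by not rounding at all.
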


%In the context of the MPC model, we pose the question of whether there exists a MPC algorithm that can return a single clustering that simultaneously provides an $O(1)$-approximation for all monotone and symmetric norms. 

Overall, relative to \cite{davies2023one}, our algorithms demonstrate the following improvements.
\begin{enumerate}[leftmargin=*]
    \item We generalize the family of norms for the simultaneous approximation from $\ell_p$ norms to all monotone symmetric norms.
    \item We obtain a simpler construction, which leads to a much smaller approximation ratio. Using a result from \cite{chakrabarty2019approximation}, to simultaneously approximate all monotone symmetric norms, it suffices to approximate all top-$k$ norms: the top-$k$ norm of a non-negative vector is the sum of its largest $k$ coordinates. Though being more general mathematically, the top-$k$ norms are more convenient to deal with compared to $\ell_p$ norms. 
    \item We can make our algorithm run in nearly linear time. This is the first nearly-linear time simultaneous $O(1)$-approximation algorithm for the problem, even when we restrict to $\ell_p$ norms. In contrast, the algorithm of \cite{davies2023one} runs in nearly linear time only when the graph $G$ has $O(1)$ maximum degree.
    \item We can make our algorithm run in the MPC model with $O(1)$ rounds. Our work is the first to consider the problem in the MPC model. 
\end{enumerate}

\subsection{Overview of Techniques} We then discuss our techniques for each of our main results. 
\paragraph{Polynomial Time Construction of Simultaneous $O(1)$-Approximation for All Symmetric Norms} By \cite{chakrabarty2019approximation}, we can reduce the problem of approximating all monotone symmetric norms to approximating all top-$k$ norms.  We then construct a fractional solution $x$, which is a metric over $V$ with range $[0, 1]$, such that the fractional disagreement vector for $x$ has top-$k$ norm at most $12.66 \cdot \opt_k$ for any $k \in [n]$, where $\opt_k$ is the cost of the optimum clustering under the top-$k$ norm. Then, we can use the 5-approximate rounding algorithm of KMZ \cite{kalhan2019correlation}, to obtain a simultaneous $63.3$-approximation for all top-$k$ norms.  The KMZ rounding algorithm has two crucial properties that we need: it does not depend on $k$ and it achieves a per-vertex guarantee. 

We elaborate more on how to construct the metric $x: {V \choose 2} \to [0, 1]$. 
%Once we have Lemma \ref{lem:Topk2all}, which offers an effective way to bridge monotone and symmetric norms with the top-k norm, it becomes much more intuitive to address the problem using the top-k norm. Our algorithm adopts the LP rounding framework as outlined in \cite{davies2023one}. Given a graph $G = (V, E)$, let $x_{uv}$ represent the LP value for each pair of nodes. For all top-k norms, the constraints of the LP remain the same, while the objective varies. The constraint we apply is the triangle inequality; for any $u, v, w \in V^3$, we require that $x_{uv} + x_{uw} \geq x_{vw}$. Consequently, we can assign each edge an LP value such that all top-k norm LPs are satisfied. We need only demonstrate that the LP value can be bounded within each top-k norm.
%
A natural idea to assign the LP values, that was used by~\cite{davies2023one}, is to set $x_{uv}$ based on the intersection of the neighborhood between $u$ and $v$. Intuitively, the more common neighbors two nodes share, the closer they should be.  A straightforward approach to implementing this idea is to set $x_{uv} = 1 - \frac{|N(u) \cap N(v)|}{\max(d(u), d(v))}$, where $N(u)$ denotes the neighboring nodes of $u$ in $G$ and $d(u) = |N(u)|$ denotes the degree of $u$; it is convenient to assume $u \in N(u)$. This approach works for the top-$1$ norm (i.e., the $\ell_\infty$ norm) as discussed in \cite{davies2023fast}, but fails for the top-$n$ norm (i.e., the $\ell_1$ norm). Consider a star graph, where the optimal clustering under the top-$n$ norm has a cost of $n - 2$. This approach will assign $x_{uv} = 1 - \frac{1}{2} = 1/2$ for all $-$edges, leading to an LP cost of $\Theta(n^2)$ and a gap of $\Omega(n)$. \cite{davies2023one} addressed the issue by rounding up LP values to $1$ for $-$edge, if for a given node, its total $-$edges LP disagreement is larger than the number of its $+$edges. After the transformation, the triangle inequalities are only satisfied approximately, but this can be handled with $O(1)$ loss in the approximation ratio. 

We address this issue using a different approach, that is 
%Rather than directly rounding up these edges, we explore an alternative method 
inspired by the pre-clustering technique in \cite{cohen2021correlation}. We first preprocess the graph $G$ by removing edges $uv \in E$ for which $|N(u) \cap N(v)|$ is small compared to $\max\{d(u), d(v)\}$. Let the resulting graph be $H$. % the processed graph as $H$. Intuitively, if an edge's endpoints have scant common neighbors, this edge poorly indicates the distance between them. 
We then set our LP values as $x_{uv} = 1 - \frac{|N_H(u) \cap N_H(v)|}{\max\{d(u), d(v)\}}$ if $u \neq v$, where $N_H(u)$ is the set of neighbors of $u$ in $H$.  We show that this solution is a $12.66$-approximation for all top-$k$ norms simultaneously. When compared to \cite{davies2023one}, in addition to the improved approximation ratio, we obtain a considerably simpler analysis.

\paragraph{Implementation of Algorithm in Nearly-Linear Time and in MPC Model} We then proceed to discuss our techniques to improve the running time of the algorithm to nearly-linear. The algorithm contains two parts: the construction of the fractional solution $x$ and the rounding procedure. We discuss the two procedures separately. 

Constructing $x$ in nearly linear time poses several challenges. First, the construction of the subgraph $H$ requires us to identify edges $uv \in E$ with small $|N(u) \cap N(v)|$. Second, we can not explicitly assign $x$ values to all $-$edges. %the number of pairs to which we explicitly assign $x$ values should be $\tilde O(|E|)$.  %While there are only $O(m)$ $+$edges in the graph, there may be as many as $O(n^2)$ $-$edges to consider, necessitating a bound on the $-$edges that we assign $x$ values. 
Finally, to compute $x_{uv}$, we need to compute $|N_H(u) \cap N_H(v)|$. 

The first and third challenges can be addressed through sampling, with an $O(\log n)$ factor loss in the running time. To avoid considering too many $-$edges, we only consider $-$edges with length at most $1-\epsilon$. Consequently, we only need to consider $-$edges whose other endpoints share at least an $\epsilon$ fraction of neighbors with $u$. Given that each neighbor of $u$ in $H$ has degree similar to $u$, we demonstrate that there will be at most $O(d(u) / \epsilon)$ $-$edges to consider for each node $u$. Overall, there will be $\tilde O(m \cdot \poly(1/\epsilon))$ $-$edges for which we need to explicitly assign $x$ values.  Moreover, the nearly-linear time algorithm for the construction of $x$ can be naturally implemented in the MPC model, with $O(1)$ number of rounds.  \smallskip

Then we proceed to the rounding algorithm for $x$. We are explicitly given the $x$ values for $+$edges, and for nearly-linear number of $-$edges. For other $-$edges, their $x$ values are $1$. 
The KMZ algorithm works as follows: in each round, the algorithm selects a node $u$ as the cluster center and then includes a ball with some radius, meaning the algorithm includes all nodes $v$ such that $x_{uv} \leq \textmd{radius}$ into the cluster, removes the clustered nodes, and repeats the process on the remaining nodes. The rounding algorithm can be easily implemented in nearly-linear time using a priority queue structure.  This leads to a nearly-linear time simultaneous $O(1)$-approximation for correlation clustering for all monotone symmetric norms. 

The challenge to implement the algorithm in MPC model is the sequential nature of the algorithm. \cite{kalhan2019correlation} observes that in each round, if we select the nodes that maximize $L(u) = \sum_{x_{uv} \leq r}(r - x_{uv})$ as cluster center, we can effectively bound each node's algorithmic cost, where $r$ is the final ratio. However, choosing a node that maximizes some target inherently makes the process sequential. Our key observation is that, instead of selecting the node that maximizes $L(u)$, we can allow some approximation. This strategy still permits achieving a reasonable approximate ratio with an additional $1 + \epsilon$ overhead while allowing the selection of multiple nodes as cluster centers, thereby parallelizing the rounding process. In each round, there might be several candidate cluster centers with conflicts. To resolve these conflicts, we employ the classical Luby's algorithm~\cite{luby1985simple, chierichetti2014correlation} to find a maximal independent set, ensuring that none of the cluster centers have conflicts.

\paragraph{Organization} We give some 
preliminary remarks %preliminaries
in Section~\ref{sec:prelim}. In Section~\ref{sec:top-k}, we describe our simultaneous $O(1)$-approximation algorithm for correlation clustering for all top-$k$ norms. The reduction from any monotone symmetric norm to top-$k$ norms is deferred to Appendix~\ref{sec:all-norm}. Combining the results leads to a simultaneous $O(1)$-approximation algorithm for all monotone symmetric norms. Then in Section~\ref{sec:nearly-linear} and \ref{sec:MPC-solve-rounding}, we show how we can run the algorithm in the MPC model with nearly linear work. In particular, Section~\ref{sec:nearly-linear} and \ref{sec:MPC-solve-rounding} discuss how to solve the LP and round the LP solution in the MPC model, respectively.  The constant round MPC algorithm is described in Section~\ref{sec:MPC-solve-LP}. Theorem \ref{thm:sequentialAlgorithm}, \ref{thm:mainthmlp}, \ref{thm:roundingmaintheorem} and \ref{thm:constantMPCAlgorithmForCC} are proved in Section \ref{sec:top-k}, \ref{sec:MPC-solve-rounding}, \ref{sec:MPC-solve-rounding} and \ref{sec:MPC-solve-LP} respectively.

\section{Preliminaries}
\label{sec:prelim}
The input to correlation clustering is a complete graph whose edges are partitioned into $+$edges and $-$edges. We shall use the graph $G = (V, E)$ of $+$edges to denote an instance. Let $n = |V|$ and $m = |E|$. For simplicity, we assume $E$ contains all the $n$ self-loops $uu, u \in V$.
% We are given a complete signed graph but only represent the $+$edge graph $G = (V, E)$. According to this convention, 
So, $E$ is the set of $+$edges, and ${V \choose 2} \setminus E$ is the set of $-$edges.  The graph $G$ is fixed in most part of the paper. 
%${V \choose 2} \setminus E = \binom{V}{2} \setminus E$ is the set of $-$edges not explicitly represented. We assume that for each node $u \in V$, $uu \in E$ is a $+$edge. We use $m = |E|$ and $n = |V|$ to represent the number of $+$edges and nodes, respectively. 

For any graph $H = (V_H, E_H)$, and any vertex $v \in V_H$, let $N_H(u) = \{v \in V_H \mid uv \in E_H\}$. For any vertex $u \in V_H$ and any subset $S \subseteq V_H$, we define $d_H(u, S) = \sum_{v \in S} \mathbb{1}(uv \in E_H)$ as the number of edges between $u$ and $S$. We simply use $d_H(u)$ for $d_H(u, V_H)$. When the graph $H$ is the input graph $G$, we omit the subscript. So we use $N(u)$ for $N_G(u)$ and $d(u)$ for $d_G(u)$. Notice that $u \in N(u)$ and $d(u) = |N(u)| \geq 1$ for every $u \in U$.  For the input graph $G = (V, E)$ and any two vertex $u, v \in V$, we define $M_{uv} = \max\{d(u), d(v)\}$ as the maximum degree of $u$ and $v$ for simplicity, as this notion will be frequently used. For any two sets $X$ and $Y$, we denote their symmetric difference by $X \Delta Y$. Algorithms are parameterized by constants $ \beta(0<\beta<1), \lambda(0<\lambda<1)$ that will be determined later. \medskip

A norm on $n$-dimensional non-negative vectors is a function $f:\R_{\geq 0}^n \to \R_{\geq 0}$ satisfying $f(\alpha x) = \alpha f(x)$ for every real $\alpha \geq 0$ and $x \in \R_{\geq 0}^n$, and $f(x + y) \leq f(x) + f(y)$ for every $x, y \in \R_{\geq 0}^n$. We say a norm $f:\R_{\geq 0}^n \to \R_{\geq 0}$ is monotone if for every $x, y \in \R_{\geq 0}^n$ with $x \leq y$, we have $f(x) \leq f(y)$. We say $f$ is symmetric if $f(x) = f(x')$ for every $x, x' \in \R_{\geq 0}^n$ such that $x'$ is a permutation of $x$.
 %For a fixed integer $k\in [n]$, top-$k$ norm of the disagreement vector $\cost_\calC$ is equal to the sum of the $k$ largest coordinates of $\cost_\calC$.
 We say $f$ is the top-$k$ norm for an integer $k \in [n]$ if $f(x)$ is equal to the sum of the $k$ largest coordinates of $x$.
 Chakrabarty and Swamy \cite{chakrabarty2019approximation} showed that any monotone and symmetric norm can be written as the maximum of many ordered norms.  This leads to the following lemma which reduces the monotone-symmetric norms to top-$k$ norms. For completeness, we defer its proof to Appendix~\ref{sec:all-norm}. 
\begin{restatable}{lemma}{lemmatopktolpnorm}
\label{lem:Topk2all}
    For any integer $k\in [n]$, if an algorithm returns a single clustering $\mathcal{C_{\text{ALG}}}$ that is simultaneously a $\rho$-approximation for all top-$k$ norm objectives, then $\mathcal{C_{\text{ALG}}}$ is a $\rho$-approximation for any monotone and symmetric norm $f:\mathbb{R}^n_{\geq 0} \rightarrow \mathbb{R}_+$.
\end{restatable}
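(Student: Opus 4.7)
The plan is to invoke the Chakrabarty--Swamy structural theorem to decompose any monotone symmetric norm $f$ into a pointwise maximum of ordered norms, then rewrite each ordered norm as a non-negative combination of top-$k$ norms, at which point the hypothesized simultaneous top-$k$ guarantee applies term by term.

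Concretely, I would first recall that for any weight vector $w_1 \geq w_2 \geq \cdots \geq w_n \geq 0$, the associated ordered norm $f_w(x) := \sum_{i=1}^n w_i x_{(i)}$, where $x_{(1)} \geq \cdots \geq x_{(n)}$ denote the sorted coordinates of $x \in \R_{\geq 0}^n$, can be rewritten by Abel summation as $f_w(x) = \sum_{k=1}^n (w_k - w_{k+1})\, T_k(x)$ with the convention $w_{n+1} = 0$, where $T_k(x) := \sum_{i=1}^k x_{(i)}$ denotes the top-$k$ norm. The monotonicity of $w$ makes every coefficient $w_k - w_{k+1}$ non-negative, so $f_w$ is a conic combination of $T_1, \ldots, T_n$. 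The Chakrabarty--Swamy theorem then supplies, for any monotone symmetric norm $f$, a family $W$ of such monotone weight vectors with $f(x) = \max_{w \in W} f_w(x)$ for every $x \in \R_{\geq 0}^n$.

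The main step is then a short chain of inequalities. Fix a monotone symmetric norm $f$ and let $\calC^\star := \OPT_f$. For every $w \in W$ and every $k \in [n]$, the hypothesis on $\calC_{\text{ALG}}$ applied to the top-$k$ objective yields $T_k(\cost_{\calC_{\text{ALG}}}) \leq \rho \cdot T_k(\cost_{\OPT_{T_k}}) \leq \rho \cdot T_k(\cost_{\calC^\star})$, where the second inequality uses optimality of $\OPT_{T_k}$ against the feasible clustering $\calC^\star$. Multiplying by the non-negative weight $w_k - w_{k+1}$ and summing over $k$ gives $f_w(\cost_{\calC_{\text{ALG}}}) \leq \rho \cdot f_w(\cost_{\calC^\star}) \leq \rho \cdot f(\cost_{\calC^\star}) = \rho \cdot f(\cost_{\OPT_f})$, where the middle inequality uses $f_w \leq f$ pointwise from the max representation. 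Taking the maximum over $w \in W$ on the left hand side recovers $f(\cost_{\calC_{\text{ALG}}}) \leq \rho \cdot f(\cost_{\OPT_f})$, as required.

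The only non-trivial ingredient is the Chakrabarty--Swamy representation of monotone symmetric norms as a maximum of ordered norms, which I would cite from \cite{chakrabarty2019approximation} rather than reprove. The remaining manipulations are elementary; the subtle point to verify is that the two sides of $T_k(\cost_{\calC_{\text{ALG}}}) \leq \rho \cdot T_k(\cost_{\calC^\star})$ each use their own vector's decreasing rearrangement, but this is automatic from the definition of $T_k$ and requires no coupling between the two orderings. The key feature of the hypothesis being exploited is that a \emph{single} clustering $\calC_{\text{ALG}}$ is simultaneously $\rho$-approximate for every top-$k$, which lets the inequality be applied to all $n$ values of $k$ against the \emph{same} comparison clustering $\calC^\star$.
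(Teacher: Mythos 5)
Your proposal is correct and follows essentially the same route as the paper's proof: both invoke the Chakrabarty--Swamy representation of $f$ as a maximum of ordered norms, use Abel summation to express each ordered norm as a non-negative combination of top-$k$ norms, apply the simultaneous top-$k$ guarantee termwise, and compare against the fixed clustering $\OPT_f$ to close the chain of inequalities. The only cosmetic difference is that the paper carries the $\max_{w'}$ through the full chain of inequalities while you bound each $f_w$ individually by $\rho f(\cost_{\OPT_f})$ and take the max at the end; these are interchangeable.
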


For a fixed clustering $\mathcal{C}$, we already defined the disagreement vector of $\mathcal{C}$ as $\cost_\mathcal{C} \in \mathbb{Z}_{\ge 0}^n$, with $\cost_\mathcal{C}(u)$ for every $u \in V$ being the number of edges incident to $u$ that are in disagreement w.r.t $\mathcal{C}$.
%Given a correlation clustering instance $G=(V,E)$, an integer $k$, and a clustering $\mathcal{C}$, for a subset $S \subseteq V$, we denote the top-$k$ value by $\cost^k_\calC(G, S) = \max_{T \subseteq S, |T| = k} \sum_{u \in T} \cost_\calC(u)$. When $S = V$, we abbreviate $\cost^k_\calC(G, S)$ as $\cost^k_\calC(G)$ and omit $G$ when the graph is clear from context. 
Given an integer $k$, and a clustering $\mathcal{C}$, we denote the top-$k$ value by $\cost^k_\calC = \max_{T \subseteq V, |T| = k} \sum_{u \in T} \cost_\calC(u)$. %When $S = V$, we abbreviate $\cost^k_\calC(G, S)$ as $\cost^k_\calC(G)$ and omit $G$ when the graph is clear from context. 
Similarly, for %a graph $G = (V, E)$ and 
any fractional vector $(x_{uv})_{u, v \in V}$,  we denote $\cost_{x}(u) = \sum_{uv \in E} x_{uv} + \sum_{uv \in {V \choose 2} \setminus E}(1 - x_{uv})$ as the disagreement for $u$ with respect to $x$. %Given subset $S \subseteq V$, 
The top-$k$ value of $x$ is defined as $\cost^k_x  = \max_{T \subseteq V, |T| = k} \sum_{u \in T} \cost_x(u)$.
% When $S = V$, we will write $\cost^k_\calC(V)$ and $\cost^k_x(V)$ as $\cost^k_\calC$ and $\cost^k_x$ to represent the top-$k$ objective. \bigskip
%When $S = V$, we will write $\cost^k_x(G, S)$ as $\cost^k_x(G)$ and omit $G$ when the graph is clear from context. 
\medskip

We will use the following theorem from \cite{kalhan2019correlation}:
\begin{theorem}
    \label{thm:KMZ}
    Let $G = (V, E)$ be a correlation clustering instance, and $x \in [0, 1]^{{V \choose 2}}$ be a metric over $V$ with range $[0, 1]$.  %Let $y_u = \sum_{v \in N(u)} x_{uv} + \sum_{v \in V \setminus N(u)} (1 - x_{uv})$ for every $u \in U$. 
    There is a polynomial time algorithm that, given $G$ and $x$, outputs a clustering $\calC$ of $V$ such that $\cost_{\calC}(u) \leq 5\cdot \cost_x(u)$ for every $u \in V$. 
\end{theorem}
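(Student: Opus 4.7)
\textbf{Proof proposal for Theorem \ref{thm:KMZ}.}
My plan is to use a sequential pivot-based ball-growing algorithm parameterized by a radius $r \in (0, 1/2)$ to be tuned at the end. Maintain a set $V'$ of unclustered vertices, initialized to $V$. While $V' \neq \emptyset$, select a pivot $p \in V'$ according to a rule specified below, form the cluster $C_p := \{v \in V' : x_{pv} \leq r\}$, remove $C_p$ from $V'$, and repeat. All algorithmic disagreements incident to a vertex $u$ are then of two kinds: $+$edges $uv$ where $v$ lies in a different cluster, and $-$edges $uv$ where $v$ lies in the same cluster as $u$.

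To obtain the per-vertex guarantee (rather than only an aggregate one), the pivot selection matters. Following the spirit of \cite{kalhan2019correlation}, I would pick $p$ deterministically to maximize a density-style potential
\[
L(u) := \sum_{v \in V' : x_{uv} \leq r} (r - x_{uv}),
\]
so that $L(p) \geq L(u)$ for every $u \in C_p$ at the moment of its creation. This inequality is the structural tool that compensates for vertices $u$ lying near the boundary of the ball around $p$.

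For the analysis, fix $u \in C_p$ and split $\cost_\calC(u)$ into the $-$edge part (edges $uv$ with $v \in C_p$ and $uv \notin E$) and the $+$edge part (edges $uv$ with $v \notin C_p$ and $uv \in E$). Triangle inequality on the metric $x$ gives, for each such $-$edge, $x_{uv} \leq x_{up} + x_{pv} \leq 2r$, hence $1 - x_{uv} \geq 1 - 2r$, so each algorithmic $-$edge mistake is paid for by at least $1 - 2r$ units of LP disagreement, yielding a charge of $1/(1 - 2r)$. For each such $+$edge, $x_{uv} \geq x_{pv} - x_{up} > r - x_{up}$, so each algorithmic $+$edge mistake is paid for by at least $r - x_{up}$ units, yielding a charge of $1/(r - x_{up})$.

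The main obstacle is that the $+$edge charge $1/(r - x_{up})$ degenerates as $x_{up}$ approaches $r$, so the naive per-edge accounting fails for vertices near the boundary of the ball. This is where the pivot's maximality of $L$ is needed: since $L(u) \leq L(p)$ at the time of selection, the total cumulative $(r - x_{u\cdot})$ mass around $u$ within the unclustered set is controlled, and one can trade a large $x_{up}$ (small $r - x_{up}$) against a structural upper bound on how many $+$edges outside $C_p$ $u$ can have by an averaged/amortized argument. After balancing the two charges and tuning $r$ so that $1/(1-2r) = 5$, i.e.\ $r = 2/5$, the combined per-vertex bound becomes $\cost_\calC(u) \leq 5 \cdot \cost_x(u)$ for every $u \in C_p$; the easier sub-case $u = p$ plugs $x_{up} = 0$ into the same inequalities and only improves the ratio. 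The algorithm is clearly polynomial-time since each iteration removes at least one vertex and the pivot rule is computable from the $O(n^2)$ LP values.
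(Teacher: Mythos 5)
You are sketching a re-derivation of the KMZ theorem, which the paper only cites (as Theorem~\ref{thm:KMZ}) and never proves; the closest in-paper analogue is the MPC variant of the rounding analysis in Section~\ref{sec:MPC-solve-rounding}. Your high-level ingredients --- ball-growing with radius $2/5$, a pivot that maximizes a potential $L$, the $1/(1-2\cdot\tfrac{2}{5})=5$ bound for $-$edge disagreements, and the observation that the $+$edge charge degenerates near the boundary and must be amortized against $L$-maximality --- are all genuinely present in KMZ's argument, so the approach points in the right direction. But there are two substantive gaps.

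First, your $+$edge inequality $x_{uv} \ge x_{pv} - x_{up} > r - x_{up}$ presumes $x_{pv} > r$, which holds only for vertices $v$ that are still unclustered when $p$ is picked and are not absorbed into $C_p$. It says nothing about $+$edges $uv$ with $v$ removed in an \emph{earlier} iteration: such a $v$ lay inside the ball of some earlier pivot $q$, so $x_{qv} \le r$ and $x_{qp} > r$, which only gives $x_{pv} > 0$, not $x_{pv} > r$; thus $x_{pv}$, and hence $x_{uv}$, can be arbitrarily small. Your split of $\cost_\calC(u)$ into ``same-cluster $-$edges'' and ``different-cluster $+$edges'' silently folds these cross-iteration $+$edges in, but your bound does not cover them. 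KMZ (and the adaptation in Section~\ref{sec:MPC-solve-rounding}) handles this by a per-round profit accounting: one verifies $\profit_t(u)\ge0$ in \emph{every} round $t$, including those before $u$ is clustered, via a case analysis on whether a cluster center within distance $4/5$ of $u$ exists in round $t$ and on where it lies.

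Second, the heart of the proof is exactly the step you defer. Asserting that ``one can trade a large $x_{up}$ against a structural upper bound by an averaged/amortized argument'' and that ``the combined per-vertex bound becomes $5$'' is not a proof; the amortization is precisely where the constant $5$ is earned. In the KMZ accounting one sets $\profit(uv) = LP(uv) - r\cdot ALG(uv)$ with rate $r \approx 1/5$ and shows $\sum_v \profit(uv) \ge 0$ by splitting the sum over $v$ in a ball of radius $\approx 1/5$ around $p$ (yielding $P_{\mathrm{high}}(u) \ge L(p)$) and its complement (yielding $P_{\mathrm{low}}(u) \ge -L(u)$). Note also a parameter mismatch with your sketch: the clustering ball has radius $2/5$, but the $L$-potential and the profit rate both use radius $\approx 1/5$, and it is the reciprocal of that rate that produces the $5$. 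Setting $L$'s radius to $2/5$, as you do, changes the algorithm; the two inequalities $P_{\mathrm{high}}(u) \ge L(p)$ and $P_{\mathrm{low}}(u) \ge -L(u)$ would need to be re-verified from scratch, and it is not clear they still balance to $5$.
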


We will use the following well-known concentration inequalities.
\begin{theorem}[Chernoff Bound]
\label{thm:chernoff}
Let $X_1, X_2, ..., X_k$ be independent random variables taking values in $\{0, 1 \}$. Let $X = \sum_{i} X_i$ be the sum of these $k$ random variables. Then the following inequalities hold:
\begin{enumerate}[label=(\ref{thm:chernoff}\alph*)]
    \item For any $\epsilon \in (0, 1)$, if $E[X] \leq U$, then $\Pr[X \geq (1+ \epsilon) U] \leq \mathrm{exp}(-\epsilon^2U/3)$. \label{thm:chernoffgeq}
    \item For any $\epsilon \in (0, 1)$, if $E[X] \geq U$, then $\Pr[X \leq (1 - \epsilon) U] \leq  \mathrm{exp}(-\epsilon^2U/2)$. \label{thm:chernoffleq}
\end{enumerate}
\end{theorem}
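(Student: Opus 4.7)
The plan is to prove both bounds via the classical Chernoff method based on the moment generating function (MGF) together with Markov's inequality. Fix any $t > 0$. Applying Markov to the nonnegative variable $e^{tX}$, I would write
\[
\Pr[X \geq (1+\epsilon)U] = \Pr[e^{tX} \geq e^{t(1+\epsilon)U}] \leq e^{-t(1+\epsilon)U}\cdot E[e^{tX}].
\]
Independence then lets me factor $E[e^{tX}] = \prod_{i=1}^k E[e^{tX_i}]$, and for each Bernoulli $X_i$ with mean $p_i = E[X_i]$ one has $E[e^{tX_i}] = 1 + p_i(e^t-1) \leq \exp(p_i(e^t-1))$ via $1+x \leq e^x$. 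Multiplying, and using $\sum_i p_i = E[X] \leq U$ together with $e^t - 1 > 0$, yields $E[e^{tX}] \leq \exp(U(e^t-1))$.

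Substituting this back and optimizing in $t$ picks $t = \ln(1+\epsilon)$, leading to the classical form
\[
\Pr[X \geq (1+\epsilon)U] \leq \left(\frac{e^\epsilon}{(1+\epsilon)^{1+\epsilon}}\right)^U.
\]
To reach the stated $\exp(-\epsilon^2 U/3)$ bound, the one remaining task is the elementary inequality $(1+\epsilon)\ln(1+\epsilon) - \epsilon \geq \epsilon^2/3$ on $(0,1)$. This is verified by the Taylor expansion $(1+\epsilon)\ln(1+\epsilon) = \epsilon + \tfrac{\epsilon^2}{2} - \tfrac{\epsilon^3}{6} + \cdots$ and bounding the tail, or equivalently by checking that $g(\epsilon) := (1+\epsilon)\ln(1+\epsilon) - \epsilon - \epsilon^2/3$ satisfies $g(0) = 0$ and $g'(\epsilon) \geq 0$ on $(0,1)$.

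Part (b) proceeds by the symmetric argument with $t < 0$: I would write $\Pr[X \leq (1-\epsilon)U] = \Pr[e^{tX} \geq e^{t(1-\epsilon)U}]$ and apply Markov. The MGF bound $E[e^{tX_i}] \leq \exp(p_i(e^t - 1))$ still holds because $1+x \leq e^x$ is valid for every real $x$, and now using $E[X] \geq U$ together with $e^t - 1 < 0$ still gives $E[e^{tX}] \leq \exp(U(e^t-1))$. Optimizing at $t = \ln(1-\epsilon)$ and invoking the analogous calculus fact $(1-\epsilon)\ln(1-\epsilon) + \epsilon \geq \epsilon^2/2$ on $(0,1)$ delivers the $\exp(-\epsilon^2 U/2)$ bound. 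The only ``obstacle'' worth flagging is the careful verification of these two one-variable inequalities with the right constants $1/3$ and $1/2$; the probabilistic step itself is routine (Markov plus independence). Since this is a textbook statement, most papers simply cite it rather than reprove it.
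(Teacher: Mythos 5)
Your proof is correct, and your approach (Markov applied to $e^{tX}$, factoring by independence, the Bernoulli MGF bound $E[e^{tX_i}] \le \exp(p_i(e^t-1))$, optimizing $t$, and then the elementary calculus inequalities $(1+\epsilon)\ln(1+\epsilon)-\epsilon \ge \epsilon^2/3$ and $(1-\epsilon)\ln(1-\epsilon)+\epsilon \ge \epsilon^2/2$) is the standard one. You also correctly handle the small subtlety that the hypothesis is $E[X] \le U$ (resp.\ $\ge U$) rather than $E[X] = U$: the replacement of $E[X]$ by $U$ is legitimate precisely because of the sign of $e^t - 1$, which you flag. As you anticipated, the paper does not prove this theorem at all; it is stated in the Preliminaries as a well-known concentration inequality and used as a black box, so there is no paper proof to compare against.
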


% \section{The $\Delta$-Symmetric Case}
% \input{SymetricCases}

% \section{The general case}
% \input{GeneralCases}

% \section{The cost of edge deletion}
% \input{EdgeDeletion}

\section{Simultaneous $O(1)$-Approximation for Top-$k$ Norms}
\label{sec:top-k}
In this section, we describe our simultaneous $63.3$-approximation for correlation clustering for all top-$k$ norms. The algorithm described in this section runs in polynomial time. %; in Section~\ref{sec:nearly-linear} we show how to run the algorithm in nearly-linear time, with an additive factor of $\epsilon$ loss in the approximation ratio. 
It first constructs an LP solution to the top-$k$ linear program using a combinatorial procedure. Crucially, the construction does not depend on the value of $k$. We show that the solution has cost $12.66$ times the optimum cost under the top-$k$ norm, for any integer $k \geq 1$. Then we use the rounding algorithm of \cite{kalhan2019correlation} to round the LP solution to an integral one. As it gives a vertex-by-vertex 5-approximation guarantee, this leads to a $63.3$-approximation for the top-$k$ norm for any $k$.

The LP for minimizing the top-$k$ norm of the clustering is given in LP~\eqref{lp:metric-lp}.
\begin{equation}\label{lp:metric-lp}
    \min \qquad \cost^k_x \qquad \text{s.t.}
\end{equation}\vspace*{-30pt}

\noindent
\begin{minipage}[t]{0.4\textwidth}
    \begin{align}
        x_{uv}+x_{uw} \ge x_{vw},\  \forall u,v,w\in V \label{LPC:triangle}%\\[10pt]
    \end{align}
\end{minipage}\hfill
\begin{minipage}[t]{0.3\textwidth}
    \begin{align}
        x_{uv} \in [0,1],\  \forall u,v \in V \label{LPC:non-negative}
    \end{align}
\end{minipage}\hfill
\begin{minipage}[t]{0.25\textwidth}
    \begin{align}
        x_{uu} = 0, \ \forall u \in V \label{LPC:self}
    \end{align}
\end{minipage} \bigskip

In the correspondent integer program, $x_{uv}$ for every $u, v \in V$ indicates if $uv$ is separated or not. %, and $\overline{x}_{uv}$ indicates if $uv$ is in disagreement. 
We view $uv$ as an unordered pair and thus $x_{uv}$ and $x_{vu}$ are the same variable. %$y_u$ for any $u \in V$ denotes the number of incorrect edges incident to $u$. 
So $(x_{uv})_{u, v \in V}$ is a metric with distances in $\{0, 1\}$, which is relaxed to $[0, 1]$ in the linear program. This is captured by constraints \eqref{LPC:triangle}, \eqref{LPC:non-negative} and \eqref{LPC:self}.
Notice that $\cost_x(u)$ for any $u \in V$ is a linear function of the $x$ variables.  The top-$k$ norm of the fractional clustering is defined by $\cost^k_x = \max_{S \subseteq V: |S| = k} \sum_{u \in S} \cost_x(u)$. This could be captured by introducing a variable $z$ and constraints $z \geq \sum_{u \in S}\cost_x(u)$ for any $S \subseteq V$ of size $k$, and setting $z$ to be the objective to minimize. For simplicity, we use the form as described. Despite having an exponential number of constraints, the LP can be solved efficiently as there is a simple separation oracle.  Moreover, we use a combinatorial algorithm to construct a solution $x$, and thus the algorithm does not solve the LP. 

\subsection{Algorithm}
The algorithm for constructing the LP solution $x$ is given in Algorithm~\ref{alg:pre-clusteringlp}. It depends on the parameter $\beta \in (0, 1)$, whose value will be specified later. During the process, we construct a subgraph $H$ by removing any edge $uv \in E$ where $u$ and $v$ have significantly different neighbors. We then set $x_{uv}$ as $1 - \frac{|N_{H}(u) \cap N_{H}(v)|}{M_{uv}}$ if $u\neq v$ and $x_{uv} = 0$ otherwise. Recall that $M_{uv} = \max\{d(u), d(v)\}$ is the maximum degree for any nodes $u$ and $v$ in graph $G$. Intuitively, we treat $+$edges as indicators of whether two nodes belong to the same cluster. The first step is to remove edges that should not be in the same cluster. The second step ensures that the more common neighbors two nodes have, the closer their distance should be. 

\begin{algorithm}[t]
    \caption{Construction of norm-oblivious solution $x$ to metric LP. \\
     \textbf{Input}: Graph $G = (V, E)$\\
\textbf{Output}: $(x_{uv})_{u, v \in V} $ %for each edge $uv \in E$. %A partition of $V$, $\mathcal{S}$.
}
\label{alg:pre-clusteringlp}
    \begin{algorithmic}[1]
    \Function {\textsc{AllNormCC}}{$G = (V, E)$}
    \State let $E_H = \{uv \in E: |N(u) \Delta N(v)| \leq \beta \cdot M_{uv}\}$ and $H = (V, E_H)$
 \label{alg:pre-clusteringlpfirst}
    \State let $x_{uv} \leftarrow 1 - \frac{|N_{H}(u) \cap N_{H}(v)|}{\mathrm{\max}(d(u), d(v))}$ for every $uv \in {V \choose 2}$ and $x_{uu} = 0$ for every $u \in V$
    \EndFunction
\end{algorithmic}    
\end{algorithm}

In the remaining part of this section, we will show 
\begin{lemma}
\label{lemma:boundratiomain}
%For any given graph $G = (V, E)$, 
Let $k$ be any integer in $[n]$. 
Algorithm \ref{alg:pre-clusteringlp} outputs a feasible solution $(x_{uv})_{u, v \in V}$ for ~\eqref{lp:metric-lp} such that for any $k$, % and any $U \subseteq V$ with $|U| = k$, 
we have
\begin{align*}
    \cost^k_x \leq 12.66 \cdot %\cost^k_{\mathcal{C}_{\text{OPTtop-k}}}
    \opt^k, 
\end{align*}
where $\opt^k$ is the cost of the optimum solution under the top-$k$ norm. 
%where $\mathcal{C}_{\text{OPTtop-k}}$ be the optimal correlation clustering solution when using the top-$k$ norm objective.
\end{lemma}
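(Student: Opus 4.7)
The plan breaks into two stages: verifying feasibility of $x$ for LP~\eqref{lp:metric-lp}, and proving the top-$k$ bound.

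For feasibility, $x_{uu}=0$ holds by definition, and $x_{uv}\in[0,1]$ follows because $|N_H(u)\cap N_H(v)|\le\min(d_H(u),d_H(v))\le\min(d(u),d(v))\le M_{uv}$. For the triangle inequality $x_{vw}\le x_{uv}+x_{uw}$, I would rearrange it as
\[
\frac{|N_H(u)\cap N_H(v)|}{M_{uv}}+\frac{|N_H(u)\cap N_H(w)|}{M_{uw}}\le 1+\frac{|N_H(v)\cap N_H(w)|}{M_{vw}},
\]
and perform a case analysis on the ordering of $d(u),d(v),d(w)$. The main tool is the set identity $|A\cap B|+|A\cap C|=|A\cap(B\cup C)|+|A\cap B\cap C|$, applied with $A=N_H(u)$, $B=N_H(v)$, $C=N_H(w)$, combined with the observation that $M_{\cdot\cdot}$ is always the larger of the two relevant degrees. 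In each case (the ordering $d(u)\ge d(w)\ge d(v)$ being the easiest via dividing by $d(u)$, the other cases handled by first rewriting $|N_H(u)\cap N_H(w)|-|N_H(v)\cap N_H(w)|\le|N_H(u)\setminus N_H(v)|$), the inequality reduces to $|N_H(u)|\le d(u)$.

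For the approximation bound, I fix an optimal top-$k$ clustering $\OPT$ with clusters $\{P\}$, and for each $u$ decompose $\cost_x(u)=\sum_{v\ne u,uv\in E}x_{uv}+\sum_{uv\notin E}(1-x_{uv})$ according to whether $v$ is in the same $\OPT$-cluster as $u$. Trivially, the parts corresponding to $\OPT$-disagreements contribute at most $\cost_{\OPT}(u)$. The technical core is bounding the two ``$\OPT$-agreement'' parts, namely $\sum_{v\in P_u,uv\in E,v\ne u}x_{uv}$ (the $+$edge-agreement term) and $\sum_{v\notin P_u,uv\notin E}(1-x_{uv})$ (the $-$edge-agreement term). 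Here I would rely on the key structural inequality: if $u,v$ lie in the same $\OPT$-cluster $P$, then $|N(u)\triangle N(v)|\le\cost_{\OPT}(u)+\cost_{\OPT}(v)$, obtained by noting that every element of $N(u)\setminus N(v)$ inside $P$ is a $-$edge disagreement for $v$ and every element outside $P$ is a $+$edge disagreement for $u$, and symmetrically. Combining this with $M_{uv}-|N(u)\cap N(v)|\le|N(u)\triangle N(v)|$ and accounting for the gap $|N(u)\cap N(v)|-|N_H(u)\cap N_H(v)|\le r(u)+r(v)$ (where $r(\cdot)$ counts edges removed in the step producing $H$) controls each term $x_{uv}$ for a $+$edge agreement by a combination of $\cost_{\OPT}(u)/d(u)$, $\cost_{\OPT}(v)/d(u)$, and removed-edge slack.

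To convert vertex-wise bounds into a top-$k$ bound (which is the step that actually requires care, since per-vertex ratios need not be $O(1)$), the plan is to build a bilateral charge $\phi(u,w)\ge 0$ satisfying $\cost_x(u)\le\sum_w\phi(u,w)\cost_{\OPT}(w)$, $\sum_w\phi(u,w)\le 12.66$ for each $u$, and $\sum_u\phi(u,w)\le 12.66$ for each $w$. For any $S$ with $|S|=k$, setting $\psi_w=\sum_{u\in S}\phi(u,w)$ yields $\psi_w\le 12.66$ and $\sum_w\psi_w\le 12.66\,k$, and then the elementary ``top-$k$ averaging'' fact $\opt^{ak}\le a\,\opt^k$ for $a\ge 1$ (which holds because $\opt^k/k$ is nonincreasing) gives $\sum_{u\in S}\cost_x(u)\le\sum_w\psi_w\cost_{\OPT}(w)\le 12.66\,\opt^k$. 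Since the charge scheme is independent of $k$, the same $x$ is simultaneously $12.66$-approximate for all top-$k$ norms. The main obstacle is constructing $\phi$ with both row and column sums bounded, especially for the $-$edge-agreement term: a non-edge $uv$ across two $\OPT$-clusters with many common $H$-neighbors pays LP cost but is not directly a $\OPT$-disagreement for either endpoint. The plan is to charge this cost to the common $H$-neighbors $w\in N_H(u)\cap N_H(v)$: since $uw,vw\in E_H$ forces $|N(u)\triangle N(w)|,|N(v)\triangle N(w)|\le\beta M$, and $u,v$ are in different $\OPT$-clusters, at least one of $u,v,w$ must suffer significant $\cost_{\OPT}$; splitting the charge across the three, and optimizing $\beta$, yields the constant $12.66$ and simultaneously bounds both directions of~$\phi$.
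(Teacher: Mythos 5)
Your plan follows the paper's argument in all essentials: the same feasibility check, the same structural bound $|N(u)\,\Delta\,N(v)|\le\cost_{\OPT}(u)+\cost_{\OPT}(v)$ for same-cluster pairs, the same charging of the $-$edge ``agreement'' term to common $H$-neighbors, and the same conversion to a top-$k$ bound via an $\ell_\infty$- and $\ell_1$-bounded charge vector expressed as a convex combination of indicators of size-$\le k$ sets (your bilateral $\phi$ with bounded row and column sums is a mild restatement of the paper's Claim~\ref{claim:using-coefficients}). Two details are slightly misstated but do not change the route: the top-$k$ extraction rests on the convex-combination (fractional-knapsack) fact, not on $\opt^{ak}\le a\,\opt^k$; and for the $-$edge triple $(u,v,w)$ the operative observation is simply that one of $uw,vw$ must be a $+$edge disagreement (the paper charges to $w$ or to $u$ depending on whether $C(w)=C(u)$, with the $E_H$-degree similarity supplying the $1/M_{uv}$ normalization), rather than that some vertex suffers ``significant'' cost.
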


\begin{proof}[Proof of Theorem \ref{thm:sequentialAlgorithm}]
    Once we obtain Lemma \ref{lemma:boundratiomain}, \cite{kalhan2019correlation} provides a rounding algorithm for any feasible solution. Assume the final clustering is $\mathcal{C}_{\text{KMZ}}$. \cite{kalhan2019correlation} ensures that for any node $u$, we have $\cost_{\mathcal{C}_{\text{KMZ}}}(u) \leq 5 \cost_x(u)$, meaning the disagreement for $\mathcal{C}_{\text{KMZ}}$ is at most 5 times the disagreement in the LP solution. We can apply the KMZ rounding algorithm to $x_{uv}$ as output by Lemma \ref{lemma:boundratiomain}. For any integer $k \in [n]$, let $U'$ be the set of $k$ vertices with the largest disagreement values with respect to $\mathcal{C}_{\text{KMZ}}$. Then we have:
    \begin{align*}
        \cost^k_{\mathcal{C}_{\text{KMZ}}} \le 5 \cdot \sum_{u \in U'} \cost_x(u) \le 5 \cdot \cost^k_x \leq 5 \cdot 12.66\cdot \opt^k =  63.3 \cdot \opt^k.
    \end{align*}
    By Lemma \ref{lem:Topk2all}, we know that $\mathcal{C}_{\text{KMZ}}$ is a simultaneous 63.3-approximate clustering for all monotone symmetric norms.
\end{proof}

We will first show that our $x$ is feasible in Section \ref{sec:xismetric}, then we will bound the approximate ratio in Section \ref{sec:boundtopknormcost}.

\subsection{The validity of $x$ to LP~\eqref{lp:metric-lp}}
\label{sec:xismetric}
To show that $x$ is a valid solution to LP~\eqref{lp:metric-lp}, it suffices to prove that it is a metric over $V$ with range $[0, 1]$. %, as we follow the definitions of $\bar x_{uv}$'s and $y_u$'s. 
Moreover, \eqref{LPC:non-negative} and \eqref{LPC:self} hold trivially. Therefore, it remains to show that the triangle inequality (i.e., constraint \eqref{LPC:triangle}) is satisfied: 
\begin{lemma}
\label{lemma:metric}
For any $u,v,w \in V$, we have 
    $x_{uv} + x_{uw} \geq x_{vw}$.
\end{lemma}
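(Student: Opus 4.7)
My plan is to substitute the definition $x_{uv} = 1 - \frac{|N_H(u) \cap N_H(v)|}{\max(d(u), d(v))}$ and rearrange. Writing $a = |N_H(u) \cap N_H(v)|$, $b = |N_H(u) \cap N_H(w)|$, $c = |N_H(v) \cap N_H(w)|$ and $p = d(u)$, $q = d(v)$, $r = d(w)$, the triangle inequality $x_{uv} + x_{uw} \geq x_{vw}$ becomes
\[
\frac{a}{\max(p,q)} + \frac{b}{\max(p,r)} \;\leq\; 1 + \frac{c}{\max(q,r)}. \tag{$\star$}
\]
Cases where two of $u,v,w$ coincide are trivial from $x_{uu}=0$, so I assume all three are distinct.

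The combinatorial heart of the argument is the inclusion-exclusion bound
\[
a + b \;=\; |N_H(u)\cap N_H(v)| + |N_H(u)\cap N_H(w)| \;=\; |N_H(u)\cap (N_H(v)\cup N_H(w))| + |N_H(u)\cap N_H(v)\cap N_H(w)| \;\leq\; d_H(u) + c \;\leq\; p + c,
\]
which rearranges to $a - c \leq p - b$. I will also use the trivial bounds $a \leq \min(p,q)$, $b \leq \min(p,r)$, $c \leq \min(q,r)$.

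Since $v$ and $w$ play symmetric roles in $(\star)$, I may assume $q \geq r$, so $\max(q,r) = q$. I then split into three subcases according to where $p$ falls in the ordering. When $p \geq q \geq r$, both denominators on the left equal $p$, and the key inequality gives $(a+b)/p \leq 1 + c/p \leq 1 + c/q$ directly. In the remaining two cases ($q \geq p \geq r$ and $q \geq r \geq p$), clearing denominators reduces $(\star)$ to an inequality of the shape $(a-c)D + bE \leq DE$ for appropriate $D,E \in \{p,q,r\}$ depending on the ordering. If $a \leq c$ this is immediate from $b \leq D$ (where $D$ is either $p$ or $r$, both of which bound $b$); if $a > c$, I substitute $a - c \leq p - b$ and the residual inequality rearranges into a product of two quantities of determined sign, namely either $(x-y)(x-b) \leq 0$ or $b(y-z) \leq z(y-x)$ with the correct orderings forced by the case assumption.

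The main obstacle is the middle case $q \geq p \geq r$: here the two denominators on the left disagree, and neither term is individually small. What saves the argument is that the inclusion-exclusion slack $p - b$ is exactly the right size to absorb the mismatch between the denominators $p$ and $q$, so the calculation reduces to a one-line algebraic identity. Beyond bookkeeping, I do not anticipate any real difficulty.
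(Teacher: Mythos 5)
Your proof is correct and rests on essentially the same key combinatorial fact as the paper's: your inclusion--exclusion bound $a + b \le d_H(u) + c$ is exactly the paper's inequality $|N_H(u)\cap N_H(v)| - |N_H(v)\cap N_H(w)| \le |N_H(u)\setminus N_H(w)|$ restated (since $|N_H(u)\setminus N_H(w)| = d_H(u) - b$), and both proofs make the same WLOG reduction $d(v)\ge d(w)$. The only difference is bookkeeping: the paper finishes with a single chain of denominator swaps (replacing $d(v)$ by $M_{uv}$, then $M_{uv}, M_{uw}$ by $d(u)$) and thereby avoids case analysis, whereas you clear denominators and split on the ordering of $d(u),d(v),d(w)$; both routes are valid and of comparable length.
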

\begin{proof}
    We can assume $u, v, w$ are distinct since otherwise the inequality holds trivially. We assume that $d(v) \geq d(w)$ wlog. 
    {%\color{blue}
        \begin{align*}
            x_{uv} + x_{uw} - x_{vw} &\quad=\quad \Big(1-\frac{|N_{H}(u)\cap N_{H}(v)|}{M_{uv}}\Big) + \Big(1-\frac{|N_{H}(u)\cap N_{H}(w)|}{M_{uw}}\Big) - \Big(1-\frac{|N_{H}(v)\cap N_{H}(w)|}{d(v)}\Big) \\
            &\quad=\quad 1 + \frac{|N_{H}(v)\cap N_{H}(w)|}{d(v)} - \frac{|N_{H}(u)\cap N_{H}(v)|}{M_{uv}} - \frac{|N_{H}(u)\cap N_{H}(w)|}{M_{uw}} \\
            &\quad\geq\quad 1 + \frac{|N_{H}(v)\cap N_{H}(w)|}{M_{uv}} - \frac{|N_{H}(u)\cap N_{H}(v)|}{M_{uv}} - \frac{|N_{H}(u)\cap N_{H}(w)|}{M_{uw}} \\ 
            &\quad\geq\quad 1 - \frac{|N_H(u) \setminus N_H(w)|}{M_{uv}} - \frac{|N_{H}(u)\cap N_{H}(w)|}{M_{uw}}\\
            &\quad\geq\quad 1 - \frac{|N_H(u) \setminus N_H(w)| + |N_{H}(u)\cap N_{H}(w)|}{d(u)} \quad = \quad 1 - \frac{|N_H(u)|}{d(u)} \quad\geq \quad 0.
        \end{align*}
        The first inequality used that $d(v) \leq M_{uv}$, the second one follows from $|N_H(u) \cap N_H(v)| - |N_H(v) \cap N_H(w)| \leq |(N_H(u) \cap N_H(v)) \setminus (N_H(v) \cap N_H(w))| = |(N_H(v) \cap (N_H(u) \setminus N_H(w))| \leq |N_H(u) \setminus N_H(w)|$, and the third one used $d(u) \leq M_{uv}$ and $d(u) \leq M_{uw}$. 
    }
\end{proof}
%\snoteinline{It is better to use a sequence of inequalities/equalities in the proof.}

\subsection{Bounding the Top-$k$ Norm Cost of $x$}
\label{sec:boundtopknormcost}
In this section, we compare the top-$k$ norm cost of $x$ to $\opt^k$. 

\paragraph{Notations} 
We fix the integer $k \in [n]$. Let $\mathcal{C}$ be the clustering that minimizes the top-$k$ norm of disagreement vector, but our analysis works for any clustering.  For every $v \in V$, let $C(v)$ be the cluster in $\calC$ that contains $v$. 

For every $u \in V$, let $\cost^+_\calC(u), \cost^-_\calC(u)$ and $\cost_\calC(u)$ respectively be the number of $+$edges, $-$edges and edges incident to $u$ that are in disagreement in the clustering $\calC$. Recall $\cost^{k}_\calC = \max_{S \subseteq V:|S| = k} \sum_{u \in S}\cost_\calC(u)$ is the top-$k$ norm cost of the clustering $\calC$, thus we have $\opt^k = \cost^k_\calC$.

Let $U$ be the set of $k$ vertices $u$ with the largest $\cost_x(u)$ values. So, $\cost^k_x = \sum_{u \in U} \cost_x(u)$. In order to provide a clear demonstration, we divide all of the edges in $G$ into five parts. First, we separate out the parts that are easily constrained by $\cost^k_\calC$. Let $\varphi^+_1$ be the set of $+$edges that are cut in $\calC$, and $\varphi^-_1$ be the set of $-$edges that are not cut in $\calC$. For the remaining $+$edges in $E$ that are not cut in $\calC$, it is necessary to utilize the properties of $+$edges in $E_H$. To this end, let $\varphi^+_2$ be the set of $+$edges in $E\setminus E_H$ that are not cut in $\calC$, and $\varphi^+_3$ be the set of $+$edges in $E_H$ that are not cut in $\calC$. Finally, we define $\varphi^-_2$ as the set of $-$edges that are cut in $\calC$. Formally, we set 
\begin{align*}
    \varphi^+_1 &:= \{uv \mid uv \in E, C(u) \not= C(v) \},\\
    \varphi^+_2 &:= \{uv \mid uv \in E \setminus E_H, C(u) = C(v) \},   \\
    \varphi^+_3 &:= \{uv \mid uv \in E_H, C(u) = C(v) \}, \\
    \varphi^-_1 &:= \{uv \mid uv \in {V \choose 2} \setminus E, C(u) = C(v)\},\\
    \text{and} \quad \varphi^-_2 &:= \{uv \mid uv \in {V \choose 2} \setminus E, C(u) \not= C(v)\}.
\end{align*} %\snote{We need to change the names later. It is weird to use $\{1,3,4,5,6\}$.}

 Notice that $\varphi^+_3$ contains all the self-loops. For every $(i,j) \in \{(1,+),(2,+),(3,+),(1,-),(2,-)\}$ and $u \in V$, we let $\varphi^j_i(u)$ be the set of pairs in $\varphi^j_i$ incident to $u$. We use $\phi^j_i(u) = \{v: uv \in \varphi^j_i(u)\}$ to denote the end-vertices of the edges in $\varphi^j_i(u)$ other than $u$; so $|\phi^j_i(u)| = |\varphi^j_i(u)|$.  We let $y^j_i(u)$ denote the cost of edges in $\varphi^j_i(u)$ in the solution $x$.  For every $(i,j) \in \{(1,+),(2,+),(3,+),(1,-),(2,-)\}$,  we define $f^j_i = \sum_{u \in U}y^j_i(u)$. Therefore, the top-$k$ norm cost of $x$ is $f^+_1 + f^+_2 + f^+_3 + f^-_1 + f^-_2$. 
%  As in the proof of Lemma~\ref{lemma:metric}, $M_{uv} = \max\{d(u), d(v)\}$ for every $u, v \in V$.
 \medskip

With the notations defined, we can proceed to the analysis. Prior to this, several propositions are presented, which will prove useful in the following analysis. We start with the property of edges in $E \setminus E_H$.

\begin{lemma}
    \label{lemma:varphi1}
    For every $uv \in \varphi^+_2$, we have $\cost_\calC(u) + \cost_\calC(v) \geq \beta \cdot M_{uv}$.
\end{lemma}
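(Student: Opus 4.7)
The plan is to unpack the two conditions that define $\varphi^+_2$ and translate the symmetric difference $N(u)\Delta N(v)$ into disagreement edges incident to $u$ or $v$. Since $uv\in \varphi^+_2$ means $uv \in E\setminus E_H$, the definition of $E_H$ in line~2 of Algorithm~\ref{alg:pre-clusteringlp} gives $|N(u)\Delta N(v)| > \beta\cdot M_{uv}$. The other condition, $C(u)=C(v)$, will be used to show that every element of $N(u)\Delta N(v)$ witnesses a disagreement incident to either $u$ or $v$.

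Concretely, I would fix any $w \in N(u)\Delta N(v)$. Since $uv \in E$ we have $u,v \in N(u)\cap N(v)$, so $w \neq u,v$. By symmetry assume $w\in N(u)\setminus N(v)$, so $uw \in E$ (a $+$edge) and $vw \in {V \choose 2}\setminus E$ (a $-$edge). Now split on whether $C(w) = C(u) = C(v)$ or not. In the first case, $vw$ is a $-$edge inside a cluster, so it is in disagreement and contributes $1$ to $\cost_\calC(v)$. In the second case, $uw$ is a $+$edge crossing clusters, so it contributes $1$ to $\cost_\calC(u)$. Either way, $w$ forces at least one unit into $\cost_\calC(u)+\cost_\calC(v)$, and the unit is charged to an edge ($uw$ or $vw$) that is uniquely identified by $w$, so different $w$'s cannot be double-counted.

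Summing over all $w \in N(u)\Delta N(v)$ then gives
\[
\cost_\calC(u)+\cost_\calC(v) \;\ge\; |N(u)\Delta N(v)| \;>\; \beta \cdot M_{uv},
\]
which is the desired bound. The only slightly delicate point is the no-double-counting observation: one has to notice that the $w$'s index distinct edges (the edge $uw$ for $w \in N(u)\setminus N(v)$ in the cut case, and $vw$ for $w\in N(u)\setminus N(v)$ in the inside-cluster case, with the symmetric statement for $w \in N(v)\setminus N(u)$), so summing the ``$+1$'' contributions is legitimate. I don't expect a real obstacle here; the argument is just a careful case analysis driven by the definitions.
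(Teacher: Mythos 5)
Your argument is correct and matches the paper's proof: both observe that because $C(u)=C(v)$, every $w\in N(u)\Delta N(v)$ witnesses a disagreement incident to $u$ or $v$ (either a $+$edge cut or a $-$edge inside the cluster), and combine this with $|N(u)\Delta N(v)| > \beta\cdot M_{uv}$ from the definition of $E_H$. The paper states this in one line; you have just spelled out the case analysis and the no-double-counting observation that the paper leaves implicit.
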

\begin{proof}
    Since $C(u) = C(v)$, there are at least $|N(u) \Delta N(v)|$ disagreements incident to $u$ and $v$. For every $uv \in \varphi^+_2$, we have $\cost_\calC(u) + \cost_\calC(v) \geq |N(u) \Delta N(v)| \geq \beta \cdot M_{uv}.$
\end{proof}

Then, we show that edges in $E_H$ have similar degrees.

\begin{lemma}
    \label{lemma:Hdegreebound}
    For every $uv \in E_H$, we have $(1-\beta)\cdot d(v) \le d(u) \le \frac{1}{1-\beta}\cdot d(v)$.
\end{lemma}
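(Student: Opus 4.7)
The plan is to unpack the defining condition of $E_H$ and pair it with the elementary bound $|N(u) \Delta N(v)| \geq \bigl|d(u) - d(v)\bigr|$, which is immediate from inclusion–exclusion since $|N(u) \cap N(v)| \leq \min\{d(u), d(v)\}$.

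Concretely, I would fix $uv \in E_H$ and, by symmetry of the claim, reduce to the case $d(u) \leq d(v)$, so that $M_{uv} = d(v)$. The membership $uv \in E_H$ then reads $|N(u) \Delta N(v)| \leq \beta \cdot d(v)$. Expanding $|N(u) \Delta N(v)| = d(u) + d(v) - 2|N(u) \cap N(v)|$ and using $|N(u) \cap N(v)| \leq d(u)$ gives $|N(u) \Delta N(v)| \geq d(v) - d(u)$. Chaining these two inequalities yields $d(v) - d(u) \leq \beta \cdot d(v)$, i.e.\ $d(u) \geq (1-\beta) \cdot d(v)$, which is one of the two desired bounds; the other follows by a symmetric argument (i.e.\ applying the same reasoning after swapping the roles of $u$ and $v$, or just observing that $d(u) \leq d(v) \leq \frac{1}{1-\beta} d(v)$ holds trivially in the case we chose, and handling the opposite case analogously).

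There is essentially no obstacle here: the only mild subtlety is the convention that $u \in N(u)$, but since this adds $u$ and $v$ to both $N(u)$ and $N(v)$ whenever $uv \in E$, these common vertices cancel in the symmetric difference and do not affect the bound $|N(u) \Delta N(v)| \geq d(v) - d(u)$. The short proof will therefore be a two-line calculation followed by a WLOG remark for symmetry.
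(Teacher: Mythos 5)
Your proof is correct and follows essentially the same route as the paper: both proofs reduce the hypothesis $|N(u)\,\Delta\, N(v)| \leq \beta M_{uv}$ to a degree inequality via $|N(u)\cap N(v)| \leq \min\{d(u),d(v)\}$, the paper passing through $|N(u)\cup N(v)| \geq \max\{d(u),d(v)\}$ while you pass through $|N(u)\,\Delta\,N(v)| \geq |d(u)-d(v)|$, which is the same algebraic step written differently. The paper avoids the WLOG by deriving $\min\{d(u),d(v)\} \geq (1-\beta)\max\{d(u),d(v)\}$ directly, giving both bounds at once, but this is purely a matter of presentation.
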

\begin{proof}
    For every $uv \in E_H$, we have $|N(u) \Delta N(v)| \le \beta \cdot M_{uv}$. Therefore,
    \begin{align*}
        \min\{d(u),d(v)\} \ge |N(u) \cap N(v)| = |N(u) \cup N(v)| - |N(u) \Delta N(v)| \ge (1-\beta) \cdot \max\{d(u),d(v)\},
    \end{align*}
    which gives us $(1-\beta)\cdot d(v) \le d(u) \le \frac{1}{1-\beta}\cdot d(v)$ since $\beta \in (0,1)$.
\end{proof}

As we are about to analyze the top-$k$ norm objective, we can bound the cost in the solution $x$ using coefficients of $\cost_\calC$. This key observation can be formally demonstrated as follows:

\begin{claim}
    \label{claim:using-coefficients}
    Let $c \in \R_{\geq 0}^{n}$ be a vector and $\alpha > 0$ satisfying that $|c|_\infty \leq \alpha$ and $|c|_1 \leq \alpha k$.  Then we have 
    \begin{align*}
        \sum_{r \in V}c(r) \cdot \cost_\calC(r) \leq \alpha \cdot \cost^k_\calC.
    \end{align*}
\end{claim}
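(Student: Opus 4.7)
The plan is to normalize by $\alpha$: setting $c'(r) := c(r)/\alpha$ for every $r \in V$, the two hypotheses become $|c'|_\infty \leq 1$ and $|c'|_1 \leq k$, and the target inequality reduces to showing $\sum_{r \in V} c'(r)\,\cost_\calC(r) \leq \cost^k_\calC$. In other words, over the ``$k$-truncated cube'' $\{c' \geq 0 : |c'|_\infty \leq 1,\ |c'|_1 \leq k\}$, the linear functional $c' \mapsto \langle c',\cost_\calC\rangle$ is maximized by the $0/1$ indicator of the top-$k$ coordinates of $\cost_\calC$. This is an elementary extreme-point statement, and the rest of the proof is just to verify it directly.

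I would start by relabeling the vertices $r_1,\dots,r_n$ so that $a_i := \cost_\calC(r_i)$ satisfies $a_1 \geq a_2 \geq \dots \geq a_n$; by definition, $\cost^k_\calC = a_1 + \dots + a_k$. The rearrangement inequality, combined with the fact that both $c'$ and $\cost_\calC$ are nonnegative, lets me assume without loss of generality that $c^*_i := c'(r_i)$ is also non-increasing in $i$ (any other ordering only decreases the left-hand side while preserving both the $L_\infty$ and $L_1$ constraints). The task thus reduces to proving, for non-increasing $c^*_1,\dots,c^*_n \in [0,1]$ with $\sum_i c^*_i \leq k$ and non-increasing nonnegative $a_1,\dots,a_n$, the inequality
\[
\sum_{i=1}^n c^*_i\, a_i \;\leq\; \sum_{i=1}^k a_i.
\]

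The final step is a short exchange argument. I would rewrite the difference of the two sides as $\sum_{i \leq k}(1 - c^*_i)\,a_i \;-\; \sum_{i > k} c^*_i\,a_i$, then use $a_i \geq a_k$ on the first sum and $a_i \leq a_k$ on the second to lower-bound the difference by $a_k\bigl(\sum_{i \leq k}(1 - c^*_i) - \sum_{i > k} c^*_i\bigr)$. The bracketed quantity is nonnegative precisely because $\sum_i c^*_i \leq k$, so the difference is nonnegative and the claim follows after multiplying back by $\alpha$. I do not anticipate any real obstacle; the only subtlety worth flagging is the rearrangement step, which relies on the non-negativity of both $c'$ and $\cost_\calC$, and a trivial boundary case when $a_k = 0$ is handled transparently by the same bound.
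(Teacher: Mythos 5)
Your proof is correct. The underlying fact you are proving—that the linear functional $c' \mapsto \langle c',\cost_\calC\rangle$ over the polytope $\{c' \ge 0 : |c'|_\infty \le 1,\ |c'|_1 \le k\}$ is maximized at the indicator of the top-$k$ coordinates—is exactly the same fact the paper uses, but the verification mechanism differs. The paper simply cites the well-known characterization that every point of this polytope is a convex combination of $\{0,1\}$-vectors with at most $k$ ones, then averages the trivial inequality over the extreme points; this is shorter but outsources the extreme-point structure to folklore. You instead verify the extremal statement directly: sort so that $a_i = \cost_\calC(r_i)$ is non-increasing, invoke the rearrangement inequality to assume $c^*$ is also non-increasing (which only increases the left-hand side and preserves both constraints), and then run a one-line exchange argument bounding $\sum_{i\le k}(1-c^*_i)a_i - \sum_{i>k}c^*_i a_i$ below by $a_k\bigl(k - \sum_i c^*_i\bigr) \ge 0$. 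Your version is more elementary and fully self-contained, at the cost of being slightly longer; the paper's version is more modular and would generalize more cleanly if one wanted to swap in a different polytope with known vertices. Both are valid, and the rearrangement step you flagged is used correctly.
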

\begin{proof}
    We have $|c/\alpha|_1\leq k$ and $|c/\alpha|_\infty \leq 1$. It is well known that $c/\alpha$ is a convex combination of $\{0,1\}$-vectors, each of which has at most $k$ 1's. For each $\{0, 1\}$-vector $d$ in the combination, we have $\sum_{r \in V} d(r) \cost_{\calC}(r) \leq \cost^k_{\calC}$. Taking the convex combination of these inequalities gives $\sum_{r \in V}\frac{c(r)}{\alpha}\cdot \cost_\calC(r) \leq \cost^k_\calC$, which is equivalent to the inequality in the claim.
\end{proof}

From the definitions of $f^+_1$ and $f^-_1$, we can see that it can be constrained by $\cost^k_\calC$.

 \begin{claim}
    $f^+_1 + f^-_1 \leq \cost^k_\calC$. 
 \end{claim}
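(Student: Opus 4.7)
The plan is to observe that $\varphi^+_1 \cup \varphi^-_1$ is exactly the set of edges in disagreement with respect to $\calC$, and that on every such edge the LP solution $x$ pays at most~$1$; this trivially charges $f^+_1 + f^-_1$ against the integral cost on the same $k$ vertices.

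First I would unpack the definitions. By construction, $\varphi^+_1(u)$ is the set of $+$edges at $u$ cut by $\calC$, and $\varphi^-_1(u)$ is the set of $-$edges at $u$ placed inside a common cluster by $\calC$; together these account for every edge incident to $u$ that is in disagreement, so $|\varphi^+_1(u)| + |\varphi^-_1(u)| = \cost_\calC(u)$. Algorithm~\ref{alg:pre-clusteringlp} produces $x_{uv} \in [0,1]$ (this is exactly what Lemma~\ref{lemma:metric} and constraints \eqref{LPC:non-negative}--\eqref{LPC:self} guarantee), so each $uv \in \varphi^+_1(u)$ contributes $x_{uv} \leq 1$ to $y^+_1(u)$, and each $uv \in \varphi^-_1(u)$ contributes $1 - x_{uv} \leq 1$ to $y^-_1(u)$. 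Thus for every $u \in V$,
\[
y^+_1(u) + y^-_1(u) \;\leq\; |\varphi^+_1(u)| + |\varphi^-_1(u)| \;=\; \cost_\calC(u).
\]

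Second, I would sum this pointwise bound over $u \in U$, using that $|U| = k$ and that by definition $\cost^k_\calC = \max_{S \subseteq V,\,|S|=k}\sum_{u\in S}\cost_\calC(u) \;\geq\; \sum_{u\in U}\cost_\calC(u)$. This yields
\[
f^+_1 + f^-_1 \;=\; \sum_{u \in U}\bigl(y^+_1(u) + y^-_1(u)\bigr) \;\leq\; \sum_{u \in U}\cost_\calC(u) \;\leq\; \cost^k_\calC,
\]
which is the claimed inequality.

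There is essentially no obstacle here: this is the easy bookkeeping part of the analysis, where the LP is bounded on exactly the edges on which the integral optimum $\calC$ already pays. The substantive difficulty will lie in the subsequent bounds on $f^+_2$, $f^+_3$, and $f^-_2$, where $x$ can be strictly positive on edges that $\calC$ does not pay for, and Claim~\ref{claim:using-coefficients} together with Lemmas~\ref{lemma:varphi1} and~\ref{lemma:Hdegreebound} will be needed to route the LP cost back to $\cost^k_\calC$ via a coefficient argument.
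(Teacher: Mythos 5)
Your proof is correct and matches the paper's argument: bound each LP term on a disagreeing edge by $1$, compare to $\cost_\calC(u)$, and sum over $U$. One small note in your favor: you correctly write $\sum_{u\in U}\cost_\calC(u) \leq \cost^k_\calC$ (since $U$ is chosen to maximize $\cost_x$, not $\cost_\calC$), whereas the paper's displayed proof writes this last step as an equality, which is a harmless typo.
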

 \begin{proof}
    \begin{flalign*}
        && f^+_1 + f^-_1 &= \sum_{u \in U}(y^+_1(u) + y^-_1(u)) = \sum_{u \in U, uv \in \varphi^+_1 \cup \varphi^-_1}x_{uv} &&\\
        && &\leq \sum_{u \in U}(|\varphi^+_1(u)|+|\varphi^-_1(u)|) \leq \sum_{u \in U} \cost_\calC(u) = \cost^k_\calC.  && \qedhere
    \end{flalign*}     
 \end{proof}

To bound the cost of the remaining $+$edges, we separately analyze the cost coefficients of vertices in $f^+_2$ and $f^+_3$. 

\begin{lemma}
    \label{lemma:f+2}
    There exists a vector $c^+_2 \in \R_{\geq 0}^{n}$ with the following properties:
    \begin{enumerate}[label=(\ref{lemma:f+2}\alph*)]
        \item \label{property:f+2-cost} $f^+_2 \leq \sum_{r \in V}c^+_2(r)\cdot \cost_\calC(r)$.
        \item \label{property:f+2-c+2-infty} $c^+_2(r) \leq \frac{2}{\beta} \cdot \frac{|\varphi^+_2(r)|}{d(r)}$, for every $r \in V$.
        \item \label{property:f+2-c+2-1} $|c^+_2|_1 \leq \frac2\beta\sum_{u \in U}\frac{|\varphi^+_2(u)|}{d(u)}$.
    \end{enumerate}
\end{lemma}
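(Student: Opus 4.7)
The plan is to exploit Lemma~\ref{lemma:varphi1}, which for every $uv \in \varphi^+_2$ gives $\cost_\calC(u) + \cost_\calC(v) \geq \beta M_{uv}$. Combined with $x_{uv} \leq 1$, this yields the pointwise bound
\begin{align*}
x_{uv} \;\leq\; \frac{\cost_\calC(u)}{\beta M_{uv}} + \frac{\cost_\calC(v)}{\beta M_{uv}}.
\end{align*}
Summing over the pairs contributing to $f^+_2 = \sum_{u \in U}\sum_{v \in \phi^+_2(u)} x_{uv}$ and regrouping by the vertex $r$ whose $\cost_\calC$ value the term multiplies, I am led to define
\begin{align*}
c^+_2(r) \;:=\; \mathbb{1}(r \in U)\sum_{v \in \phi^+_2(r)} \frac{1}{\beta M_{rv}} \;+\; \sum_{u \in U:\ ur \in \varphi^+_2} \frac{1}{\beta M_{ur}},
\end{align*}
so that property~\ref{property:f+2-cost} is immediate.

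For property~\ref{property:f+2-c+2-infty}, I would lower-bound every $M$-term in $c^+_2(r)$ by $d(r)$: since $M_{rv} \geq d(r)$ and $M_{ur} \geq d(r)$, each of the two summands is at most $|\varphi^+_2(r)|/(\beta d(r))$, and the total is $c^+_2(r) \leq 2|\varphi^+_2(r)|/(\beta d(r))$. For property~\ref{property:f+2-c+2-1} I would instead pass to a bound that keeps $u \in U$ as the outer index: swapping the order of summation in $\sum_{r} \sum_{u \in U:\, ur \in \varphi^+_2}\tfrac{1}{\beta M_{ur}}$ and using $M_{ur} \geq d(u)$ gives $\sum_{u \in U}|\varphi^+_2(u)|/(\beta d(u))$; for the first summand of $c^+_2(r)$, which is supported on $r \in U$, using $M_{rv} \geq d(r)$ yields the analogous bound with $r$ in place of $u$. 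Together these give $|c^+_2|_1 \leq (2/\beta)\sum_{u \in U}|\varphi^+_2(u)|/d(u)$.

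The subtlety --- and the reason I would avoid the more obvious first attempt of splitting as $\cost_\calC(u)/(\beta d(u)) + \cost_\calC(v)/(\beta d(v))$ --- is that keeping $M_{uv}$ in the denominator of the coefficient (rather than committing to $d(u)$ or $d(v)$) lets the same expression be lower-bounded by whichever degree is convenient for the property being checked: by $d(r)$ for the per-vertex bound in~\ref{property:f+2-c+2-infty}, and by $d(u)$ with $u \in U$ for the $\ell_1$ bound in~\ref{property:f+2-c+2-1}. A naive split using $\cost_\calC(v)/(\beta d(v))$ would still satisfy~\ref{property:f+2-c+2-infty} but fail~\ref{property:f+2-c+2-1} whenever $v \in \phi^+_2(u)$ has smaller degree than $u$; avoiding this pitfall by using $M_{uv}$ symmetrically is the only delicate point in the argument.
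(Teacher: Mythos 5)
Your proof is correct and follows essentially the same approach as the paper: bound $x_{uv}\leq 1$, apply Lemma~\ref{lemma:varphi1} to write $1 \leq (\cost_\calC(u)+\cost_\calC(v))/(\beta M_{uv})$, keep $M_{uv}$ in the denominator when defining $c^+_2$, and then lower-bound $M_{uv}$ by $d(r)$ for property~\ref{property:f+2-c+2-infty} and by $d(u)$ (with $u\in U$) for property~\ref{property:f+2-c+2-1}. Your remark about why the naive split using $d(u)$ and $d(v)$ in the denominators would break the $\ell_1$ bound is a correct observation, and matches the reason the paper's coefficient is phrased with $M_{uv}$.
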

\begin{proof}
    We bound $f^+_2$ as follows:
    \begin{align*}
        f^+_2 \leq \sum_{u \in U, uv \in \varphi^+_2} 1 \leq
        \sum_{u \in U, uv \in \varphi^+_2} \frac{\cost_{\mathcal{C}}(u)+ \cost_{\mathcal{C}}(v)}{\beta\cdot M_{uv}} =: \sum_{r \in V} c^+_2(r) \cdot \cost_\calC(r). 
    \end{align*}
    The second inequality used Lemma~\ref{lemma:varphi1}.  Therefore, \ref{property:f+2-cost} holds.

    %We then bound $|c|_\infty$ and $|c|_1$ respectively. 
    To show \ref{property:f+2-c+2-infty}, we bound the coefficients for $\cost_\calC(u)$ and $\cost_\calC(v)$ respectively.  If $u \in U$, the coefficient for $\cost_\calC(u)$ is $\sum_{uv \in \varphi^+_2}\frac{1}{\beta \cdot M_{uv}} \leq \frac{|\varphi^+_2(u)|}{\beta \cdot d(u)}$; if $u \notin U$, the coefficient is $0$.  The coefficient for $\cost_\calC(v)$ is $\sum_{u \in U \cap \phi^+_2(v)} \frac{1}{\beta\cdot M_{uv}} \leq \frac{|\varphi^+_2(v)|}{\beta \cdot d(v)}$. Therefore, $c^+_2(r) \leq \frac{2\cdot|\varphi^+_2(r)|}{\beta\cdot d(r)}$.

    To bound $|c^+_2|_1$, we can replace $\cost_\calC(u)$ and $\cost_\calC(v)$ with 1. Then $|c^+_2|_1 = \sum_{u \in U, uv\in\varphi^+_2}\frac{2}{\beta \cdot M_{uv}} \leq \frac2\beta\sum_{u \in U}\frac{|\varphi^+_2(u)|}{d(u)}$. This proves \ref{property:f+2-c+2-1}. 
\end{proof}

Following the analysis of the cost coefficients $c^+_2$ of $f^+_2$, we analyze the cost coefficients of $f^+_3$ in edge set $E_H$.

\begin{lemma}
    \label{lemma:f+3}
    There exists a vector $c^+_3 \in \R_{\geq 0}^{n}$ with the following properties:
    \begin{enumerate}[label=(\ref{lemma:f+3}\alph*)]
        \item \label{property:f+3-cost} $f^+_3 \leq \sum_{r \in V}c^+_3(r)\cdot \cost_\calC(r)$.
        \item \label{property:f+3-c+3-infty} $c^+_3(r) \leq 2\left(\frac{|\varphi^+_2(r)|\cdot |\varphi^+_3(r)|}{\beta\cdot d^2(r)}  + \frac{|\varphi^+_2(r)|}{\beta\cdot d(r)} + \frac{|\varphi^+_3(r)|}{d(r)}\right)$, for every $r \in V$.
        \item \label{property:f+3-c+3-1} $|c^+_3|_1 \leq 2 \sum_{u \in U} \left(\frac{|\varphi^+_2(u)|\cdot|\varphi^+_3(u)|}{\beta \cdot d^2(u)} + \frac{|\varphi^+_3(u)|}{\beta \cdot d(u)} + \frac{|\varphi^+_3(u)|}{d(u)}\right)$.
    \end{enumerate}
\end{lemma}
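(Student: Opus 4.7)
The strategy is to first derive a pointwise bound $M_{uv} \cdot x_{uv} \leq \cost_\calC(u) + \cost_\calC(v) + |\varphi^+_2(u)| + |\varphi^+_2(v)|$ for each $uv \in \varphi^+_3$, then sum this over $u \in U$ with $uv \in \varphi^+_3$, and finally apply Lemma~\ref{lemma:varphi1} to convert each $|\varphi^+_2|$ contribution into a linear combination of $\cost_\calC$ coordinates. The overall factor of $2$ in the claimed formulas arises naturally from summing contributions over the two endpoints $u$ and $v$ of each $\varphi^+_3$-edge, while the three summands inside correspond to the direct $\cost_\calC$ terms and to the two parts (``same vertex'' vs.\ ``far endpoint'') of the Lemma~\ref{lemma:varphi1} substitution.

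To establish the pointwise bound, I fix $uv \in \varphi^+_3$ and assume WLOG $d(v) = M_{uv} \geq d(u)$. Since $N_H(u) \subseteq N(u)$ with $N(u) \setminus N_H(u) = \{w : uw \in E \setminus E_H\}$, I can write $M_{uv} \cdot x_{uv} = |N(v) \setminus (N_H(u) \cap N_H(v))|$ and partition this set by whether $w \in N_H(v)$: the part outside $N_H(v)$ equals $N(v) \setminus N_H(v)$, and the part inside $N_H(v)$ but outside $N_H(u)$ sits inside $(N(v) \setminus N(u)) \cup (N(u) \setminus N_H(u))$. I then further split each piece by the cluster of $w$ relative to $C(u) = C(v)$. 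Vertices $w$ with $C(w) = C(v)$ either lie in $\phi^+_2(u)$ or $\phi^+_2(v)$, or (if $w \in N_H(v) \setminus N(u)$) yield a distinct $-$edge disagreement $uw$ incident to $u$. Vertices with $C(w) \neq C(v)$ yield either a $+$edge disagreement at $u$ (if $uw \in E \setminus E_H$) or a $+$edge disagreement at $v$. The crucial observation that avoids an extra factor of $2$ on the cost terms is that $+$edge disagreements at $v$ can be approached via two routes --- through the $C(w) \neq C(v)$ portion of $N(v) \setminus N_H(v)$ (requiring $vw \in E \setminus E_H$) and through the $C(w) \neq C(v)$ portion of the $N_H(v)$-branch (requiring $vw \in E_H$) --- but these two routes sit on opposite sides of the $E_H$-membership dichotomy for $vw$ and are therefore disjoint, contributing at most $\cost^+_\calC(v)$ in total rather than $2\cost^+_\calC(v)$. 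Collecting all pieces yields the claimed bound.

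Summing the pointwise bound gives $f^+_3 \leq \sum_{u \in U, uv \in \varphi^+_3}\frac{\cost_\calC(u) + \cost_\calC(v) + |\varphi^+_2(u)| + |\varphi^+_2(v)|}{M_{uv}}$. I then substitute $|\varphi^+_2(\cdot)| \leq \sum_{w: \cdot w \in \varphi^+_2} \frac{\cost_\calC(\cdot) + \cost_\calC(w)}{\beta M_{\cdot w}}$ from Lemma~\ref{lemma:varphi1}, and collect the coefficient of each $\cost_\calC(r)$ using $M_{uv} \geq \max\{d(u), d(v)\}$ (and analogously for $M_{\cdot w}$). This produces the three pieces in~\ref{property:f+3-c+3-infty}: a $\frac{2|\varphi^+_3(r)|}{d(r)}$ term from the direct $\cost_\calC(u) + \cost_\calC(v)$ contributions (one copy from $r$ acting as the $U$-endpoint $u$ and one from $r$ acting as $v$), a $\frac{2|\varphi^+_3(r)| |\varphi^+_2(r)|}{\beta d^2(r)}$ term when $r$ plays the role of the shared endpoint of a $\varphi^+_3$-edge and a $\varphi^+_2$-edge, and a $\frac{2|\varphi^+_2(r)|}{\beta d(r)}$ term when $r$ is the far $w$-endpoint of a $\varphi^+_2$-edge. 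For property~\ref{property:f+3-c+3-1} I sum these coefficients; the nontrivial step is the $|\varphi^+_2(v)|$ contribution (Term $2b$), which I bound via the trivial inequality $|\varphi^+_2(v)| \leq d(v)$, turning that sum into $\frac{2}{\beta}\sum_{u \in U} \frac{|\varphi^+_3(u)|}{d(u)}$ and producing the $\frac{|\varphi^+_3(u)|}{\beta d(u)}$ middle term in the claimed formula.

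The main obstacle I expect is the careful decomposition in the second paragraph: a naive union bound on $|N(v) \setminus (N_H(u) \cap N_H(v))|$ picks up an extra factor of $2$ on the $+$edge-disagreement count at $v$, which would inflate the $\cost_\calC$ coefficient from $\frac{2|\varphi^+_3(r)|}{d(r)}$ to $\frac{4|\varphi^+_3(r)|}{d(r)}$ and correspondingly worsen the final approximation ratio. Exploiting that $N(v) \setminus N_H(v)$ and $N_H(v)$ partition $N(v)$ cleanly along the $E_H$-membership of edges incident to $v$ is exactly what allows the disjoint accounting that saves this factor; the rest of the argument is then routine coefficient bookkeeping.
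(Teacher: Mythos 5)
Your proof is correct, and the overall strategy matches the paper exactly: derive a pointwise bound of the form $M_{uv}\cdot x_{uv} \leq \cost_\calC(u) + \cost_\calC(v) + |\varphi^+_2(u)| + |\varphi^+_2(v)|$ for each $uv \in \varphi^+_3$, sum, substitute Lemma~\ref{lemma:varphi1} into the $|\varphi^+_2(\cdot)|$ terms, and bound the resulting coefficients of $\cost_\calC(r)$ by factoring out $M_{uv} \geq \max\{d(u),d(v)\}$ and using $|\varphi^+_2(v)| \leq d(v)$ for the $\ell_1$-bound. The coefficient bookkeeping you describe for both~\ref{property:f+3-c+3-infty} and~\ref{property:f+3-c+3-1} agrees term-by-term with the paper's.

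The one place you genuinely diverge is the derivation of the pointwise bound (inequality \eqref{inequ:xuv-for-phi4} in the paper). The paper writes $d(\tilde u)$ as $|\varphi^+_1(\tilde u)| + |\varphi^+_2(\tilde u)| + |\varphi^+_3(\tilde u)|$ and then applies a short set-inclusion chain: since $N_H(\tilde u) \supseteq \phi^+_3(\tilde u)$, one has $|\varphi^+_3(\tilde u)| - |N_H(\tilde u)\cap N_H(\tilde v)| \leq |\phi^+_3(\tilde u)\setminus N_H(\tilde v)|$, and then $\phi^+_3(\tilde u)\setminus N_H(\tilde v) \subseteq C(\tilde v)\setminus N_H(\tilde v) \subseteq \phi^+_2(\tilde v)\cup\phi^-_1(\tilde v)$, giving the bound with the three pieces $\cost^+_\calC(\tilde u), |\varphi^+_2(\tilde v)|, \cost^-_\calC(\tilde v)$ rather directly. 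Your approach instead partitions $N(v)\setminus(N_H(u)\cap N_H(v))$ by $E_H$-membership and by $C(w)$, assigns each vertex to a disjoint disagreement category at $u$ or $v$, and relies on the observation that the two ways a $+$edge disagreement at $v$ can arise ($vw \in E\setminus E_H$ versus $vw \in E_H$) cannot double-count. This is a more elementary, vertex-by-vertex accounting; it is slightly looser (it produces $\cost^+_\calC(v) + \cost_\calC(u)$ rather than the paper's $\cost^+_\calC(v) + |\varphi^+_2(u)| + \cost^-_\calC(u)$ numerator, because the paper routes all of $\phi^+_1(v)$ into $\cost^+_\calC(v)$ in one step) but both numerators are at most $|\varphi^+_2(u)| + |\varphi^+_2(v)| + \cost_\calC(u) + \cost_\calC(v)$, so the final inequality is identical. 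A minor omission: you should note that the pointwise bound holds trivially for self-loops $uu \in \varphi^+_3$ (where $x_{uu} = 0$), as the paper does, since your WLOG assumption $d(v) = M_{uv} \geq d(u)$ tacitly assumes $u \neq v$.
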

\begin{proof}
Fix some $uv \in \varphi^+_3 \subseteq E_H$ with $u \neq v$. %By Algorithm \ref{alg:pre-clusteringlp},  we have $|N(u) \Delta N(v)| \leq \beta \cdot \max(d(u), d(v))$, which implies $  (1 - \beta) d(v) \leq d(u)  \leq \frac{d(v)}{1 - \beta}$. 
We let $\tilde u = \arg\max_{w \in \{u, v\}} d(w)$, and $\tilde v$ be the other vertex in $\{u, v\}$. Notice that $d(\tilde u) = M_{uv}$. Then we have
\begin{align}
    x_{uv} &= 1 - \frac{|N_{H}(u) \cap N_{H}(v)|}{d(\tilde u)} = \frac{d(\tilde u) - |N_{H}(u) \cap N_{H}(v)|}{d(\tilde u)}  = \frac{|\varphi^+_1({\tilde u})| + |\varphi^+_2({\tilde u})| + |\varphi^+_3({\tilde u})| - |N_H(\tilde u) \cap N_H(\tilde v)|}{d({\tilde u})} \nonumber\\
    %&\leq \sum_{u \in U, uv \in \varphi^+_3(u)}\frac{|\varphi^+_2({\tilde u})| + \cost^{+}_{\mathcal{C}}({\tilde u})}{d({\tilde u})} + \sum_{u \in U, uv \in \varphi^+_3(u)} \frac{|\varphi^+_3({\tilde u})| - |N_H(u) \cap N_H(v)|}{d({\tilde u})} \\
    %&\leq \frac{|\varphi^+_2({\tilde u})| + \cost^{+}_{\mathcal{C}}({\tilde u})}{d({\tilde u})} + \frac{|\varphi^+_3({\tilde u})| - |N_H({\tilde u}) \cap N_H({\tilde v})|}{d({\tilde u})} \\
    &\leq \frac{\cost^{+}_{\mathcal{C}}({\tilde u}) + |\varphi^+_2({\tilde u})| + |\varphi^+_2({\tilde v})| + \cost^{-}_{\mathcal{C}}({\tilde v})}{d({\tilde u})} \leq \frac{|\varphi^+_2(u)| + |\varphi^+_2(v)| +  \cost_{\mathcal{C}}(u)+ \cost_{\mathcal{C}}(v)}{M_{uv}}. \label{inequ:xuv-for-phi4}
\end{align}
We prove the first inequality in sequence \eqref{inequ:xuv-for-phi4}.
Notice $N_H(\tilde u) \supseteq \phi^+_3(\tilde u)$. Therefore, 
\begin{align*}
    |\varphi^+_3({\tilde u})| - |N_H(\tilde u) \cap N_H(\tilde v)| &\leq |\phi^+_3({\tilde u})| - |\phi^+_3(\tilde u) \cap N_H(\tilde v)| = |\phi^+_3(\tilde u) \setminus N_H(\tilde v)| \leq |\phi^+_2(\tilde v) \cup \phi^-_1(\tilde v)|\\
    &= |\varphi^+_2(\tilde v) \cup \varphi^-_1(\tilde v)|\leq |\varphi^+_2(\tilde v)| + \cost^-_\calC(\tilde v).
\end{align*}
Above, we used that $\phi^+_3(\tilde u) \setminus N_H(\tilde v) \subseteq C(\tilde u) \setminus N_H(\tilde v) = C(\tilde v) \setminus N_H(\tilde v) \subseteq \phi^+_2(\tilde v) \cup \phi^-_1(\tilde v)$.  Also \eqref{inequ:xuv-for-phi4} holds trivially when $u = v$; this holds for every $uv \in \varphi^+_3$.
\medskip

Therefore, 
\begin{align*}
    f^+_3 &= \sum_{u \in U}y^+_3(u) \\
    &\leq \sum_{u \in U, uv \in \varphi^+_3} \frac{|\varphi^+_2(u)| + |\varphi^+_2(v)| +  \cost_{\mathcal{C}}(u)+ \cost_{\mathcal{C}}(v)}{M_{uv}} \\
    &\leq \sum_{\substack{u \in U, uv \in \varphi^+_3, uw \in \varphi^+_2}} \frac{1}{M_{uv}} + \sum_{u \in U, uv \in \varphi^+_3, vw \in \varphi^+_2} \frac{1}{M_{uv}} + \sum_{u \in U, uv \in \varphi^+_3} \frac{\cost_{\mathcal{C}}(u)+ \cost_{\mathcal{C}}(v)}{M_{uv}}\\
    &\leq \sum_{u \in U, uv \in \varphi^+_3, uw \in \varphi^+_2} \frac{\cost_{\mathcal{C}}(u) + \cost_{\mathcal{C}}(w)}{\beta \cdot M_{uv}M_{uw}} + \sum_{\substack{u \in U, uv \in \varphi^+_3, vw \in \varphi^+_2}} \frac{\cost_{\mathcal{C}}(v) + \cost_{\mathcal{C}}(w)}{\beta \cdot M_{uv}M_{vw}} \\
     &\quad + \sum_{u \in U, uv \in \varphi^+_3} \frac{\cost_{\mathcal{C}}(u)+ \cost_{\mathcal{C}}(v)}{M_{uv}}\\
     &=: \sum_{r \in V} c^+_3(r) \cdot \cost_\calC(r). 
\end{align*}
Again, we used Lemma~\ref{lemma:varphi1} twice to prove the last inequality.  Therefore, \ref{property:f+3-cost} holds.

%We treat the above quantity as a linear combination of $\cost_{\mathcal{C}}(u)$. That is, it is $\sum_{r \in V} c(r) \cost_\calC(r)$.  If we can show that $\sum_{r \in V} c(r)$ and $c(r)$ are both bounded, then we can show that $f^+_1$ is bounded by the top-$k$ instance. 

{%\color{blue}
To prove \ref{property:f+3-c+3-infty}, we consider the coefficients for $\cost_{\mathcal{C}}(u), \cost_{\mathcal{C}}(v)$ and $\cost_{\mathcal{C}}(w)$ respectively. For a $u \in U$, the coefficient for $\cost_{\mathcal{C}}(u)$ is 
\begin{align*}
    \sum_{uv \in \varphi^+_3, uw \in \varphi^+_2}\frac{1}{\beta\cdot M_{uv}M_{uw}} + \sum_{uv\in \varphi^+_3}\frac1{M_{uv}} \leq
    \frac{|\varphi^+_3(u)|\cdot|\varphi^+_2(u)|}{\beta d^2(u)} + \frac{|\varphi^+_3(u)|}{d(u)}.
\end{align*} 

The coefficient for $\cost_{\mathcal{C}}(v)$ is
\begin{align*}
    \sum_{u \in U \cap \phi^+_3(v), vw \in \varphi^+_2}\frac{1}{\beta \cdot M_{uv}M_{vw}} + \sum_{u \in U \cap \phi^+_3(v)}\frac1{M_{uv}} 
\leq \frac{|\varphi^+_3(v)|\cdot|\varphi^+_2(v)|}{\beta d^2(v)} + \frac{|\varphi^+_3(v)|}{d(v)}.
\end{align*}

The coefficient for $\cost_{\mathcal{C}}(w)$ is at most 
\begin{align*}
    &\quad\sum_{u \in U \cap \phi^+_2(w), uv \in \varphi^+_3} \frac{1}{\beta \cdot M_{uv} M_{uw}} + \sum_{vw\in\varphi^+_2, u \in U \cap \phi^+_3(v)} \frac{1}{\beta\cdot M_{uv}M_{vw}} \\
    &\leq \sum_{u \in U \cap \phi^+_2(w)}\frac{d(u)}{\beta \cdot d(u)M_{uw}} + \sum_{vw\in\varphi^+_2} \frac{d(v)}{\beta\cdot d(v)M_{vw}} \\
    &=\sum_{u \in U \cap \phi^+_2(w)}\frac{1}{\beta \cdot M_{uw}} + \sum_{vw\in\varphi^+_2} \frac{1}{\beta\cdot M_{vw}} \leq \frac{|\varphi^+_2(w)|}{\beta \cdot d(w)} + \frac{|\varphi^+_2(w)|}{\beta \cdot d(w)} = \frac{2\cdot|\varphi^+_2(w)|}{\beta \cdot d(w)}.
\end{align*}
Therefore, for every $r \in V$, we have $c^+_3(r) \leq 2\left(\frac{|\varphi^+_2(r)|\cdot |\varphi^+_3(r)|}{\beta\cdot d^2(r)} + \frac{|\varphi^+_2(r)|}{\beta\cdot d(r)} + \frac{|\varphi^+_3(r)|}{d(r)} \right)$.

We then bound $|c^+_3|_1$ as follows:
\begin{align*}
|c^+_3|_1&= \sum_{u \in U, uv \in \varphi^+_3, uw \in \varphi^+_2} \frac{2}{\beta \cdot M_{uv}M_{uw}} + \sum_{u \in U, uv \in \varphi^+_3, vw \in \varphi^+_2} \frac{2}{\beta \cdot M_{uv}M_{vw}} + \sum_{u \in U, uv \in \varphi^+_3} \frac{2}{M_{uv}}\\ 
&\leq \sum_{u \in U} \left( \frac{2\cdot|\varphi^+_2(u)|\cdot|\varphi^+_3(u)|}{\beta \cdot d(u) \cdot d(u)} + \frac{2\cdot |\varphi^+_3(u)|}{\beta \cdot d(u)}  + \frac{2\cdot |\varphi^+_3(u)|}{d(u)}  \right)\\
&= 2 \sum_{u \in U} \left(\frac{|\varphi^+_2(u)|\cdot|\varphi^+_3(u)|}{\beta \cdot d^2(u)} + \frac{|\varphi^+_3(u)|}{\beta \cdot d(u)} + \frac{|\varphi^+_3(u)|}{d(u)}\right).
\end{align*}
}
This finishes the proof of \ref{property:f+3-c+3-1} and thus Lemma~\ref{lemma:f+3}.
\end{proof}

With Lemma~\ref{lemma:f+2} and Lemma~\ref{lemma:f+3}, we can then bound $f^+_2 + f^+_3$: 
\begin{lemma}
    $f^+_2 + f^+_3 \leq \frac4\beta \cdot \cost^k_\calC$. 
\end{lemma}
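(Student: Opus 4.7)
The plan is to combine Lemmas \ref{lemma:f+2} and \ref{lemma:f+3} additively and then invoke Claim \ref{claim:using-coefficients} once with the coefficient vector $c := c^+_2 + c^+_3$ and the parameter $\alpha = 4/\beta$. By properties \ref{property:f+2-cost} and \ref{property:f+3-cost}, summing the two inequalities already gives $f^+_2 + f^+_3 \le \sum_{r \in V} c(r)\cdot \cost_\calC(r)$, so the entire task reduces to verifying that $|c|_\infty \le 4/\beta$ and $|c|_1 \le 4k/\beta$.

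For $|c|_\infty$, I would add the pointwise bounds \ref{property:f+2-c+2-infty} and \ref{property:f+3-c+3-infty} and simplify the resulting four terms using two trivial inequalities: (i) $|\varphi^+_2(r)|/d(r) \le 1$, applied to the mixed term, turning $\frac{2|\varphi^+_2(r)||\varphi^+_3(r)|}{\beta d^2(r)}$ into $\frac{2|\varphi^+_3(r)|}{\beta d(r)}$; and (ii) $1 \le 1/\beta$, applied to the lone $\frac{2|\varphi^+_3(r)|}{d(r)}$ term. After these reductions every surviving term has the form $\tfrac{2|\varphi^+_i(r)|}{\beta d(r)}$ with $i \in \{2,3\}$, and collecting them yields $c(r) \le \frac{4(|\varphi^+_2(r)|+|\varphi^+_3(r)|)}{\beta d(r)} \le 4/\beta$, where the final step uses the fact that $\varphi^+_1(r), \varphi^+_2(r), \varphi^+_3(r)$ partition the edges incident to $r$.

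For $|c|_1$ the structure is analogous but with a subtle twist: in bound \ref{property:f+3-c+3-1} the term standing alone is $\tfrac{|\varphi^+_3(u)|}{d(u)}$, while the mixed term is $\tfrac{|\varphi^+_2(u)||\varphi^+_3(u)|}{\beta d^2(u)}$. Here I would instead bound the mixed term using $|\varphi^+_3(u)|/d(u) \le 1$, reducing it to $\tfrac{|\varphi^+_2(u)|}{\beta d(u)}$ (rather than to $\tfrac{|\varphi^+_3(u)|}{\beta d(u)}$, which would be wasteful); and I would again use $1 \le 1/\beta$ on the last term. Summing everything then gives $|c|_1 \le \frac{4}{\beta}\sum_{u \in U}\frac{|\varphi^+_2(u)|+|\varphi^+_3(u)|}{d(u)} \le \frac{4k}{\beta}$, since each summand is at most $1$ and $|U| = k$.

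With both norm bounds established at the same $\alpha = 4/\beta$, Claim \ref{claim:using-coefficients} delivers $\sum_{r \in V} c(r) \cdot \cost_\calC(r) \le \frac{4}{\beta}\cdot \cost^k_\calC$, and the lemma follows. The only place a little care is needed is choosing, in each of the two calculations, the right way to collapse the mixed term $\tfrac{|\varphi^+_2||\varphi^+_3|}{d^2}$: one asymmetric simplification works for $|c|_\infty$ and the opposite one works for $|c|_1$. Apart from this tiny trick the argument is a direct, purely mechanical combination of Lemmas \ref{lemma:f+2}, \ref{lemma:f+3}, and Claim \ref{claim:using-coefficients}, with no further obstacle.
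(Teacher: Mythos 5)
Your proposal is correct and takes essentially the same route as the paper: form $c = c^+_2 + c^+_3$, bound $|c|_\infty$ and $|c|_1$ by $4/\beta$ and $4k/\beta$ respectively, then invoke Claim~\ref{claim:using-coefficients}. The specific term-collapsing steps — using $|\varphi^+_2(r)|/d(r)\le 1$ on the mixed term for $|c|_\infty$, using $|\varphi^+_3(u)|/d(u)\le 1$ on the mixed term for $|c|_1$, absorbing the $\beta$-free term via $1\le 1/\beta$, and finishing with $|\varphi^+_2|+|\varphi^+_3|\le d$ — are exactly what the paper does, and you correctly identify the asymmetry between \ref{property:f+3-c+3-infty} and \ref{property:f+3-c+3-1} that forces the two opposite reductions.
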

\begin{proof}
    We define $c(r) = c^+_2(r) + c^+_3(r)$ for every $r \in V$. Then,
    %{\color{gray}
    %    \begin{align*}
    %    c(r) \leq \frac{2\cdot|\varphi^+_2(r)|}{\beta\cdot d(r)} + 2\left(\frac{|\varphi^+_2(r)|\cdot |\varphi^+_3(r)|}{\beta\cdot d^2(r)} + \frac{|\varphi^+_3(r)|}{d(r)} + \frac{|\varphi^+_2(r)|}{\beta\cdot d(r)}\right) = 2\left(\frac{|\varphi^+_2(r)|\cdot |\varphi^+_3(r)|}{\beta\cdot d^2(r)} + \frac{|\varphi^+_3(r)|}{d(r)}\right) +  \frac{4\cdot|\varphi^+_2(r)|}{\beta\cdot d(r)}.
    %    \end{align*}
    %    
    %    $\frac{2(1-t)t}{\beta} + 2t + \frac{4(1-t)}{\beta} = \frac{2}{\beta}((1-t)t+2(1-t)+\beta t)$. The maximum is $\frac{4}{\beta}$, achieved when $t = 0$ and $\beta<1$. So, $c(r) \leq \frac4\beta$ for every $r \in V$. \snote{Need to polish.}
    %}
    {%\color{blue}
        \begin{align*}
            c(r) &\leq \frac{2\cdot|\varphi^+_2(r)|}{\beta\cdot d(r)} + 2\left(\frac{|\varphi^+_2(r)|\cdot |\varphi^+_3(r)|}{\beta\cdot d^2(r)}  + \frac{|\varphi^+_2(r)|}{\beta\cdot d(r)} + \frac{|\varphi^+_3(r)|}{d(r)}\right)\\
            &= 2\left(\frac{|\varphi^+_2(r)|\cdot |\varphi^+_3(r)|}{\beta\cdot d^2(r)} + \frac{|\varphi^+_3(r)|}{d(r)} + \frac{2\cdot |\varphi^+_2(r)|}{\beta\cdot d(r)}\right)\\
            &\leq 2\left(\frac{|\varphi^+_3(r)|}{\beta\cdot d(r)} + \frac{|\varphi^+_3(r)|}{\beta \cdot d(r)} + \frac{2\cdot |\varphi^+_2(r)|}{\beta\cdot d(r)}\right)\\
            &\leq \frac{4}{\beta},
        \end{align*}
        where the first inequality holds because \ref{property:f+2-c+2-infty} and \ref{property:f+3-c+3-infty}, the second inequality holds because $|\varphi^+_2(r)| \leq d(r)$ and $\beta < 1$, the last inequality holds because $|\varphi^+_2(r)| + |\varphi^+_3(r)| \leq d(r)$.
    }

    Similarly, we have

        \begin{align*}
            |c|_1 &\leq \frac2\beta\sum_{u \in U}\frac{|\varphi^+_2(u)|}{d(u)} + 2 \sum_{u \in U} \left(\frac{|\varphi^+_2(u)|\cdot|\varphi^+_3(u)|}{\beta \cdot d^2(u)} + \frac{|\varphi^+_3(u)|}{\beta \cdot d(u)} + \frac{|\varphi^+_3(u)|}{d(u)}\right) \\
            &= 2\sum_{u \in U} \left(\frac{|\varphi^+_2(u)|\cdot|\varphi^+_3(u)|}{\beta \cdot d^2(u)} + \frac{|\varphi^+_3(u)|}{\beta \cdot d(u)} + \frac{|\varphi^+_3(u)|}{d(u)} + \frac{|\varphi^+_2(u)|}{\beta \cdot d(u)}\right)\\
            &\leq 2\sum_{u \in U} \left(\frac{|\varphi^+_2(u)|}{\beta \cdot d(u)} + \frac{|\varphi^+_3(u)|}{\beta \cdot d(u)} + \frac{|\varphi^+_3(u)|}{\beta \cdot d(u)} + \frac{|\varphi^+_2(u)|}{\beta \cdot d(u)}\right)\\
            &\leq \frac{4}{\beta} \cdot k,
        \end{align*}
        where the first inequality holds because \ref{property:f+2-c+2-1} and \ref{property:f+3-c+3-1}, the second inequality holds because $\beta < 1$ and for any $u \in U$ there is $|\varphi^+_3(u)| \leq d(u)$, the last inequality holds because  for any $u \in U$ there is $|\varphi^+_2(u)| + |\varphi^+_3(u)| \leq d(u)$.
    %}
    
    The lemma follows from Claim~\ref{claim:using-coefficients}.
\end{proof}

For the cost of the remaining $-$edges, i.e. $f^-_2$, we also analyze the cost coefficients of each vertex.

\begin{lemma}
    \label{lemma:f-2-properties}
    There exists a vector $c^-_2 \in \R_{\geq 0}^{n}$ with the following properties:
    \begin{enumerate}[label=(\ref{lemma:f-2-properties}\alph*)]
        \item \label{property:f-2-cost} $f^-_2 \leq \sum_{r \in V}c^-_2(r)\cdot \cost_\calC(r)$.
        \item \label{property:f-2-c-2-infty} $c^-_2(r) \leq \frac{2}{1-\beta}$, for every $r \in V$.
        \item \label{property:f-2-c-2-1} $|c^-_2|_1 \leq \frac{2k}{1-\beta}$.
    \end{enumerate}
\end{lemma}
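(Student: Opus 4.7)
The plan is to expand $f^-_2=\sum_{u\in U,\,uv\in\varphi^-_2(u)} \frac{|N_H(u)\cap N_H(v)|}{M_{uv}}$ as a sum over triples $(u,v,w)$ with $u\in U$, $v\in\phi^-_2(u)$, and $w\in N_H(u)\cap N_H(v)$, then split into two cases depending on whether $C(w)=C(u)$. The key structural fact is that for any such triple both $uw,vw\in E_H\subseteq E$ hold while $C(u)\ne C(v)$, so at least one of $uw$ or $vw$ is a $+$edge in disagreement under $\calC$, which we can charge against. Lemma~\ref{lemma:Hdegreebound} will supply the factor $\tfrac{1}{1-\beta}$ whenever we need to relate the degrees of the endpoints of an $E_H$-edge.

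\emph{Case B} ($C(w)\ne C(u)$): here $uw$ is already a disagreement, and there are at most $\cost^+_\calC(u)$ such $w\in N_H(u)$. Using $M_{uv}\ge d(u)$, the inner sum over $v$ satisfies $\sum_{v:\,vw\in E_H,\,uv\in\varphi^-_2}\frac{1}{M_{uv}}\le \frac{|N_H(w)|}{d(u)}\le \frac{d(w)}{d(u)}\le \frac{1}{1-\beta}$, where the last step invokes Lemma~\ref{lemma:Hdegreebound} on $uw\in E_H$. Summing gives a Case B contribution of at most $\frac{1}{1-\beta}\sum_{u\in U}\cost^+_\calC(u)\le \frac{1}{1-\beta}\sum_{u\in U}\cost_\calC(u)$, which is $\sum_r c^B(r)\cost_\calC(r)$ for $c^B(r):=\frac{\mathbb{1}[r\in U]}{1-\beta}$. \emph{Case A} ($C(w)=C(u)$): here $C(v)\ne C(w)$ so $vw$ is a $+$edge disagreement at $w$, and the number of valid $v$ is at most $\cost^+_\calC(w)$. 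Again using $M_{uv}\ge d(u)$, the inner sum is at most $\cost^+_\calC(w)/d(u)$, yielding a total of $\sum_{w}\cost^+_\calC(w)\cdot c^A(w)$ with $c^A(w):=\sum_{u\in U:\,uw\in E_H,\,C(u)=C(w)}\frac{1}{d(u)}$. Lemma~\ref{lemma:Hdegreebound} ($d(u)\ge(1-\beta)d(w)$ for $uw\in E_H$) together with $|N_H(w)|\le d(w)$ gives $c^A(w)\le \frac{1}{1-\beta}$; swapping the sum back yields $|c^A|_1\le \sum_{u\in U}\frac{|N_H(u)|}{d(u)}\le k$.

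I then set $c^-_2(r):=c^A(r)+c^B(r)$. Using $\cost^+_\calC(r)\le\cost_\calC(r)$ gives property \ref{property:f-2-cost}. Property \ref{property:f-2-c-2-infty} follows from $c^-_2(r)\le \frac{1}{1-\beta}+\frac{1}{1-\beta}=\frac{2}{1-\beta}$, and \ref{property:f-2-c-2-1} from $|c^-_2|_1\le k+\frac{k}{1-\beta}=\frac{(2-\beta)k}{1-\beta}\le \frac{2k}{1-\beta}$ since $\beta\in(0,1)$. The one step that requires attention is the bookkeeping in Case A, where the coefficients on $\cost^+_\calC(w)$ must be extracted by an order swap and uniformly bounded by $\frac{1}{1-\beta}$ via Lemma~\ref{lemma:Hdegreebound}; everything else mirrors the charging style already used for $f^+_2$ and $f^+_3$.
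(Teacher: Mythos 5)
Your proof is correct and follows essentially the same approach as the paper's: expand $f^-_2$ as a sum over triples $(u,v,w)$ with $w\in N_H(u)\cap N_H(v)$, split on $C(w)=C(u)$ versus $C(w)\neq C(u)$, charge to $\cost^+_\calC(w)$ or $\cost^+_\calC(u)$ respectively, and invoke Lemma~\ref{lemma:Hdegreebound} for the $\frac{1}{1-\beta}$ factor. The only cosmetic difference is that the paper keeps $\max(d(u),(1-\beta)d(w))$ in the denominator for Case A and weakens it to $d(u)$ or $(1-\beta)d(w)$ only when bounding the $\ell_1$- or $\ell_\infty$-norm, whereas you commit to $d(u)$ upfront; the resulting bounds are identical.
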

\begin{proof}
% Since for any $uv\in E_H$ we have $|N(u)\Delta N(v)|\leq \beta \cdot \max\{d(u), d(v)\}$, we can conclude that $(1-\beta)d(v)\leq d(u)\leq \frac{d(v)}{1-\beta}$.
We have 
\begin{align*}
    f^-_2 & = \sum_{u \in U, uv \in \varphi^-_2} ( 1- x_{uv}) = \sum_{u \in U, uv \in \varphi^-_2}\frac{|N_{H}(u) \cap N_{H}(v)|}{M_{uv}} \leq \sum_{\substack{u \in U, uv \in \varphi^-_2 \\ w \in N_{H}(u) \cap N_{H}(v)}} \frac{1}{M_{uv}}.
    %\quad\leq\quad \sum_{w \in V}\sum_{\substack{u \in N_{H}(w) \cap U \\v\in N_{H}(w) \cap \phi^-_2(u)}} \frac{1}{M_{uv}}.
\end{align*}

Given that $uv \in \varphi^-_2$ indicates that $C(u) \neq C(v)$, we distinguish between two cases: $C(w) = C(u)$ and $C(w) \neq C(u)$. For simplicity, in the summations below, we automatically impose the constraints $u \in U, uv \in \varphi^-_2$ and $uw, vw \in E_H$ when the vertices involved in constraints are defined. Notice that we have $d(v) \ge (1-\beta)d(w)$ from Lemma~\ref{lemma:Hdegreebound}. %Case 1 means $\calC(w) = \calC(u)$, and case 2 means $\calC(w) \neq \calC(u)$.
\begin{align}
    \sum_{u,v,w:C(w) = C(u)}\frac1{M_{uv}} &\leq \sum_{u,w:C(w) = C(u)} \frac{\cost^+_{\calC}(w)}{\max(d(u), (1-\beta) d(w))}. \label{inequ:f-2-eq}\\
    \sum_{u,v,w:C(w) \neq C(u)}\frac1{M_{uv}} &\leq \sum_{u, w:C(w) \neq C(u)}\frac{d(w)}{d(u)} \leq \sum_{u, w:C(w) \neq C(u)}\frac{1}{1-\beta} \leq \sum_{u} \frac{\cost^+_\calC(u)}{1-\beta}. \label{inequ:f-2-neq}
\end{align}
Adding \eqref{inequ:f-2-eq} and \eqref{inequ:f-2-neq}, and using that $\cost^+_\calC(u) \leq \cost_\calC(u)$ for every $u \in V$, we get
\begin{align*}
    f^-_2 \leq \sum_{u \in U,w \in N_H(u):C(w) = C(u)} \frac{\cost_{\calC}(w)}{\max(d(u), (1-\beta) d(w))} + \sum_{u \in U} \frac{\cost_\calC(u)}{1-\beta} =: \sum_{r\in V}c^-_2(r) \cdot \cost_\calC(r).
\end{align*}

Therefore, \ref{property:f-2-cost} holds.

To prove \ref{property:f-2-c-2-infty}, we consider the coefficients for $\cost_\calC(u)$ and $\cost_\calC(w)$ respectively.
The coefficient for $\cost_\calC(u)$ is $\frac{1}{1-\beta}$.
The coefficient for $\cost_\calC(w)$ is
\begin{align*}
    \sum_{u \in U,w \in N_H(u):C(w) = C(u)} \frac{1}{\max(d(u), (1-\beta) d(w))} &\leq \sum_{u \in U,w \in N_H(u):C(w) = C(u)} \frac{1}{(1-\beta) d(w)} \leq \frac{1}{1-\beta}.
\end{align*}
Therefore, for any $r\in V$, we have $c^-_2(r)\le \frac{2}{1-\beta}$.

By replacing $\cost_\calC(u)$ and $\cost_\calC(w)$ with 1, we then finished the proof of \ref{property:f-2-c-2-1} as follows:
\begin{align*}
    |c^-_2|_1 &\leq \sum_{u \in U,w \in N_H(u):C(w) = C(u)} \frac{1}{\max(d(u), (1-\beta) d(w))} + \sum_{u \in U} \frac{1}{1-\beta}\\
    &\leq \sum_{u \in U,w \in N_H(u):C(w) = C(u)} \frac{1}{d(u)} + \sum_{u \in U} \frac{1}{1-\beta}\\
    &\leq \sum_{u \in U} \frac{|N_H(u)|}{d(u)} + \sum_{u \in U} \frac{1}{1-\beta}\\
    &\leq \sum_{u \in U} 1 + \sum_{u \in U} \frac{1}{1-\beta}
    \leq \sum_{u \in U} \frac{2}{1-\beta}
    = \frac{2k}{1-\beta}.  \qedhere
\end{align*}

\end{proof}

With Lemma~\ref{lemma:f-2-properties}, we can then bound $f^-_2$.

\begin{lemma}
    $f^-_2 \leq \frac{2}{1-\beta} \cdot \cost^k_\calC$.
\end{lemma}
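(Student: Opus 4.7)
The plan is a direct application of Claim~\ref{claim:using-coefficients} to the coefficient vector $c^-_2$ already produced in Lemma~\ref{lemma:f-2-properties}. That claim demands an $\ell_\infty$ bound and an $\ell_1$ bound on the coefficient vector that are compatible with a common scalar $\alpha$, in the sense that $|c|_\infty \leq \alpha$ and $|c|_1 \leq \alpha k$. Lemma~\ref{lemma:f-2-properties} was stated with exactly this calibration in mind: choosing $\alpha := \frac{2}{1-\beta}$, property~\ref{property:f-2-c-2-infty} gives $|c^-_2|_\infty \leq \alpha$, and property~\ref{property:f-2-c-2-1} gives $|c^-_2|_1 \leq \alpha k$.

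So the proof I would write has only two steps. First, I would invoke Claim~\ref{claim:using-coefficients} with the vector $c^-_2$ and the scalar $\alpha = \frac{2}{1-\beta}$, which yields $\sum_{r \in V} c^-_2(r) \cdot \cost_\calC(r) \leq \frac{2}{1-\beta} \cdot \cost^k_\calC$. Second, I would chain this with property~\ref{property:f-2-cost}, namely $f^-_2 \leq \sum_{r \in V} c^-_2(r) \cdot \cost_\calC(r)$, to conclude $f^-_2 \leq \frac{2}{1-\beta}\cdot \cost^k_\calC$, which is exactly the claim.

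There is no genuine obstacle; all of the combinatorial work, including the case split on whether $C(w) = C(u)$ and the use of Lemma~\ref{lemma:Hdegreebound} to convert $d(v)$ into $(1-\beta)d(w)$, was already absorbed into the construction of $c^-_2$. The present lemma is the last mechanical step in the same template used to bound $f^+_1 + f^+_2 + f^+_3$, and amounts essentially to one line of deduction from the previous lemma.
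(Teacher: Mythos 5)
Your proposal is correct and matches the paper's proof exactly: the paper also invokes Claim~\ref{claim:using-coefficients} with $\alpha = \frac{2}{1-\beta}$ on the coefficient vector $c^-_2$ supplied by Lemma~\ref{lemma:f-2-properties}, then chains with property~\ref{property:f-2-cost}. Nothing is missing.
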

\begin{proof}
    By Claim~\ref{claim:using-coefficients} and Lemma~\ref{lemma:f-2-properties}, we have $f^-_2\le \sum_{r\in V}c^-_2(r) \cdot \cost_\calC(r)\le \frac{2}{1-\beta} \cdot \cost^k_\calC$.
\end{proof}

So the final ratio for Algorithm \ref{alg:pre-clusteringlp} is 
\begin{align*}
    %\color{red}
    1 + \frac4\beta + \frac2{1-\beta}.
\end{align*} 
Let $\beta = 0.5858$, then the ratio is at most $12.66$.  This finishes the proof of Lemma~\ref{lemma:boundratiomain}.

%When we set $\beta = 0.5$ for simplicity, we have $\sum_{u \in U} y_u \leq 12.66 \cdot \cost^k_\calC$.

\section{Implementing Algorithm \ref{alg:pre-clusteringlp} in Nearly Linear Time}
\label{sec:nearly-linear}
In this section, we show how to run Algorithm \ref{alg:pre-clusteringlp} approximately in nearly linear time. Indeed, the algorithm can be implemented in MPC model with $O(1)$ rounds. %Furthermore, we can even implement Algorithm \ref{alg:pre-clusteringlp} within $O(1)$ rounds in MPC model. 
More precisely, we will show the following theorem:

\begin{restatable}{theorem}{thmefficientsettingLP} \label{thm:efficientpre-clusteringlp}
Let $\epsilon> 0$ and $\delta > 0$ be small enough constants. Given a graph  $G = (V, E)$, there exists an MPC algorithm that computes a solution $\{ \tilde{x}_{uv} \}_{u,v \in V}$ such that 
\begin{enumerate}
    \item For any integer $k \in [n]$, %let $\mathcal{C}_{\text{OPTtop-k}}$ be the optimal correlation clustering solution when using the top-$k$ norm objective, then 
    we have $\cost^k_{\tilde{x}} \leq (12.66+ \epsilon) \opt^k$.
    \item  For any $u,v,w \in V$, we have $\tilde{x}_{uv} + \tilde{x}_{uw} + \epsilon \geq \tilde{x}_{vw}$. 
\end{enumerate}
The algorithm succeeds with probability at least $1 - 1/n$. Moreover,  the algorithm runs in $O(1)$ rounds, has a total work of $\tilde{O}(m / \epsilon^6)$, requires $O(n^\delta)$ memory per machine and a $\tilde{O}(m / \epsilon^6)$ total memory. 
\end{restatable}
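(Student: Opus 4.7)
The plan is to replace the three quantities computed exactly in Algorithm \ref{alg:pre-clusteringlp} with sampled estimators, keeping only a nearly-linear subset of the $\binom{n}{2}$ entries of $\tilde x$ explicit. The three quantities are, for each relevant pair $(u,v)$: (a) $|N(u) \Delta N(v)|$, which decides $E_H$ membership for $+$edges; (b) $|N_H(u) \cap N_H(v)|$ for $uv \in E$, which determines $\tilde x_{uv}$; and (c) $|N_H(u) \cap N_H(v)|$ for $uv \notin E$, which naively has $\Theta(n^2)$ instances. For (a) and (b), I would pick a shared random hash $h: V \to [0,1]$ and have each vertex $u$ maintain the sketch $R(u) = \{w \in N(u) : h(w) \leq p\}$ with sampling probability $p = \tilde\Theta(1/\epsilon^4)$, so that, by Chernoff bounds (Theorem \ref{thm:chernoff}), $p^{-1}|R(u) \cap R(v)|$ estimates $|N(u) \cap N(v)|$ up to an additive error of $\epsilon^2 M_{uv}$ with high probability for every pair of non-trivial degree; vertices of degree below the sampling threshold simply broadcast their full neighborhoods at total cost $\tilde O(n/\epsilon^4)$. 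One MPC round builds the sketches and a second classifies each $+$edge into $E_H$ (shrinking the threshold $\beta$ by $\epsilon$ to absorb the estimation slack) and outputs $\tilde x_{uv}$ for it.

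The main obstacle is (c), because naively $-$edges number $\Theta(n^2)$. I would declare a $-$edge $uv$ \emph{relevant} exactly when its estimated $|N_H(u) \cap N_H(v)|$ exceeds $\epsilon M_{uv}$, set $\tilde x_{uv} = 1$ implicitly for every irrelevant $-$edge, and list the relevant ones in one MPC round by enumerating 2-hop paths in $H$: each vertex $w$ emits all pairs $(u,v) \subseteq N_H(w)$ with $uv \notin E$ whose estimated intersection survives the threshold, and duplicates across different $w$'s are merged. The critical counting step is that for each vertex $u$ the number of $-$edges $uv$ with $|N_H(u) \cap N_H(v)| \geq \epsilon M_{uv}$ is $O(d(u)/\epsilon)$: by Lemma \ref{lemma:Hdegreebound} every $w \in N_H(u)$ has $d_H(w) \leq d(u)/(1-\beta)$, so the total number of 2-hop walks from $u$ inside $H$ is at most $d(u)^2/(1-\beta)$, while each surviving $v$ consumes at least $\epsilon M_{uv} \geq \epsilon d(u)$ of those walks. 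Therefore the explicit $-$edge list has size $\tilde O(m/\epsilon)$; combined with $\tilde O(1/\epsilon^4)$-sized sketches and one more internal sampling layer to estimate intersections inside $H$, this matches the claimed $\tilde O(m/\epsilon^6)$ memory and work in $O(1)$ rounds.

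Verifying the two output guarantees is then short. Because the sampling is accurate to additive $\epsilon^2 M_{uv}$, we have $|\tilde x_{uv} - x_{uv}| \leq \epsilon$ on every pair (on implicit $-$edges, $x_{uv} \in [1-\epsilon,1]$ while $\tilde x_{uv} = 1$; on all others, by sampling precision). Combined with Lemma \ref{lemma:metric} this yields $\tilde x_{uv} + \tilde x_{uw} + 3\epsilon \geq \tilde x_{vw}$, which is the stated approximate triangle inequality after rescaling $\epsilon$. For the top-$k$ bound, implicit $-$edges can only decrease $\cost_{\tilde x}(u)$ (since $\tilde x_{uv} = 1 \geq x_{uv}$), so the per-vertex overhead is at most $\epsilon^2 d(u)$ from $+$edges plus $\epsilon^2 \cdot O(d(u)/\epsilon) = O(\epsilon) \cdot d(u)$ from explicit $-$edges. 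Mimicking the charging in Section \ref{sec:boundtopknormcost} (where each $d(u)$ factor at $u \in U$ is already paid for by $\cost^k_\calC$ via the $|\varphi^+_i(u)|/d(u)$ and $|\varphi^-_2(u)|/d(u)$ coefficients in the bounds on $f^+_2, f^+_3, f^-_2$) shows the aggregate overhead is $O(\epsilon) \cdot \opt^k$, giving $\cost^k_{\tilde x} \leq (12.66 + O(\epsilon))\opt^k$ by Lemma \ref{lemma:boundratiomain}; a final rescaling of $\epsilon$ delivers the stated $(12.66 + \epsilon)$ ratio.
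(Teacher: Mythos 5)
Your high-level strategy---build $H$ by sampling, restrict attention to a nearly-linear set of ``relevant'' $-$edges, and estimate $|N_H(u)\cap N_H(v)|$ by sketching, all in $O(1)$ MPC rounds---is the same skeleton as the paper's Algorithm~\ref{alg:pre-clusteringlpEfficient}, and your bound of $O(d(u)/\epsilon)$ relevant $-$edges per vertex matches Lemma~\ref{lemma:samplingnegativesizebound}. Two of your steps, however, do not go through as written.

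The $-$edge enumeration is not nearly linear. You have each middle vertex $w$ emit all pairs $(u,v)\subseteq N_H(w)$ with $uv\notin E$; this costs $\sum_w \binom{d_H(w)}{2}$ work, which can be a polynomial factor larger than $m$ (e.g.\ when $H$ contains large cliques, $\sum_w d_H(w)^2 \gg m$, yet every internal pair is a $+$edge that still has to be touched before it can be discarded). Your counting argument bounds the number of \emph{surviving} $-$edges, not the cost of listing them. The paper avoids this with an \emph{asymmetric} sample: each $u$ draws a random $S_u\subseteq N_H(u)$ of size $O(\log n/\epsilon)$ and only scans 2-hop neighbors reached through $S_u$; since $|N_H(u)\cap N_H(v)|\ge \epsilon M_{uv}\ge\epsilon d_H(u)$ means an $\epsilon$-fraction of $N_H(u)$ lies in the intersection, one of the $\tilde O(1/\epsilon)$ samples hits $N_H(v)$ w.h.p., and by Lemma~\ref{lemma:Hdegreebound} each $w\in S_u$ contributes only $O(d(u))$ to the scan, giving $\tilde O(m/\epsilon)$ total.

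Your cost analysis also fails because an additive slack of $\epsilon^2 d(u)$ per vertex cannot be charged to $\opt^k$. Take $G$ a disjoint union of cliques: then $\opt^k=\cost^k_x=0$, but your $\tilde x$ would still pay $\Theta(\epsilon^2)$ on each $+$edge incident to every $u\in U$, so $\cost^k_{\tilde x}$ would be positive while $\opt^k=0$. The charging in Section~\ref{sec:boundtopknormcost} bounds $\cost^k_x$, not $\sum_{u\in U}d(u)$; the two quantities are unrelated. The paper's fix is the rounding in Lines~\ref{alg:AllNormCCBySamplingRoundingstart}--\ref{alg:AllNormCCBySamplingRoundingmid}: $+$edges with $\tilde x_{uv}\le\epsilon$ are set to $0$ and stored $-$edges with $\tilde x_{uv}\ge 1-\epsilon$ are set to $1$, after which Lemma~\ref{lemma:samplingfinalvalue} gives the \emph{multiplicative} bounds $\tilde x_{uv}\le(1+\epsilon)x_{uv}$ for $uv\in E$ and $1-\tilde x_{uv}\le(1+\epsilon)(1-x_{uv})$ for $-$edges, hence $\cost_{\tilde x}(u)\le(1+\epsilon)\cost_x(u)$ pointwise. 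Your proposal implicitly contains the $-$edge rounding (your relevance threshold) but omits the $+$edge rounding, which is exactly the piece that converts additive error into the multiplicative guarantee the statement requires. A minor related point: a single shared hash with one fixed acceptance probability cannot give additive error $\epsilon^2 M_{uv}$ simultaneously at every degree scale; the paper uses separate sample sets $S(j)$ for each dyadic scale of $M_{uv}$.
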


%\subsection{Efficient Algorithm}
%\paragraph{Difficulty of Implementing Algorithm \ref{alg:pre-clusteringlp} in Nearly-Linear Time}

We give the nearly linear time implementation of Algorithm \ref{alg:pre-clusteringlp} in Algorithm~\ref{alg:pre-clusteringlpEfficient}. Line \ref{alg:AllNormCCBySamplingconstructH} constructs the graph $H$ efficiently. Line \ref{alg:AllNormCCBySamplingboundnegativeedgesstart}-\ref{alg:AllNormCCBySamplingboundnegativeedgesend} find the set $K$ of $-$edges we want to assign LP value. For any $-$edge $uv \not\in K$, we will simply set its LP value as $1$. Last, Line \ref{alg:AllNormCCBySamplingsetlpvaluestart} to Line~\ref{alg:AllNormCCBySamplingsetlpvalueend} is to set up $\tilde{x}_{uv}$ satisfying the conditions in~\ref{thm:efficientpre-clusteringlp}. In the remaining of this section, we will discuss these three parts in detail.

\begin{algorithm}[ht!]
\caption{Nearly Efficient Algorithm for Algorithm \ref{alg:pre-clusteringlp}.\\
\textbf{Input}: Graph $G = (V, E), \epsilon\in (0, 1)$\\
\textbf{Output}: $\{ \tilde{x}_{uv} \}_{u,v \in V}$ such that is $(12.66 + 26\epsilon)$-approximate and satisfy approximate triangle inequality. %A partition of $V$, $\mathcal{S}$.
}
\label{alg:pre-clusteringlpEfficient}
\begin{algorithmic}[1]
\Function {\textsc{AllNormCCBySampling}}{$G = (V, E)$}
    \State Construct subgraph $H = (V, E_H \subseteq E)$ using Corollary \ref{coro:constructHbysampling} 
    \label{alg:AllNormCCBySamplingconstructH} \medskip
    %\State $K \leftarrow \emptyset $   
    
    %\Comment{$K$ is the set of $-$edge that we assign $x_{uv}$ value}
    \For{every $u \in V$} \Comment{Compute set $K$}
    \label{alg:AllNormCCBySamplingboundnegativeedgesstart}
    \State $S_u \leftarrow \emptyset$
    \For{every $v \in d_H(u)$}: add $v$ to $S_u$ with probability $\min(\frac{8\log n}{\epsilon d_H(u)}, 1)$
        \EndFor
    %\For{every $v \in S_u$, $vw \in E_H$ and $wu \not\in E$}: add $uw$ to $K$
%    \EndFor
    \EndFor
    \State $K \gets \{uv \notin E: \exists w \in S_u, vw \in E_H\text{ or } \exists w \in S_v, uw \in E_H\}$ \bigskip
    \label{alg:AllNormCCBySamplingboundnegativeedgesend}
    \State $\tau \leftarrow \frac{400\log n}{\epsilon^5}$      \label{alg:AllNormCCBySamplingsetlpvaluestart} \Comment{Compute the $\tilde x_{uv}$ values}
    \For{every $j \in [1, \log n]$}
    \State $S(j) \leftarrow \emptyset$
    \For{$v \in V$}: add $v$ to $S(j)$ with probability $\min(\tau / 2^j, 1)$
    \EndFor
    \For{every $uv \in E \cup K$ with $M_{uv} \in [2^{j-1}, 2^j)$}:
%    \State let $w \in \{u, v\}$ be the node with larger degree $d_G$.
    %\If{$2^{j-1} \leq \max(d(u), d(v)) < 2^{j}$}
    %\State 
    $\tilde{x}_{uv} \leftarrow 1 - \frac{2^j|N_H(u) \cap N_H(v) \cap S(j)|}{\tau \cdot M_{uv}}$
    %\EndIf
    \EndFor
    \EndFor
    \For{every $uv \in E$}: \textbf{if} $\tilde{x}_{uv} \leq \epsilon$ \textbf{then} $\tilde{x}_{uv} \leftarrow 0$ \EndFor
    \label{alg:AllNormCCBySamplingRoundingstart}
    \For{every $uv \in K$}: \textbf{if} $\tilde{x}_{uv} \geq 1 - \epsilon$ \textbf{then} $\tilde{x}_{uv} \leftarrow 1$ \EndFor
    \label{alg:AllNormCCBySamplingRoundingmid}
    \For{every $uv \in {V \choose 2}\setminus (E \cup K)$}: 
        $\tilde{x}_{uv} \leftarrow 1$
    \EndFor
    \label{alg:AllNormCCBySamplingsetlpvalueend}
    \EndFunction
\end{algorithmic}
\end{algorithm}

\subsection{Line~\ref{alg:AllNormCCBySamplingconstructH}: Construction of Subgraph $H$}
To construct graph $H$ in Line~\ref{alg:AllNormCCBySamplingconstructH} of Algorithm \ref{alg:pre-clusteringlpEfficient}, we can use the sampling method in \cite{cohen2021correlation}:

\begin{restatable}{theorem}{thmcnostructH} [Lemma 3.11 of \cite{cohen2021correlation}]
\label{theorem:constructHbysampling}
Let $\delta, \epsilon, \beta \in (0, 1)$ be constants. There exists an MPC algorithm that, given a graph $G = (V, E)$, outputs a ``Yes/No'' answer for every $uv \in E$, such that the following happens with probability at least $1 - \frac{1}{n^6}$.
\begin{itemize}
    \item For every $uv \in E$ with $|N(u) \Delta N(v)| \leq \beta M_{uv}$, algorithm outputs ``Yes'' for $uv$.
    \item For every $uv \in E$ with $|N(u) \Delta N(v)| \geq (1+\epsilon) \beta M_{uv}$, algorithm outputs ``No'' for $uv$. 
\end{itemize}
The algorithm runs in $O(1)$ rounds, with a total work of $\tilde{O}(m/ \epsilon^2)$, $O(n^\delta)$ memory per machine, and $\tilde{O}(m/\epsilon^2)$ total memory.
\end{restatable}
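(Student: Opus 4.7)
The plan is to estimate the symmetric difference $|N(u) \Delta N(v)|$ for every $uv \in E$ with additive error at most $\frac{\epsilon \beta}{2} M_{uv}$, then output ``Yes'' exactly when the estimator falls below the threshold $(1 + \frac{\epsilon}{2})\beta M_{uv}$; this correctly distinguishes the two regimes of the theorem. The estimator is built by bucketing edges into $O(\log n)$ dyadic ranges $B_j = \{uv \in E : M_{uv} \in [2^j, 2^{j+1})\}$, independently sampling a set $S_j \subseteq V$ for each bucket at rate $p_j = \min\bigl(1,\, \frac{C \log n}{\epsilon^2 \beta \cdot 2^j}\bigr)$ for a large enough constant $C$, and returning $\widehat{\Delta}(u,v) = |(N(u) \Delta N(v)) \cap S_j| / p_j$ for each $uv \in B_j$. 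A direct Chernoff bound on the independent indicators $\mathbb{1}[w \in S_j]$ for $w \in N(u) \Delta N(v)$ shows $\widehat{\Delta}(u,v)$ is within additive $\frac{\epsilon \beta}{2} M_{uv}$ of the truth with probability $1 - n^{-10}$ (using $p_j \cdot M_{uv} = \Omega(\log n / (\epsilon^2 \beta))$), and a union bound over the at most $m \leq n^2$ edges gives the claimed $1 - n^{-6}$ overall success.

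To run this in $O(1)$ MPC rounds with $\tilde{O}(m/\epsilon^2)$ total work and $O(n^\delta)$ local memory, I would first compute, for every vertex $u$ and every bucket $j$ in which $u$ participates (only buckets with $j \geq \lfloor \log d(u) \rfloor - 1$ are relevant), the sampled neighborhood $L_j(u) = N(u) \cap S_j$ on a single machine using standard $O(1)$-round sort/aggregate primitives in the strictly sublinear regime. Because $d(u) \cdot p_j = O(\log n / (\epsilon^2 \beta))$ for every bucket $j$ relevant to $u$, each $L_j(u)$ comfortably fits in $O(n^\delta)$ local memory. Then for each edge $uv \in B_j$, assuming WLOG $d(u) = M_{uv}$, the list $L_j(u)$ is shipped across the edge to a machine that also stores $N(v)$, and an in-memory hash-table intersection computes $|N(u) \cap N(v) \cap S_j|$, from which $|(N(u) \Delta N(v)) \cap S_j| = |L_j(u)| + |L_j(v)| - 2|N(u) \cap N(v) \cap S_j|$ is read off. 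The Yes/No decision is then local at each edge.

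The main obstacle is controlling the shuffling cost, since naively broadcasting each $L_j(u)$ to all of $u$'s neighbors could blow up total communication to $\Omega(\sum_u d(u)^2)$. The key accounting is that the traffic I charge to an edge $uv$ in its home bucket is only $|L_j(u)| = O(\log n / (\epsilon^2 \beta))$ rather than $d(u)$, so the total over all edges and all $O(\log n)$ buckets (processed in parallel, not sequentially) is $\tilde O(m / \epsilon^2)$, matching both the total memory and total work targets. The round complexity stays at $O(1)$ because sampling, key-grouped aggregation, edge-indexed routing, and local hash-intersection are each constant-round MPC primitives in the strictly sublinear regime, which together yield the claimed guarantees.
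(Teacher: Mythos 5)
Your proposal is essentially the paper's own argument: the same dyadic bucketing on $M_{uv}$, the same per-bucket sampling rate $\Theta\bigl(\tfrac{\log n}{\epsilon^2\beta\,2^j}\bigr)$, a Chernoff bound on the sampled symmetric-difference count, a threshold halfway between the two regimes, and a union bound over $O(n^2)$ edges. Your MPC accounting (compute $L_j(u)=N(u)\cap S_j$ locally, charge $|L_j(u)|=\tilde O(1/(\epsilon^2\beta))$ per edge rather than $d(u)$) is a more explicit spelling-out of what the paper dismisses with ``all computations are naturally parallelized,'' and it correctly yields $\tilde O(m/\epsilon^2)$ total memory and work.

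One small quantitative remark on the concentration step. You claim a \emph{uniform} additive error of $\tfrac{\epsilon\beta}{2}M_{uv}$ for the estimator $\widehat\Delta(u,v)$ and justify it with $p_j M_{uv}=\Omega(\log n/(\epsilon^2\beta))$. If $|N(u)\Delta N(v)|$ is close to $2M_{uv}$, the Bernstein exponent $\tfrac{p_j\eta^2 M_{uv}^2}{O(M_{uv})}$ with $\eta=\epsilon\beta/2$ evaluates to $\Theta(\beta\log n)$, not $\Theta(\log n)$, so the uniform additive bound actually needs a sampling rate larger by a factor of $1/\beta$, i.e.\ $p_j M_{uv}=\Omega(\log n/(\epsilon^2\beta^2))$. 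This is harmless here because $\beta$ is a constant absorbed into $\tilde O(\cdot)$, but the paper sidesteps it entirely: instead of estimating $|N(u)\Delta N(v)|$ uniformly, it compares the raw sampled count $X_{uv}$ directly against the threshold $(1+\tfrac{\epsilon}{2})\tau M_{uv}/2^j$, so the Chernoff bounds are applied around means that are already $\Omega(\tau)=\Omega(\log n/\epsilon^2)$ in the two relevant regimes, with no dependence on $\beta$ in the exponent. Both routes give the theorem; the paper's is marginally tighter in constants.
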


In \cite{cohen2021correlation}, it is required that the algorithm outputs ``Yes'' for $uv \in E$ if $|N(u) \Delta N(v)| \leq 0.8 \beta M_{uv}$ and outputs ``No'' for $uv \in E$ if $|N(u) \Delta N(v)| \geq \beta M_{uv}$. However, we can replace 0.8 with any constant less than 1. Additionally, the algorithm succeeds with probability $1 - \frac{1}{n}$, but we can increase this to $1 - \frac{1}{n^c}$ for any constant $c$. For completeness, we prove Theorem \ref{theorem:constructHbysampling} in Appendix~\ref{sec:constructH}. The theorem leads to the following corollary:
%
% {
% \color{red} This paragraph was accidentally removed due to an network error. But I think we need a better explanation for the proof of the corollary. 
% }
% \begin{coro}
% \label{coro:constructHbysampling}
% For any constant $\delta> 0$, and $\epsilon\in (0, 1)$, given graph $G = (V, E)$, there exists an MPC algorithm that, given a graph $G = (V, E)$, in $O(1)$ rounds outputs a graph $H = (G, E_H \subset E)$ satisfying the following conditions: If we let $x_{uv} = 1 - \frac{|N_H(u) \cap N_H(v)|}{\max(d(u), d(v))}$, then for any $k \in [1, n]$, %let $\mathcal{C}_{\text{OPTtop-k}}$ be the optimal correlation clustering solution using the top-$k$ norm objective, then 
% we have $\cost^k_x \leq (12.66 + 26\epsilon) \opt^k$.
% Moreover, this algorithm succeeds with probability $1 - \frac{1}{n^3}$, uses $n^\delta$ memory per machine, and uses a total memory of $\tilde{O}(m/\epsilon^2)$ and a total work of $\tilde{O}(m/\epsilon^2)$.
% \end{coro}
\begin{coro}
    \label{coro:constructHbysampling}
    Consider the setting in Theorem~\ref{theorem:constructHbysampling}, and let $H = (V, E_H)$ with $E_H$ being the set of edges $uv \in E$ for which the algorithm outputs ``Yes''. Let $x_{uv} = 1 - \frac{|N_H(u) \cap N_H(v)|}{M_{uv}}$ for every $uv \in {V \choose 2}$ and $x_{uu} = 0$ for every $u \in V$. Then for any $k \in [1, n]$, we have $\cost^k_x \leq 12.66(1 + \epsilon) \opt^k$.
\end{coro}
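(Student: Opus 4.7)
The plan is to re-run the argument of Lemma~\ref{lemma:boundratiomain} on the sampled graph $H=(V,E_H)$ returned by Theorem~\ref{theorem:constructHbysampling}, tracking how the randomness weakens the two structural ingredients that depend on $\beta$. Theorem~\ref{theorem:constructHbysampling} sandwiches $E_H$ between two deterministic graphs: every $uv\in E$ with $|N(u)\Delta N(v)|\le \beta M_{uv}$ belongs to $E_H$, and every $uv\in E_H$ satisfies $|N(u)\Delta N(v)|\le (1+\epsilon)\beta M_{uv}$. The metric property of $x$ (Lemma~\ref{lemma:metric}) is purely combinatorial in $N_H(\cdot)$ and does not mention $\beta$, so feasibility for LP~\eqref{lp:metric-lp} is immediate.

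Inspecting Section~\ref{sec:boundtopknormcost}, the threshold $\beta$ enters in only two places: (i)~Lemma~\ref{lemma:varphi1}, which uses $|N(u)\Delta N(v)|\ge \beta M_{uv}$ for every $uv\in \varphi^+_2\subseteq E\setminus E_H$, and (ii)~Lemma~\ref{lemma:Hdegreebound}, which uses $|N(u)\Delta N(v)|\le \beta M_{uv}$ for every $uv\in E_H$. The first half of the sandwich ensures that (i) continues to hold verbatim with constant $\beta$, so the derivations of $f^+_1+f^-_1\le \cost^k_\calC$ and $f^+_2+f^+_3\le \tfrac{4}{\beta}\cost^k_\calC$ transfer unchanged. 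The second half relaxes (ii) to $d(u)\le \frac{1}{1-(1+\epsilon)\beta}\,d(v)$ for every $uv\in E_H$, which propagates through Lemma~\ref{lemma:f-2-properties} to the degraded bound $f^-_2\le \frac{2}{1-(1+\epsilon)\beta}\cost^k_\calC$. All other combinatorial ingredients used along the way (notably $\phi^+_3(\tilde u)\setminus N_H(\tilde v)\subseteq \phi^+_2(\tilde v)\cup \phi^-_1(\tilde v)$ and $|N_H(u)|\le d(u)$) depend only on $E_H\subseteq E$ and are unaffected by the randomness.

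Summing the three refined bounds, I would obtain
\begin{align*}
\cost^k_x\;\le\;\left(1+\frac{4}{\beta}+\frac{2}{1-(1+\epsilon)\beta}\right)\opt^k.
\end{align*}
At $\beta=0.5858$ and $\epsilon=0$ the coefficient is at most $12.66$, and its derivative with respect to $\epsilon$ is a bounded absolute constant, so for $\epsilon$ below a fixed threshold the coefficient is at most $12.66(1+\epsilon)$; if necessary one rescales the $\epsilon$ passed into Theorem~\ref{theorem:constructHbysampling} by a constant factor so the output matches the corollary's statement. The only step requiring real care is verifying that~(ii) is the \emph{unique} consumer of the upper bound on $|N(u)\Delta N(v)|$ for $uv\in E_H$; this is checked by auditing the single inequality $d(v)\ge (1-\beta)d(w)$ inside the proof of Lemma~\ref{lemma:f-2-properties}, a routine line-by-line inspection rather than a genuine obstacle.
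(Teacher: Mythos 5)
Your proof is correct and matches the paper's intended argument (the paper states the corollary without a detailed proof, but the approach is exactly to re-run the ratio analysis of Section~\ref{sec:boundtopknormcost} tracking the two places where $\beta$ appears). Your audit of where $\beta$ enters is accurate: the lower bound $|N(u)\Delta N(v)| > \beta M_{uv}$ for $uv\in E\setminus E_H$ survives because ``No'' precludes $|N(u)\Delta N(v)|\le\beta M_{uv}$, and the degree-ratio bound degrades to $\frac{1}{1-(1+\epsilon)\beta}$ for $uv\in E_H$ because ``Yes'' precludes $|N(u)\Delta N(v)|\ge(1+\epsilon)\beta M_{uv}$ --- note only that the degree-ratio bound is used twice inside Lemma~\ref{lemma:f-2-properties} (once as $d(v)\ge(1-\beta)d(w)$ before~\eqref{inequ:f-2-eq}, once as $d(w)/d(u)\le\frac{1}{1-\beta}$ in~\eqref{inequ:f-2-neq}), though both degrade identically, so your final bound $1+\frac{4}{\beta}+\frac{2}{1-(1+\epsilon)\beta}\le 12.66(1+\epsilon)$ at $\beta=0.5858$ for small constant $\epsilon$ is right.
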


% \subsection{Analysis of Approximate Ratio}
% In this section, we show that Algorithm~\ref{alg:pre-clusteringlpEfficient} incurs an additive factor of $O(\epsilon)$ in the approximation ratio. 

From now on, we define $x_{uv}$ as in Corollary~\ref{coro:constructHbysampling}: $x_{uv} = 1 - \frac{|N_H(u) \cap N_H(v)|}{M_{uv}}$ for every $uv \in {V \choose 2}$ and $x_{uu} = 0$ for every $u \in V$. Notice that Algorithm~\ref{alg:pre-clusteringlpEfficient} does not compute or main $x$ explicitly; it is introduced only for the sake of analysis. 

\subsection{Line~\ref{alg:AllNormCCBySamplingboundnegativeedgesstart}-\ref{alg:AllNormCCBySamplingboundnegativeedgesend}: Construction of $K$} 
After obtaining $H$, there might be $O(n^2)$ $-$edges for which we need to assign $x$ values in Algorithm \ref{alg:pre-clusteringlp}. Fortunately, most of these $-$edges will have $x$ values greater than $1 - \epsilon$, allowing us to simply disregard them. By doing this, we achieve an approximate triangle inequality rather than a strict triangle inequality.  %TODO: if the values do not affect the rounding, then there is no loss. 

The key observation is as follows: for any $-$edge $uv \in {V \choose 2} \setminus E$ with $x_{uv} \leq 1 - \epsilon$, then $\frac{|N_H(u) \cap N_H(v)|}{M_{uv}} = 1 - x_{uv} \geq \epsilon$, which is equivalent to $|N_H(u) \cap N_H(v)| \geq \epsilon M_{uv}$. If we sample $O\left(\frac{\log n}{\epsilon}\right)$ nodes from $N_{H}(u)$,  at least one node is in $N_H(u) \cap N_H(v)$ with high probability. By considering all neighbors of the sampled nodes for $u$, we will cover all $-$edges $uv$ where $x_{uv} \leq 1 - \epsilon$. Additionally, note that for any neighbor $w \in N_H(u)$, their degrees will not differ significantly. Since we only have $O\left(\frac{\log n}{\epsilon}\right)$ nodes to consider, we will only need to consider $O\left(\frac{d(u) \log n}{\epsilon}\right)$ nodes for each node $u$, which is still a nearly linear number of nodes in total. 

% {\color{gray}Need to update. \paragraph{Algorithm Description} Line \ref{alg:AllNormCCBySamplingboundnegativeedgesstart} to Line \ref{alg:AllNormCCBySamplingboundnegativeedgesend} is to find the set $K$ of $-$edges we want to assign LP value. At Line 3, we initialize set $K$ as empty set, then for each node $u$, we try to identify $-$edges $uv \in {V \choose 2} \setminus E$ with $x_{uv} \leq 1 - \epsilon$. For each node $u$, we will sample $O(\log n / \epsilon)$ nodes from $N_H(u)$ at Line 6. At Line 7-8, we add all neighbor of $S_u$ to $K$ if it's a $-$edge. The main lemma regrading Line \ref{alg:AllNormCCBySamplingboundnegativeedgesstart} to Line \ref{alg:AllNormCCBySamplingboundnegativeedgesend} is as follows, }

We let $K$ be the set constructed in the algorithm. 
\begin{lemma}
    \label{lemma:samplingnegativesizebound}
    %Let \( x_{uv} = 1 - \frac{|N_H(u) \cap N_H(v)|}{M_{uv}} \) for \( uv \in \binom{V}{2} \) and \( \epsilon < 1/3 \). A
    $\forall uv \in {V \choose 2}\setminus E \text{ with } x_{uv} \leq 1 - \epsilon, \Pr[uv \in K] \geq 1 - n^{-6}$. 
    %After running Line \ref{alg:AllNormCCBySamplingboundnegativeedgesstart} to Line \ref{alg:AllNormCCBySamplingboundnegativeedgesend}, for any \( uv \in {V \choose 2} \setminus E \) such that \( x_{uv} \leq 1 - \epsilon \), we have \( uv \in K \) with probability at least \( 1 - \frac{1}{n^6} \). 
    Moreover \( |K| = O\left(\frac{m \log n}{\epsilon}\right) \) with probability at least \( 1 - \frac{1}{n^7} \).
\end{lemma}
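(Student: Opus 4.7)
The plan is to handle the two claims separately. For the coverage claim, the key is that $x_{uv}\leq 1-\epsilon$ translates directly into a lower bound $|N_H(u)\cap N_H(v)|\geq \epsilon M_{uv}$ on the common neighborhood, which is exactly what the sampling probability $\min\bigl(8\log n/(\epsilon\, d_H(u)),\,1\bigr)$ is calibrated against. For the size bound, I will bound $|K|$ by counting, over all $u$, the second neighbors in $H$ reachable through $S_u$, and use Lemma~\ref{lemma:Hdegreebound} to control the degrees of those second neighbors.

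For the coverage claim, fix a $-$edge $uv$ with $x_{uv}\leq 1-\epsilon$, and let $I:=N_H(u)\cap N_H(v)$. By definition of $x_{uv}$ and the fact that $d_H(u)\leq d(u)\leq M_{uv}$, we obtain $|I|\geq \epsilon M_{uv}\geq \epsilon\, d_H(u)$. If the sampling probability equals $1$ (i.e.\ $d_H(u)\leq 8\log n/\epsilon$), then $S_u=N_H(u)\supseteq I$, so any element of $I$ certifies $uv\in K$ deterministically. Otherwise each vertex of $I$ is placed in $S_u$ independently with probability $8\log n/(\epsilon\, d_H(u))$, and
\[
\Pr\!\bigl[S_u\cap I=\emptyset\bigr]\;\leq\;\Bigl(1-\tfrac{8\log n}{\epsilon\, d_H(u)}\Bigr)^{|I|}\;\leq\;\exp\!\Bigl(-\tfrac{8|I|\log n}{\epsilon\, d_H(u)}\Bigr)\;\leq\;\exp(-8\log n)\;\leq\; n^{-6},
\]
where we plugged in $|I|\geq \epsilon\, d_H(u)$. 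Any surviving $w\in S_u\cap I$ satisfies $vw\in E_H$, giving $uv\in K$.

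For the size bound, every $uv\in K$ is witnessed by some $w\in S_u$ with $vw\in E_H$ (or the symmetric version), so $|K|\leq 2\sum_{u\in V}\sum_{w\in S_u}d_H(w)$. Lemma~\ref{lemma:Hdegreebound} asserts that every $w\in N_H(u)$ satisfies $d_H(w)\leq d(w)\leq d(u)/(1-\beta)$, hence $\sum_{w\in S_u}d_H(w)\leq |S_u|\cdot d(u)/(1-\beta)$. A Chernoff bound (Theorem~\ref{thm:chernoffgeq}) applied to $|S_u|$, whose expectation is at most $8\log n/\epsilon$, yields $|S_u|=O(\log n/\epsilon)$ with probability at least $1-n^{-8}$; a union bound over the $n$ vertices keeps the failure probability below $n^{-7}$. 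Combining, $|K|\leq \tfrac{2}{1-\beta}\sum_u |S_u|\cdot d(u)=O(m\log n/\epsilon)$, which is the claimed bound.

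The main ``obstacle'' is essentially bookkeeping: one must verify that the constant $8$ in the sampling rate is large enough to convert the single-pair success probability into the required $1-n^{-6}$ guarantee (and a clean union bound for the size claim), and that the invocation of Lemma~\ref{lemma:Hdegreebound} is legitimate (i.e.\ that every $w\in S_u\subseteq N_H(u)$ is indeed an $H$-neighbor of $u$, so that degrees are comparable). Both reduce to routine Chernoff calculus and direct unpacking of the definitions, with no delicate dependence on the structure of the instance.
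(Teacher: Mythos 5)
Your proof is correct and follows essentially the same approach as the paper's: lower-bound $|N_H(u)\cap N_H(v)|$ by $\epsilon d_H(u)$, use the sampling probability to show a witness lands in $S_u$ w.h.p., and bound $|K|$ by combining a Chernoff bound on $|S_u|$ with the degree comparability from Lemma~\ref{lemma:Hdegreebound}. The only cosmetic differences are that you are slightly loose in writing $\exp(-8\log n)\leq n^{-6}$ (the paper keeps $n^{-8}$ per pair and union-bounds over all $n^2$ pairs) and you carry an unneeded factor of $2$ in the $|K|$ bound; neither affects correctness or the asymptotic conclusion.
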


\begin{proof}
Fix any \( uv \in {V \choose 2} \setminus E \) with \( x_{uv} \leq 1 - \epsilon \). We have \( |N_H(u) \cap N_H(v)| \geq \epsilon M_{uv} \geq \epsilon \max(d_H(u), d_H(v)) \geq \epsilon d_H(u) \). Let $A_{u, v}$  be the event that at least one node from \( N_H(u) \cap N_H(v) \) is added to \( S_u \). If \( d_H(u) \leq \frac{8\log n}{\epsilon} \), then $\Pr[A_{u, v}] = 1$. Otherwise, we have
\begin{align*}
    \Pr[A_{u, v}] &\geq 1 - \left( 1 - \frac{8\log n}{\epsilon d_H(u)} \right)^{\epsilon d_H(u)} 
    \geq 1 - e^{-8\log n} 
    \geq 1 - \frac{1}{n^8}.
\end{align*}
%Once we add a node \( w \in N_H(u) \cap N_H(v) \) to \( S_u \), the algorithm considers all neighbors of \( w \), adding \( uv \) to \( K \). 
Conditioned on $A_{u,v}$, we have $uv \in K$.
Thus, we have $\Pr[uv \in K] \geq 1 - \frac{1}{n^8}$. The first statement follows from union bound over at most $n^2$ edges. 

To bound the size of \( K \), for any node \( u \) with \( d_H(u) \leq \frac{8\log n}{\epsilon} \), we add the whole set \( N_H(u) \) to \( S_u \), resulting in \( |S_u| \leq \frac{8\log n}{\epsilon} \). On the other hand, if \( d_H(u) > \frac{8\log n}{\epsilon} \), let \( X_{uv} \) be the indicator variable that \( v \) is added to \( S_u \). Then,
\begin{align*}
    \E\left[\sum_{v \in N_H(u)} X_{uv}\right] = d_H(u) \cdot \frac{8\log n}{\epsilon d_H(u)} = \frac{8\log n}{\epsilon}.
\end{align*}
Thus, by the Chernoff bound,
\begin{align*}
    \Pr\left[|S_u| \geq \frac{16\log n}{\epsilon}\right] = \Pr\left[\sum_{v \in N_H(u)} X_{uv} \geq \frac{16\log n}{\epsilon}\right] \leq \exp\left(-\frac{8\log n}{3\epsilon}\right) \leq n^{-8},
\end{align*}
as long as \( \epsilon < 1/3 \). Therefore, with probability at least \( 1 - \frac{1}{n^7} \), we have \( |S_u| \leq \frac{16\log n}{\epsilon} \) for all \( u \in V \). 

Note that for any node \( w \in N_H(u) \), by Lemma \ref{lemma:Hdegreebound}, we have \( d_H(w) \leq d(w) \leq \frac{d(u)}{1 - \beta} \). So, with probability at least $1 - \frac{1}{n^7}$, we have 
\begin{flalign*}
    &&|K| \leq \sum_{u \in V} |S_u| \cdot \frac{d(u)}{1 - \beta} \leq \frac{8\log n}{\epsilon(1 - \beta)} \sum_{u \in V} d(u) = O\left(\frac{m \log n}{\epsilon}\right). && \qedhere
\end{flalign*}
\end{proof}

%\snote{We can not use Lemma \ref{lemma:Hdegreebound} any more. This is a different $H$. }

\subsection{Line~\ref{alg:AllNormCCBySamplingsetlpvaluestart}-\ref{alg:AllNormCCBySamplingsetlpvalueend}: Computing $\tilde{x}_{uv}$ for $E \cup K$}

Once we identify all edges to assign LP values to, we use the sampling method to compute $x_{uv}$, as described in~\cite{davies2023fast}. The key observation is that the size of a given set can be evaluated by sampling $O(\log n)$ elements. Special care is needed when the set is too small; hence, we set $\tilde{x}_{uv}$ to $0$ or $1$ if the size of the set is either too small or too large.

\paragraph{Algorithm Description}
Lines \ref{alg:AllNormCCBySamplingsetlpvaluestart} to \ref{alg:AllNormCCBySamplingsetlpvalueend} describe the setup of $\tilde{x}_{uv}$. Let $\tau = \frac{\log n}{\epsilon^2}$. To evaluate $|N(u) \cap N(v)|$, we define the set $S(j)$ as a subset of nodes obtained by sampling every node in the graph independently with probability $\min(\tau / 2^j, 1)$ (Line 11). We will have $O(\log n)$ different values for $j \in [1, \log n]$. Define $j_u = \max\{j \mid 2^j \leq d(u)\}$ as the maximum integer that is a power of 2 and at most $d(u)$. For any $uv \in E \cup K$, we then set 
\begin{align*}
    \tilde{x}_{uv} = 1 - \frac{ 2^j |N_H(u) \cap N_H(v) \cap S(j_u)|}{ \tau M_{uv}}.
\end{align*}
Additionally, for each $uv \in E$, if $\tilde{x}_{uv} \leq \epsilon$, we set $\tilde{x}_{uv} = 0$. For any $uv \in K$, if $\tilde{x}_{uv} \geq 1 - \epsilon$, we set $\tilde{x}_{uv} = 1$ (Lines \ref{alg:AllNormCCBySamplingRoundingstart}-\ref{alg:AllNormCCBySamplingsetlpvalueend}). Our main lemma regarding the approximate LP value $\tilde{x}_{uv}$ is given below:

\begin{restatable}{lemma}{lemmasmpaling}
\label{lemma:samplingfinalvalue}
%Given a graph $G = (V, E)$ and a subgraph $H = (V, E_H)$, where $E_H \subset E$, let $x_{uv} = 1 - \frac{|N_H(u) \cap N_H(v)|}{\max(d(u), d(v))}$ for any $u, v \in V$. Then 
With probability at least $1 - 1 / n^6$, Algorithm \ref{alg:pre-clusteringlpEfficient} outputs $\tilde{x}_{uv}$ such that
\begin{enumerate}[label=(\ref{lemma:samplingfinalvalue}\alph*)]% [label=(\alph*)]
    \item For any $uv \in E$, we have $\tilde{x}_{uv} \leq (1 + \epsilon) x_{uv}$. For any $uv \in {V \choose 2} \setminus E$, we have $1 - \tilde{x}_{uv} \leq (1 + \epsilon)(1 - x_{uv})$.
    \label{lemma:samplingfinalvaluelpvalue}
    \item For any $u, v, w \in V$, we have $\tilde{x}_{uv} + \tilde{x}_{uw} + 3\epsilon \geq \tilde{x}_{vw}$.
    \label{lemma:samplingfinalvaluetriangle}
\end{enumerate}
\end{restatable}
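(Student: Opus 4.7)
The plan is to derive both parts from a single concentration bound on the sampling estimator and then compose with the exact triangle inequality of Lemma~\ref{lemma:metric}. Fix $uv \in E \cup K$ and let $j$ be the integer with $M_{uv} \in [2^{j-1}, 2^j)$, $t := |N_H(u) \cap N_H(v)|$, and $p := \min(\tau/2^j, 1)$. Then $s := |N_H(u) \cap N_H(v) \cap S(j)|$ is a sum of $t$ independent Bernoulli$(p)$ variables with mean $pt$, so the algorithm's pre-rounding estimator $\tilde{x}_{uv}^{\mathrm{raw}} := 1 - 2^j s/(\tau M_{uv})$ targets $x_{uv} = 1 - t/M_{uv}$; indeed when $p < 1$ the factors $2^j/\tau$ and $1/p$ exactly cancel so the estimator is unbiased, and when $p = 1$ (i.e., $2^j \leq \tau$) the estimator is deterministic with the same $t/M_{uv}$ up to a multiplicative $2^j/\tau \leq 2M_{uv}/\tau$ slack that can be absorbed into the analysis. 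With $\tau = 400 \log n/\epsilon^5$, a hybrid additive/multiplicative Chernoff bound (Theorem~\ref{thm:chernoff}) yields, with failure probability $\leq n^{-10}$, an inequality of the shape
\[
    |\tilde{x}_{uv}^{\mathrm{raw}} - x_{uv}| \;\leq\; \epsilon \cdot \max\{1 - x_{uv},\,\epsilon\}.
\]
A union bound over the $|E \cup K| = \tilde{O}(m/\epsilon)$ evaluated pairs (whose size is controlled by Lemma~\ref{lemma:samplingnegativesizebound}) drives the total failure probability below $n^{-6}$.

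Given this per-pair bound, part~(a) follows by a short case analysis. For $uv \in E$ with $\tilde{x}_{uv}^{\mathrm{raw}} \leq \epsilon$, the rounding in Line~\ref{alg:AllNormCCBySamplingRoundingstart} sets $\tilde{x}_{uv} = 0$ and the inequality is immediate. Otherwise $\tilde{x}_{uv} > \epsilon$ forces $x_{uv}$ to be bounded away from $0$, and the concentration bound rearranges to $\tilde{x}_{uv} \leq (1+\epsilon) x_{uv}$. The companion inequality for $uv \in {V \choose 2}\setminus E$ is symmetric, with the rounding-up rule for $uv \in K$ (Line~\ref{alg:AllNormCCBySamplingRoundingmid}) trivialising the $\tilde{x}_{uv}^{\mathrm{raw}} \geq 1-\epsilon$ case. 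The remaining pairs $uv \in {V \choose 2}\setminus(E \cup K)$ are assigned $\tilde{x}_{uv} = 1$ by the last loop, and Lemma~\ref{lemma:samplingnegativesizebound} (applied contrapositively) ensures $x_{uv} > 1 - \epsilon$ for every such pair with high probability, so $1 - \tilde{x}_{uv} = 0 \leq (1+\epsilon)(1-x_{uv})$.

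Part~(b) is obtained by extracting the two-sided consequence $|\tilde{x}_{ab} - x_{ab}| \leq \epsilon$ from the Chernoff step (folding the rounding step into the constants via a slightly tighter concentration budget) and composing with Lemma~\ref{lemma:metric}: for any $u, v, w \in V$,
\[
    \tilde{x}_{uv} + \tilde{x}_{uw} \;\geq\; x_{uv} + x_{uw} - 2\epsilon \;\geq\; x_{vw} - 2\epsilon \;\geq\; \tilde{x}_{vw} - 3\epsilon,
\]
which is the desired approximate triangle inequality. Triples involving a pair outside $E \cup K$ are handled by the same observation used in part~(a): the true $x$ on such a pair is within $\epsilon$ of the truncated value $1$. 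The main obstacle will be the concentration step itself in the regime $t \ll M_{uv}$, where pure multiplicative Chernoff is too weak to achieve an $n^{-10}$ failure probability; one must invoke a mixed form such as $\Pr[|s-\mu|\geq \delta\mu+\lambda]\leq 2\exp(-\min(\delta\lambda,\delta^2\mu)/3)$ and balance $\delta$ and $\lambda$ against the aggressive choice $\tau=\Theta(\log n/\epsilon^5)$ so that both the additive slack and the $p=1$ boundary effect are absorbed within the $\epsilon$ budget.
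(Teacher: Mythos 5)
Your overall strategy coincides with the paper's: prove a per-pair Chernoff concentration for the sampling estimator, then compose $|\tilde{x}_{ab} - x_{ab}| \leq \epsilon$ with the exact triangle inequality for $x$ (Lemma~\ref{lemma:metric}) to get part~(b) with a $3\epsilon$ budget, invoking Lemma~\ref{lemma:samplingnegativesizebound} (contrapositively) to handle the pairs outside $E \cup K$. That last chain is fine.

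However, there are two concrete gaps in the concentration step. First, the unified bound you posit,
\[
    |\tilde{x}_{uv}^{\mathrm{raw}} - x_{uv}| \;\leq\; \epsilon \cdot \max\{1 - x_{uv},\,\epsilon\},
\]
is too weak to yield part~(a) for $+$edges. Part~(a) requires $\tilde{x}_{uv} - x_{uv} \leq \epsilon\, x_{uv}$, but your bound only gives $\tilde{x}_{uv} - x_{uv} \leq \epsilon \max\{1-x_{uv},\epsilon\}$, which is $\leq \epsilon\, x_{uv}$ only when $x_{uv} \geq \max\{\tfrac12, \epsilon\}$. For $x_{uv} \in (\epsilon, \tfrac12)$ the bound is strictly weaker than what is needed, and the rounding in Line~\ref{alg:AllNormCCBySamplingRoundingstart} only rescues the regime $\tilde{x}_{uv} \leq \epsilon$, which leaves the range $x_{uv} \in (\Theta(\epsilon), \tfrac12)$ uncovered. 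The paper instead conditions on the size of $t := |N_H(u) \cap N_H(v)|$ in three cases; in the middle case $t \in (\tfrac{\epsilon}{2}d(u), (1-\tfrac{\epsilon}{2})d(u))$, it converts the additive $\epsilon$ slack into a multiplicative one by the replacement $\epsilon \mapsto \epsilon\, t /((1-\tfrac{\epsilon}{2})d(u))$, which is valid precisely because $t < (1-\tfrac{\epsilon}{2})d(u)$, and this combines with the lower bound $t > \tfrac{\epsilon}{2} d(u)$ to make the Chernoff exponent $\Omega(\epsilon^5 d(u))$ small enough. Your single-shot formulation skips exactly this balancing, and the proposed ``hybrid'' tail bound $\Pr[|s-\mu|\geq \delta\mu+\lambda]\leq 2\exp(-\min(\delta\lambda,\delta^2\mu)/3)$ is not a standard inequality as written, so it does not obviously close the hole.

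Second, the treatment of the $p=1$ boundary is incorrect as stated. When $2^j \leq \tau$, the pseudocode's estimator is $\tilde{x}_{uv} = 1 - \tfrac{2^j}{\tau}\cdot\tfrac{t}{M_{uv}}$, which is \emph{not} close to $x_{uv} = 1 - t/M_{uv}$: since $2^j/\tau$ can be arbitrarily small (whenever $M_{uv} \ll \tau$), this introduces a bias of order $(1-2^j/\tau)(1-x_{uv})$ that cannot be absorbed into an $\epsilon$ budget. The paper's proof tacitly assumes the estimator is unbiased in this regime, i.e., that the rescaling factor is $1/p = \max(2^j/\tau, 1)$ rather than a hard-coded $2^j/\tau$; with that reading $\tilde{x}_{uv} = x_{uv}$ exactly when $p = 1$, and there is nothing to prove. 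Your attempt to justify the literal formula via a ``multiplicative $2^j/\tau$ slack'' does not go through.
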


The proof is similar to the sampling method proof presented in \cite{davies2023fast}. The full proof of Lemma \ref{lemma:samplingfinalvalue} is provided in Appendix \ref{sec:proofofsamplingfinalvalue}.

\subsection{Wrapping up: Proof of Theorem \ref{thm:efficientpre-clusteringlp}}

We can now prove Theorem \ref{thm:efficientpre-clusteringlp}. Once we construct the graph $H$ in Algorithm \ref{alg:pre-clusteringlpEfficient}, by setting $x_{uv} = 1 - \frac{|N_H(u) \cap N_H(v)|}{\max(d(u), d(v))}$, we achieve $\cost^k_x \leq (12.66 + 26\epsilon) \opt^k$. By Lemma \ref{lemma:samplingfinalvalue}, we know that at the end, the algorithm outputs the approximate LP value $\tilde{x}_{uv}$. Therefore, we can bound the cost as $\cost^k_{\tilde{x}} \leq (1 + \epsilon) \cost^k_x \leq (12.66 + 50\epsilon) \opt^k$, where $\opt^k$ is the cost of the optimal correlation clustering solution using the top-$k$ norm objective. The approximate triangle inequality is implied by Lemma \ref{lemma:samplingfinalvaluetriangle}. The final ratio is obtained by scaling $\epsilon$ to $\epsilon / 50$.

The running time of Algorithm \ref{alg:pre-clusteringlpEfficient} is derived from the discussion in this section. Constructing $H$ takes $O(1)$ rounds in the MPC model and $\tilde{O}(m /\epsilon^2)$ total work. By Lemma \ref{lemma:samplingnegativesizebound}, we know that we will only assign at most $\tilde{O}(\frac{m}{\epsilon})$ edges. Since the size of $|N_H(u) \cap N_H(v) \cap S(j_u)|$ is bounded by $\tau = O(\frac{\log n}{\epsilon^5})$, each edge takes $\tilde{O}(1/\epsilon^5)$ time to compute $\tilde{x}_{uv}$. In total, Algorithm \ref{alg:pre-clusteringlpEfficient} takes $\tilde{O}(\frac{m}{\epsilon^6})$ work and total memory. Note that all sampling and for-loops are naturally parallelized, so it only takes $O(1)$ rounds to output $\tilde{x}_{uv}$.

%We now argue that Algorithm~\ref{alg:pre-clusteringlpEfficient} can run in MPC model with $O(1)$ round and nearly-linear total work. Thus, it runs in nearly-linear time in the sequential model. 
%{\color{red} TODO}

\section{Rounding}
\label{sec:MPC-solve-rounding}
We will present a nearly linear time rounding algorithm. Furthermore, our algorithm only takes $\Tilde{O}(1)$ rounds in the MPC model. The purpose of this section is to show 

\thmroundingmain*

We emphasize that even if the LP values satisfy the exact triangle inequality, rather than an approximate triangle inequality, the $\epsilon$ terms in the approximate ratio will still be present. These $\epsilon$ terms arise from two sources: the approximate inequality itself and the inherent characteristics of our MPC algorithm.

Given Theorem \ref{thm:efficientpre-clusteringlp} and Theorem \ref{thm:roundingmaintheorem}, we are now able to show the main result of this paper.

\begin{proof}[Proof of Theorem \ref{thm:mainthmlp}]
We first run Algorithm \ref{alg:pre-clusteringlpEfficient}, which outputs $\tilde{x}_{uv}$ as input to Algorithm \ref{alg:mpcrounding}. Note that by Theorem \ref{thm:efficientpre-clusteringlp}, we know that for any $k \in [1, n]$, we have 
\begin{align*}
    \cost^k_{\tilde{x}} \leq (12.66 + \epsilon) \opt^k
\end{align*}
where $\opt^k$ is the cost of the optimal correlation clustering solution when using the top-$k$ norm objective. By Theorem \ref{thm:roundingmaintheorem}, we know that the final cluster $\calC$ satisfies
\begin{align*}
    \cost^k_\calC \leq (5 + 55\epsilon) \cost^k_{\tilde{x}} \leq (63.3 + O(\epsilon)) \opt^k.
\end{align*}

By Lemma \ref{lem:Topk2all}, we know that $\calC$ is a simultaneous $(63.3 + O(\epsilon))$-approximate clustering for all monotone symmetric norms.

For the number of rounds, Algorithm \ref{alg:pre-clusteringlp} takes $O(1)$ rounds, and Algorithm \ref{alg:mpcrounding} takes $O(\log^3 n / \epsilon)$ rounds, resulting in a total of $O(\log^3 n / \epsilon)$ rounds.

Algorithm \ref{alg:pre-clusteringlp} requires a total memory of $\tilde{O}(m / \epsilon^6)$ and a total work of $\tilde{O}(m / \epsilon^6)$ in the MPC model. By Lemma \ref{lemma:samplingnegativesizebound}, we know that $|K| = O(m \log n / \epsilon)$. Therefore, Algorithm \ref{alg:mpcrounding} requires a total memory of $\tilde{O}(m / \epsilon^2)$ and a total work of $\tilde{O}(m / \epsilon)$ in the MPC model, as established by Theorem \ref{thm:roundingmaintheorem}. In total, it requires a total memory of $\tilde{O}(m / \epsilon^6)$ and a total work of $\tilde{O}(m / \epsilon^6)$ in the MPC model.
\end{proof}
% Define 
% \begin{align*}
%     y_t(w) = \sum_{\substack{u \in V_t, uw \in E }} x_{uw} + \sum_{\substack{ u \in V_t , uw \in {V \choose 2} \setminus E}}( 1 - x_{uw} )
% \end{align*}
% We say a node $w$ is good if $L_t(w) \leq \epsilon_2 y_t(w)$. 

\subsection{Rounding Algorithm}
Assume that we are given an instance graph $G = (V, E)$ and a LP solution $( x_{uv} )_{u,v \in V}$ such that $x_{uv} + x_{uw} + \epsilon \geq x_{vw}$ for any $u, v, w \in V^3$.  Given a subgraph $V_t$ and node $w$, radius $r$, define the ball centering at $w$ with radius $r$ as $\ball_{V_t}(w, r) = \{ u \mid u \in V_t, x_{wu} \leq r \} $. Define  
\begin{align*}
    L_t(w) = \sum_{u \in \ball_{V_t}(w, r) }( r - x_{uw} )
\end{align*}

Note that $L_t(w) \geq r$ since $w$ itself is in $ \ball_{V_t}(w, r)$.

\paragraph{Algorithm Description}Our algorithm works as follows: in each round, we choose a set of cluster centers and choose the ball with radius $\twofive$ as a cluster. More precisely. At step $t$, $V_t$ is the set of unclustered vertices. We first compute $L_t^{\textmd{max}}$ to ensure that for each $u\in V_t$, we have $L_t(u) < (1+\epsilon) L_t^{\textmd{max}}$. For any node $u$, if $L_t(u) \geq L_t^{\textmd{max}}$, we will add $u$ to the set of candidate cluster centers $M_t$~(Line \ref{alg:mpcroundingfindcandidatestart}-\ref{alg:mpcroundingfindcandidateend}). Then, we compute cluster centers by adding vertices in the $M_t$ set to the $S_t$ set with probability $p_t$, where the more vertices in $M_t$ are in $\ball(radius=\twofive)$ with each other, the smaller the probability~(Line \ref{alg:mpcroundingfindclustercenterstart}-\ref{alg:mpcroundingfindclustercenterend}). After that, to avoid conflicts, we remove some cluster centers in $S_t$ if they are too close to each other, and derive the final cluster center set $H_t$~(Line \ref{alg:mpcroundingaviodconflicts}). Let $F_t = \ball_{V_t}(H_t,\twofive)$ be the nodes clustered at step $t$. Then we add each $u \in F_t$ to the cluster from $H_t$ with minimum ID. We will remove the clustered nodes and repeat the above process until all vertices are clustered.

\begin{algorithm}[ht!]
\caption{The Rounding Algorithm.\\
\textbf{Input}: Graph $G = (V, E)$, LP solution $( x_{uv} )_{u, v\in V}$ satisfying $x_{uv} + x_{uw} + \epsilon \geq x_{vw}$ for any $u, v, w \in V^3$.\\
\textbf{Output}: A function of clustering $\calC$; i.e., $C(u)=C(v)$ iff $u$ and $v$ belongs to the same cluster. %A partition of $V$, $\mathcal{S}$.
}
\label{alg:mpcrounding}
\begin{algorithmic}[1]
\Function {\textsc{ParallelRounding}}{$G = (V, E),  (x_{uv}) $}
    \State $V_1 \gets V$.
    \State $C(v)\gets \emptyset$ for all $v\in V$.
    \State $t\gets 1$.
    \While{$V_{t}\neq\emptyset$}
        \State $M_t \gets \emptyset, S_t\gets \emptyset$
        \label{alg:mpcroundingfindcandidatestart}
        \State $L_t^{\textmd{max}} = \textmd{max}\{ r (1+\epsilon)^j | r (1+\epsilon)^j \leq \textmd{max}_{w \in V_t}{L_t(w)}, \text{where $j$ is an integer} \}$ \label{alg:randomizedPIVOTLvalue}
        \For{$u \in V_{t}$ } \Comment{Find cluster center candidate}
            \If{$L_t(u) \ge L_t^{\textmd{max}}$ }.
            \State $M_{t} \gets M_{t} \cup \{ u \}$ 
            \EndIf
        \EndFor
        \label{alg:mpcroundingfindcandidateend}
        \State $\Delta_t \gets \textmd{max}_{u \in M_t}{|\ball_{V_t}(u, \twofive) \cap M_t|}$.\label{alg:mpcroundingfindclustercenterstart}
        \State $p_t \gets 1 / (2\Delta_t)$.
        %\State Let $S_t \gets \emptyset, H_t \gets \emptyset$
        \For{$u \in M_{t}$ }
        \State Add $u$ to $S_t$ with probability $p_t$ \Comment{Choose cluster centers in parallel.}
        \EndFor
        \label{alg:mpcroundingfindclustercenterend}
        \State $H_t = \{u \in S_t \mid \nexists\,v \in S_t \cap V_t \textmd{ such that } x_{uv} \leq \twofive \}$ \Comment{remove cluster centers that conflict.}
        \label{alg:mpcroundingaviodconflicts}
        \State $F_t \gets H_t \cup \ball_{V_t}(H_t, \twofive)$ \Comment{the set of settled vertices in round $t$.}
        \For{$u \in F_t$ }
            \State Let $v$ be the vertex with minimum ID among $\ball_{V_t}(u, \twofive) \cap H_t$.
            \State $C(u) \gets v$ \Comment{adding $u$ to the cluster of $v$.}
        \EndFor
        \State $V_{t + 1} \gets V_t \setminus F_t$, $t\gets t+1$
    \EndWhile
    \State \Return $\calC$
    \EndFunction
\end{algorithmic}
\end{algorithm}

\subsection{Approximate Ratio}
\paragraph{Analysis Framework} We will follow the proof of the sequential rounding algorithm~\cite{kalhan2019correlation}, which is also used in \cite{davies2023fast} for approximate triangle inequality. However, we suffer different difficulties that each round we choose multiple nodes whose $L_t$ value is within $\frac{1}{1 + \epsilon}$ times the maximum $L_t$ value. Let $LP(uv)$ be the LP cost of the edge $uv: LP(uv) = x_{uv}$ if $uv \in E$ and
$uv: LP(uv) = 1 - x_{uv}$ if $uv \in {V\choose2} \setminus E$. Let $ALG(uv) = \mathbb{1}(uv \textmd{ is in disagreement})$. We can define 
\begin{align*}
    \profit(u) = \sum_{uv \in E} LP(uv) - r \sum_{uv \in E} ALG(uv)
\end{align*}
where $r = \onefive - 2\epsilon$. We want to show that $\profit(u) \geq 0$ for all nodes. Let 
\begin{align*}
    \Delta E_t = \{ uv: u \in F_t \textmd{ or } v \in F_t\}
\end{align*}
be the set of edges removed at step $t$. Next, let 
\begin{align*}
    \profit_t(uv) =  \begin{cases}
            LP(uv) - r \cdot ALG(uv), &uv\in \Delta E_t\\
            0, &otherwise
        \end{cases}
\end{align*}
and the profit at step $t$ is defined as
\begin{align*}
    \profit_t(u) = \sum_{v \in V_t} \profit_t(uv) = \sum_{uv \in \Delta E_t} LP(uv) - r \cdot \sum_{uv \in \Delta E_t} ALG(uv)
\end{align*}
We will later show that $\profit_t(u) \geq 0$ for each $t$. Thus, we have 
\begin{align*}
    \profit(u) = \sum_{t} \profit_t(u) \geq 0
\end{align*}
and we can bound the cost of the algorithm by 
\begin{align*}
    \sum_{uv \in E} ALG(uv) \leq \frac{1}{r} \sum_{uv \in E} LP(uv) \leq (5 + 55 \epsilon) \sum_{uv \in E} LP(uv)
\end{align*}
which gives us the following lemma,
\begin{lemma}
\label{lem:correctnessofRoundnig}
Assume that $\epsilon \leq 1/ 20$. Given graph $G = (V, E)$ and a set of LP value $ (x_{uv})_{u,v \in V} $ such that $x_{uv} + x_{uw} + \epsilon \geq x_{vw}$ for any $u,v,w \in V^3$. Let $y_u = \sum_{uv \in E} x_{uv} + \sum_{uv \in {V \choose 2} \setminus E}(1 - x_{uv})$ be the LP disagreement for node $u$. Algorithm \ref{alg:mpcrounding} outputs a clustering $\calC$ such that for any node $u$,
\begin{align*}
    \cost_\calC(u) \leq (5 + 55\epsilon) y_u
\end{align*}
\end{lemma}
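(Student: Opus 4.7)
The plan is to follow the profit framework set up just above the lemma: show that for every vertex $u$ and every round $t$, the per-round profit satisfies $\profit_t(u) \ge 0$, then sum over $t$ to obtain $\profit(u) \ge 0$, which rearranges to $\cost_\calC(u) \le \tfrac{1}{r} y_u$. With $r = \tfrac{1}{5} - 2\epsilon$, the factor $5 + 55\epsilon$ in the statement will absorb $\tfrac{1}{r}$ together with the $O(\epsilon)$ losses coming from the approximate triangle inequality and the $(1+\epsilon)$ slack introduced when selecting $L_t^{\max}$.

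To prove $\profit_t(u) \ge 0$, I would fix a round $t$ and a vertex $u$ and case-split on whether $u \in H_t$, $u \in F_t \setminus H_t$, or $u \in V_{t+1}$. When $u \in H_t$, every edge incident to $u$ in $V_t$ enters $\Delta E_t$, and one checks edge by edge that $LP(uv) \ge r \cdot ALG(uv)$: $-$edges inside $\ball_{V_t}(u, 2/5)$ contribute $1 - x_{uv} \ge 3/5$, $+$edges going to $V_{t+1}$ contribute $x_{uv} > 2/5$, and the remaining edges have $ALG(uv) = 0$. When $u \in F_t \setminus H_t$, $u$ is absorbed into some center $c \in H_t$ with $x_{uc} \le 2/5$; I would transfer $c$'s ball-cost inequality to $u$ using the approximate triangle inequality $x_{uv} + x_{cv} + \epsilon \ge x_{uc}$ together with the approximate-maximum property $(1+\epsilon) L_t(c) \ge L_t(u)$ enforced by Line~\ref{alg:randomizedPIVOTLvalue}. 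When $u \in V_{t+1}$, only edges from $u$ to $F_t$ matter; I would partition $F_t$ by the host center of each absorbed vertex and, for each $c \in H_t$, use $L_t(u) \le (1+\epsilon) L_t(c)$ together with $x_{uc} > 2/5$ and $x_{cv} \le 2/5$ to charge the algorithmic cost of any $+$edge $uv$ against the LP weight of $uv$ itself or of $-$edges sitting inside $c$'s ball.

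The main obstacle is handling the parallel selection of centers. The sequential KMZ analysis picks a single exact maximizer of $L_t$, so the charging is clean; here $H_t$ may contain several approximate maximizers, which introduces two complications. First, the $L_t$-values of chosen centers are only within a factor $1+\epsilon$ of the true maximum, forcing an $O(\epsilon)$ slack to propagate through every ball inequality and requiring every equality in the KMZ argument to be relaxed to an $(1+\epsilon)$-inequality. Second, a vertex $u \in V_{t+1}$ could in principle be close to several balls at once, which risks double-counting its algorithmic cost. The latter difficulty is controlled by the construction of $H_t$ in Line~\ref{alg:mpcroundingaviodconflicts}: centers are pairwise at distance $> 2/5$ in the $x$-metric, and the minimum-ID tie-breaking rule assigns each absorbed vertex to a unique host center. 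Combining the three cases and carefully accumulating the $\epsilon$-terms then yields the claimed $(5 + 55\epsilon)$ per-vertex ratio.
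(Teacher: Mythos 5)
Your high-level plan matches the paper's: define the profit $\profit_t(u)$ with $r = \onefive - 2\epsilon$, show $\profit_t(u)\ge 0$ for every round, and sum to get the $1/r$ bound. However, your case split on the \emph{role} of $u$ (center, absorbed, or surviving) diverges from the paper's, and the first case as you describe it does not go through. The paper instead sorts $H_t$ by ID, defines $w$ as the first cluster center with $x_{uw}\le\fourfive$, and cases on $x_{uw}$: non-existent, $\le\onefive$, in $(\onefive,\twofive]$, or in $(\twofive,\fourfive]$. The first two cases admit an edge-by-edge non-negative profit bound (Lemmas~\ref{lem:firstcaseprofit} and~\ref{lem:secondcaseprofit}); the last two require the amortization $P_{\textrm{high}}(u)\ge(1+\epsilon)L_t(w)$ and $P_{\textrm{low}}(u)\ge-L_t(u)$ (Lemmas~\ref{lem:highprofit} and~\ref{lem:lowprofit}).

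The concrete gap is in your claim that for $u\in H_t$ one can check edge by edge that $LP(uv)\ge r\cdot ALG(uv)$. Two centers in $H_t$ are only guaranteed to be at distance $>\twofive$ apart, so the radius-$\twofive$ balls around distinct centers can overlap. A vertex $v$ with $x_{uv}\le\twofive$ and $uv\in E$ may have a lower-ID center $u'\in H_t$ with $x_{u'v}\le\twofive$, in which case $v$ is assigned to $u'$ rather than to $u$, giving $ALG(uv)=1$ with $LP(uv)=x_{uv}$ possibly much smaller than $r$ (consistency with the triangle inequality only forces $x_{uu'}\in(\twofive,\threefive-\epsilon]$, so $x_{uv}$ can be near $0$). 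That edge has \emph{negative} profit, and no per-edge bound rescues it; it must be charged against the ball profit exactly as in the paper's $x_{uw}\in(\twofive,\fourfive]$ case (where here $w$ is an earlier center at distance in $(\twofive,\threefive-\epsilon]$ from $u$). Likewise, the sketch for $u\in V_{t+1}$ of "for each $c\in H_t$ use $L_t(u)\le(1+\epsilon)L_t(c)$" risks over-charging: the paper only invokes the profit of a \emph{single} ball $\ball_{V_t}(w,r)$ to cover $P_{\textrm{low}}(u)\ge -L_t(u)$, and its comparison $L_t(u)\le(1+\epsilon)L_t(w)$ is used exactly once. To repair your proof you would essentially need to abandon the role-based split and adopt the distance-to-$w$ split.
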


The remaining part of this section is to show $\profit_t(u) \geq 0$. We first show that we can bound the profit of any $-$edge $uv$. Recall that $r = \onefive - 2\epsilon$.

\begin{lemma}[Analogue of 5.3 of \cite{kalhan2019correlation}] 
\label{lem:roundingcostfornegativeedge}
 Let $u, v \in V_t$ and $uv \in {V \choose 2} \setminus E$, then $\profit_t(uv) \geq 0$.
\end{lemma}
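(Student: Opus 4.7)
The plan is to case-split on the value of $ALG(uv)$. If $ALG(uv) = 0$---that is, $u$ and $v$ end up in different clusters in the final output $\calC$---then $\profit_t(uv) = LP(uv) = 1 - x_{uv} \geq 0$ and the claim is immediate.

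So suppose $ALG(uv) = 1$, i.e., $C(u) = C(v)$. Once a vertex is removed from the active set, its cluster label is frozen, and the sets $F_1, F_2, \ldots$ are pairwise disjoint; hence two vertices can share a cluster label only if they are both settled in the same round. Since we assume $u, v \in V_t$ and $uv \in \Delta E_t$ (so at least one of $u,v$ lies in $F_t$), the only round in which both can be settled is round $t$. By the construction $F_t = H_t \cup \ball_{V_t}(H_t, \twofive)$ and the min-ID tie-breaking, $C(u) = C(v) = w$ must hold for some $w \in H_t$ with $x_{uw} \leq \twofive$ and $x_{vw} \leq \twofive$.

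Applying the approximate triangle inequality $x_{uv} \leq x_{uw} + x_{wv} + \epsilon$ then yields $x_{uv} \leq \fourfive + \epsilon$, so $LP(uv) = 1 - x_{uv} \geq \onefive - \epsilon$. Since $r = \onefive - 2\epsilon$, we conclude
\[
    \profit_t(uv) \;=\; LP(uv) - r \cdot ALG(uv) \;\geq\; (\onefive - \epsilon) - (\onefive - 2\epsilon) \;=\; \epsilon \;\geq\; 0.
\]

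There is no serious obstacle here. The key insight is that the $2\epsilon$ slack built into the ratio $r$ is precisely calibrated to absorb the $\epsilon$ slack in the approximate triangle inequality, leaving $\epsilon$ to spare. The only mildly delicate point relative to the sequential argument in \cite{kalhan2019correlation} is verifying that the parallel candidate-selection machinery ($M_t, S_t, H_t$) cannot co-cluster two vertices without giving them a common cluster center within radius $\twofive$; but this is immediate from the definitions of $F_t$ and $C(\cdot)$.
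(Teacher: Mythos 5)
Your proof is correct and follows the same two-case split the paper uses: $ALG(uv)=0$ is immediate, and $ALG(uv)=1$ forces a common center $w\in H_t$ with $x_{uw},x_{vw}\le\twofive$, after which the approximate triangle inequality gives $x_{uv}\le\fourfive+\epsilon$ and the slack in $r=\onefive-2\epsilon$ absorbs the $\epsilon$ error. The extra justification you give for why two co-clustered vertices must share a center chosen in round $t$ is correct and slightly more explicit than what the paper writes, but it is the same argument.
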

\begin{proof}
For any $-$edge $uv \in \Delta E_t$, if $uv$ goes to different clusters, then $\profit_t(uv) \geq 0$ since $ALG(uv) = 0$. Otherwise, $u,v$ will go to the same cluster and there must be a cluster center $w$ such that $x_{wu} \leq \twofive$ and $x_{wv} \leq \twofive$. By the approximate triangle inequality, we have $x_{uv} \leq x_{wu} + x_{wv} + \epsilon \leq \fourfive + \epsilon$. Therefore, the profit of $uv$ is at least $\profit_t(uv) \geq 1 - x_{uv} -r \geq \onefive - \epsilon - r \geq 0$. 
\end{proof}

To show that $\profit_t(u) \geq 0$, we take a different view on adding nodes to the cluster center: We sort cluster centers $H_t$ by the ID in increasing order. Let $w_1, w_2, ...,$ be the cluster centers from $H_t$ after sorting, then we will add nodes $\ball_{V_t}(w_1, \twofive)$ to $w_1$'s cluster, remove the clustered nodes, and repeat with the next center on the unclustered nodes. The whole process terminates once we process all cluster centers. 

Let $w$ be the first cluster center such that $x_{uw} \leq \fourfive$ when we do the above process. If all nodes in $\ball_{V_t}(w, r)$ are still unclustered, \cite{kalhan2019correlation} already shows that the cost for short $+$edges of $u$ can be covered by the profit of $\ball_{V_t}(w, r)$. The difficulty of the proof mainly comes from the fact that some nodes from  $\ball_{V_t}(w, r)$ may have been clustered. We will divide $x_{uw}$ into several cases and show that in all cases, we have $\profit_t(u) \geq 0$.

The first two cases are when $w$ does not exist or $x_{uw} \leq \onefive$, we will show that under these two cases, $\profit_t(uv) \geq 0$ for any edge $uv \in \Delta E_t$. Note that by Lemma \ref{lem:roundingcostfornegativeedge}, we have $\profit(uv) \geq 0$ for any $uv \in {V \choose 2} \setminus E$. Therefore, we only need to consider $+$edges.

\begin{lemma}
\label{lem:firstcaseprofit}
If the above $w$ does not exist, then $\profit_t(uv) \geq 0$.
\end{lemma}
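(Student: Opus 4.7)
The plan is to dispose of $-$edges via the already-proved Lemma \ref{lem:roundingcostfornegativeedge} and to handle $+$edges by a single application of the approximate triangle inequality against a cluster center close to $v$. My first step would be to observe that the nonexistence of $w$ forces $u \notin F_t$: since $F_t = H_t \cup \ball_{V_t}(H_t, \twofive)$, membership in $F_t$ would require some center within distance $\twofive$, and hence within distance $\fourfive$, of $u$. Consequently, any edge $uv \in \Delta E_t$ must have $v \in F_t$, and so there is some $w' \in H_t$ with $x_{vw'} \leq \twofive$.

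For $-$edges Lemma \ref{lem:roundingcostfornegativeedge} directly gives $\profit_t(uv) \geq 0$, so the substantive case is $uv \in E$. Here I would apply the approximate triangle inequality to the triple $(u, v, w')$, using the standing assumption $x_{uw'} > \fourfive$, to obtain
\begin{align*}
    x_{uv} \;\geq\; x_{uw'} - x_{vw'} - \epsilon \;>\; \tfrac{4}{5} - \tfrac{2}{5} - \epsilon \;=\; \tfrac{2}{5} - \epsilon.
\end{align*}
Next I would verify that $ALG(uv) = 1$: the endpoint $v$ is assigned in round $t$ to a cluster center drawn from $H_t$, whereas $u$ stays in $V_{t+1}$ and can only be attached in some later round $t' > t$ to a center in $H_{t'}$ (or $u$ becomes its own center). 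In either case $C(u) \neq C(v)$, so the $+$edge $uv$ is in disagreement. Combining,
\begin{align*}
    \profit_t(uv) \;=\; x_{uv} - r \;>\; \bigl(\tfrac{2}{5} - \epsilon\bigr) - \bigl(\tfrac{1}{5} - 2\epsilon\bigr) \;=\; \tfrac{1}{5} + \epsilon \;>\; 0.
\end{align*}

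I do not foresee any substantive obstacle: the entire argument is a chain of one-step implications from the hypothesis and the approximate triangle inequality. The only point that genuinely demands care is verifying $C(u) \neq C(v)$, which reduces to the observation that the sets $H_t$ from different rounds are drawn from disjoint pools of unclustered vertices, so $v$'s center (fixed in round $t$) cannot coincide with $u$'s center (assigned in a strictly later round).
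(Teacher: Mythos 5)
Your proof is correct and follows the same approach as the paper: observe that $u$ cannot be settled this round, find a center $w' \in H_t$ close to $v$, and apply the approximate triangle inequality to get $x_{uv} > \twofive - \epsilon \geq r$, hence $\profit_t(uv) \geq x_{uv} - r \geq 0$. The only difference is that you also explicitly verify $ALG(uv) = 1$, which is true but unnecessary, since for a $+$edge one always has $\profit_t(uv) = x_{uv} - r \cdot ALG(uv) \geq x_{uv} - r$ regardless of the value of $ALG(uv)$; the paper just uses this weaker bound directly.
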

\begin{proof}
Consider an arbitrary $+$edge $uv \in E$, if $v$ is clustered at $t$-round, we know there must be some cluster center $a \in H_t$ such that $x_{va} \leq \twofive$. On the other hand, we know that $x_{ua} \geq \fourfive$, so $x_{uv} \geq x_{ua} - x_{va} - \epsilon \geq \twofive - \epsilon \geq r$. Thus, for any $+$edge such that $uv \in \Delta E_t$, we have $\profit_t(uv) \geq x_{uv} - r \geq 0$.
\end{proof}

\begin{lemma}
\label{lem:secondcaseprofit}
If $x_{uw} \leq \onefive$, then $\profit_t(uv) \geq 0$.
\end{lemma}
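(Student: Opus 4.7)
The plan is to show that for any $+$edge $uv \in \Delta E_t$ with $u,v \in V_t$, we have $x_{uv} \ge r = \tfrac15 - 2\eps$; combined with Lemma~\ref{lem:roundingcostfornegativeedge} for $-$edges, this gives $\profit_t(uv) \ge 0$ in all cases. If $u$ and $v$ are placed in the same cluster then $ALG(uv)=0$ and $\profit_t(uv) = x_{uv} \ge 0$ trivially, so assume $u,v$ end up in different clusters. Note that since $x_{uw} \le \tfrac15 < \tfrac25$, $u$ is added to $F_t$ and joins the cluster of $w$, because $w$ is the lowest-ID element of $H_t \cap \ball_{V_t}(u,\tfrac45)$ and hence also of $H_t \cap \ball_{V_t}(u,\tfrac25)$.

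I will split into two cases depending on $v$'s fate. \emph{Case 1: $v \notin F_t$.} Then $v$ lies outside every ball $\ball_{V_t}(w', \tfrac25)$ with $w' \in H_t$, and in particular $x_{vw} > \tfrac25$. By the approximate triangle inequality,
\[
x_{uv} \;\ge\; x_{vw} - x_{uw} - \eps \;>\; \tfrac25 - \tfrac15 - \eps \;=\; \tfrac15 - \eps \;\ge\; r.
\]
\emph{Case 2: $v \in F_t$ and $v$ joins the cluster of some $w' \in H_t$ with $w' \ne w$.} I split this further by whether $w$ is still a candidate center for $v$. If $x_{vw} > \tfrac25$, the same computation as in Case~1 gives $x_{uv} > \tfrac15 - \eps \ge r$. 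If $x_{vw} \le \tfrac25$, then $w \in H_t \cap \ball_{V_t}(v,\tfrac25)$, so the fact that $v$ was assigned to $w'$ (the lowest-ID such center) forces $\mathrm{ID}(w') < \mathrm{ID}(w)$. But $w$ was chosen as the lowest-ID element of $H_t \cap \ball_{V_t}(u,\tfrac45)$, so necessarily $x_{uw'} > \tfrac45$, and then
\[
x_{uv} \;\ge\; x_{uw'} - x_{vw'} - \eps \;>\; \tfrac45 - \tfrac25 - \eps \;=\; \tfrac25 - \eps \;\ge\; r.
\]

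In every case, $x_{uv} \ge r$, so $\profit_t(uv) = x_{uv} - r \cdot ALG(uv) \ge 0$. The only subtlety is the bookkeeping for how $u$ and $v$ select their cluster centers by ID; the ID-tiebreaking rule plus the defining property of $w$ (lowest-ID center within distance $\tfrac45$ of $u$) is exactly what lets me push $x_{uw'}$ above $\tfrac45$ in the trickiest sub-case, and the rest is just two applications of the approximate triangle inequality. I expect the primary obstacle, should it arise, is to confirm that the tiebreaking interpretation of Line~\ref{alg:mpcroundingaviodconflicts}--$F_t$ assignment in Algorithm~\ref{alg:mpcrounding} is consistent with this use of $w$, but this matches the ``first cluster center by ID'' viewpoint set up just before the lemma statement.
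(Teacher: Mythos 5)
Your proof is correct, and it is essentially the contrapositive of the paper's argument, organized by a finer case split on what happens to $v$ rather than on the size of $x_{uv}$. The paper fixes the dichotomy $x_{uv}\geq r$ versus $x_{uv}< r$, and in the latter case proves directly that both $u$ and $v$ join $w$'s cluster (so $ALG(uv)=0$). You instead fix the dichotomy "same cluster" versus "different clusters," and in the latter case enumerate the ways $v$ can be placed (not clustered this round; clustered by some $w'\neq w$ with $x_{vw}>\twofive$; clustered by a lower-ID $w'$ with $x_{vw}\leq\twofive$), bounding $x_{uv}\geq r$ in each. Both proofs hinge on the same two facts --- that $u$ is assigned to $w$ because $w$ is the lowest-ID center in $H_t\cap\ball_{V_t}(u,\fourfive)\supseteq H_t\cap\ball_{V_t}(u,\twofive)$, and that any $w'\in H_t$ with lower ID than $w$ has $x_{uw'}>\fourfive$ --- together with two applications of the approximate triangle inequality. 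The paper's version is a bit shorter since assuming $x_{uv}<r$ immediately rules out $v$ joining any earlier center; your version is more explicit about which configurations are possible, which some readers may find easier to audit. The only blemish is your opening sentence ("The plan is to show ... we have $x_{uv}\geq r$"): you do not in fact show this for all $+$edges in $\Delta E_t$, only for those where $u$ and $v$ end up separated, and the sentence should be phrased accordingly. The argument that follows is correct.
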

\begin{proof}
Consider an arbitrary $+$edge $uv \in E$. If $x_{uv} \geq r$, then $\profit_t(uv) \geq x_{uv} - r \geq 0$. Otherwise, we have $x_{uv} \leq r$. 

We claim that when we consider $w$ in the above clustering process, $v$ is still unclustered. That is because $w$ is the first cluster center such that $x_{uw} \leq \fourfive$, for any cluster center $\hat{w}$ that considered before $w$, we have $x_{\hat{w}v} \geq x_{u\hat{w}} - x_{uv} - \epsilon \geq \fourfive - r - \epsilon > \twofive$ by approximate triangle inequality. At the same time,  $\hat{w}$ only includes the $\ball_{V_t}(\hat{w}, \twofive)$ into its cluster, so $\hat{w}$ will not include $v$ to its cluster. Furthermore, we claim that $w$ will include $v$ into its cluster, that's because $x_{vw} \leq x_{uw} + x_{uv} + \epsilon \leq \onefive + r + \epsilon \leq \twofive$.

On the other hand, $w$ also includes $u$ into its cluster since all previous cluster centers are at least $\fourfive$ away from $u$ and $x_{uw} \leq \onefive$. Combining them together, we know that $uv$ will be in the same cluster and $ALG(uv) = 0$, which implies $\profit_t(uv) \geq x_{uv} \geq 0$.
\end{proof}

By Lemma \ref{lem:firstcaseprofit} and \ref{lem:secondcaseprofit}, we know that when $w$ does not exist or $x_{uw} \leq \onefive$, we have $\profit_t(u) = \sum_{uv \in \Delta E_t} \profit_t(uv) \geq 0$.  For the remaining cases, we will follow the blue-print of the proof in~\cite{kalhan2019correlation}. Note that $\profit_{t}(uv) < 0$ only if $uv \in E$ and $x_{uv} \leq r$. We will charge the negative profit to the profit of $\ball_{V_t}(w,r)$. More precisely, let 

\begin{align*}
    \profit_t(u) = \underbrace{\sum_{v \in \ball_{V_t}(w, r)}\profit_t(uv)}_{P_{\textrm{high}}(u)} + \underbrace{\sum_{v \in V_t \setminus \ball_{V_t}(w, r)}\profit_t(uv)}_{P_{\textrm{low}}(u)}
\end{align*}
We will show that $P_{\textrm{high}}(u) \geq (1+\epsilon) L_t(w)$ and $P_{\textrm{low}}(u) \geq -L_t(u)$. Note that Algorithm \ref{alg:mpcrounding} always chooses nodes whose $L_t$ value is within $\frac{1}{1 + \epsilon}$ times of the maximum $L_t$ value as cluster center, so the profit $P_{\textrm{low}}(u)$ can be covered by the profit of $\ball_{V_t}(w, r)$, which is at least $P_{\textrm{high}}(u)$. The following two lemmas connects $P_{\textrm{high}}(u)$ with $(1+\epsilon) L_t(w)$. We again have two different cases for $x_{uw}$ and show in each case, we have $\profit_t(uv) \geq (1+\epsilon)(r - x_{vw})$.
\begin{lemma}
\label{lem:thirdcaseprofit}
If $x_{uw} \in (\twofive, \fourfive]$, for any $v \in \ball_{V_t}(w, r)$, we have $\profit_{t}(uv) \geq (1 + \epsilon)(r - x_{vw})$.
\end{lemma}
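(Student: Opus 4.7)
The plan is to split on the sign of $uv$. In both subcases I apply the approximate triangle inequality with $w$ as the hub: combined with $x_{uw} \in (\twofive, \fourfive]$ and $x_{vw} \leq r = \onefive - 2\epsilon$, this pins $x_{uv}$ into the window $(\onefive + \epsilon,\, 1 - \epsilon]$, so every edge relevant to this lemma is comfortably in the middle of $[0,1]$.

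For a $+$edge $uv \in E$, I would simply chain $\profit_t(uv) \geq x_{uv} - r \geq x_{uw} - x_{vw} - \epsilon - r > \onefive + \epsilon - x_{vw}$, and subtracting the target $(1+\epsilon)(r - x_{vw})$ reduces to the trivial inequality $3\epsilon \geq \epsilon(r - x_{vw})$, which holds since $r - x_{vw} \leq \onefive$.

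The non-trivial subcase is a $-$edge $uv \notin E$: the naive bound $\profit_t(uv) \geq (1 - x_{uv}) - r$ can be as small as $\epsilon - x_{vw}$, which does not beat the target when $x_{vw}$ is close to $r$. The main idea is to show that $u$ and $v$ cannot land in a common cluster, which would force $\mathrm{ALG}(uv) = 0$ and hence $\profit_t(uv) = 1 - x_{uv}$. I would argue this by contradiction, using the min-$\mathrm{ID}$ tie-breaking and the definition of $w$: if $C(u) = C(v) = \tilde w \in H_t$, then $x_{u\tilde w}, x_{v\tilde w} \leq \twofive$, so $\tilde w$ is within $\fourfive$ of $u$; since $x_{uw} > \twofive$ forces $\tilde w \neq w$, the minimality of $\mathrm{ID}(w)$ among centers in $H_t$ within $\fourfive$ of $u$ yields $\mathrm{ID}(\tilde w) > \mathrm{ID}(w)$. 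On the other hand, $x_{vw} \leq r < \twofive$ makes $w$ itself a valid center within $\twofive$ of $v$, so the min-$\mathrm{ID}$ rule applied at $v$ forces $\mathrm{ID}(\tilde w) \leq \mathrm{ID}(w)$, a contradiction. Once same-cluster is excluded, I have $\profit_t(uv) = 1 - x_{uv} \geq \onefive - x_{vw} - \epsilon$, and comparison with $(1+\epsilon)(r - x_{vw})$ reduces to $\epsilon \geq \epsilon(r - x_{vw})$.

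The main obstacle is precisely this contradiction argument in the $-$edge case: it is the only place where the min-$\mathrm{ID}$ tie-breaking in Algorithm~\ref{alg:mpcrounding} plays an essential role, and without it the bad scenario $C(u) = C(v)$ would leave a gap of order $r$ between $\profit_t(uv)$ and the target $(1+\epsilon)(r - x_{vw})$ when $x_{vw}$ is small.
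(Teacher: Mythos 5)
Your proof is correct and follows essentially the same route as the paper: the same split on the sign of $uv$, the same triangle-inequality chains through $w$, and for the $-$edge case the same key fact that $u$ and $v$ must land in different clusters, yielding $\mathrm{ALG}(uv)=0$ and $\profit_t(uv)=1-x_{uv}$. The only stylistic difference is that the paper derives cluster disjointness directly from its sequential ball-carving reformulation (processing $H_t$ in increasing ID order), whereas you phrase the same fact as a contradiction between the min-ID rule applied at $u$ (forcing $\mathrm{ID}(\tilde w)>\mathrm{ID}(w)$) and at $v$ (forcing $\mathrm{ID}(\tilde w)\le\mathrm{ID}(w)$); these are equivalent.
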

\begin{proof}
If $uv \in E$, we have 
\begin{align*}
    x_{uv} &\geq x_{uw} - x_{vw} - \epsilon \geq \twofive - \epsilon - x_{vw}.
\end{align*}
Note that $v$ will be clustered into some node since $v \in \ball_{V_t}(w, r)$, so $uv \in \Delta E_t$ and the profit of $uv$ is at least 
\begin{align*}
    \profit_{t}(uv) &\geq x_{uv} - r \geq \onefive + \epsilon - x_{vw} \\
    &\geq (1 + \epsilon)\onefive - (1 + \epsilon)x_{vw} \\
    &\geq (1 + \epsilon)(r - x_{vw}).
\end{align*}
On the other side, if $uv \in {V \choose 2} \setminus E$, we have 
\begin{align*}
    x_{uv} &\leq x_{uw} + x_{vw} + \epsilon \leq \fourfive + \epsilon + x_{vw}.
\end{align*}
Since $v \in \ball_{V_t}(w, \twofive)$, $v$ must be clustered after we consider $w$, while $u$ is still not clustered at this time. Therefore, we know that $uv$ must be in different clusters and $ALG(uv) = 0$ and 
\begin{align*}
    \profit_{t}(uv) &\geq 1 - x_{uv} \geq \onefive - \epsilon - x_{vw} \\
    &\geq (1 + \epsilon)(\onefive - 2\epsilon) - (1 + \epsilon)x_{vw} \\
    &\geq (1 + \epsilon)(r - x_{vw}).
\end{align*}
In either case, we have $\profit_{t}(uv) \geq (1 + \epsilon)(r - x_{vw})$.

\end{proof}

The last case is $x_{uw} \in (\onefive, \twofive]$.
\begin{lemma}
\label{lem:fourthcaseprofit}

If $x_{uw} \in (\onefive, \twofive]$, for any $v \in \ball_{V_t}(w, r)$, we have $\profit_{t}(uv) \geq (1 + \epsilon)(r - x_{vw})$.
\end{lemma}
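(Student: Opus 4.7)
The plan is to mirror the analysis of Lemma~\ref{lem:thirdcaseprofit} by first pinning down, in the reordered processing of the cluster centers in $H_t$ (by increasing ID), the clusters that $u$ and $v$ end up in, and then bounding $\profit_t(uv)$ case-by-case on the sign of $uv$.

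\emph{First, locate $u$.} Since $w$ is the first cluster center with $x_{uw}\le \fourfive$, for every earlier center $\hat w$ we have $x_{u\hat w} > \fourfive > \twofive$, so $u$ is not in the ball of any such $\hat w$ and therefore is still unclustered when $w$ is processed. Because $x_{uw}\le \twofive$, $u$ then enters $w$'s cluster. \emph{Locate $v$.} For $v\in \ball_{V_t}(w,r)$ there are two subcases: \textbf{Case A}, $v$ is still unclustered when $w$ is processed, in which case $v$ also joins $w$'s cluster; and \textbf{Case B}, $v$ has already been assigned to some earlier center $\hat w$, in which case $x_{v\hat w}\le \twofive$ and (by the argument above) $x_{u\hat w} > \fourfive$, so $u$ and $v$ end up in different clusters.

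\emph{Bounding the profit in Case A.} Here $u$ and $v$ share a cluster. If $uv\in E$ then $ALG(uv)=0$ and $\profit_t(uv)=x_{uv}$; the approximate triangle inequality gives $x_{uv}\ge x_{uw}-x_{vw}-\epsilon > \onefive - x_{vw}-\epsilon$, and a direct arithmetic check using $r=\onefive-2\epsilon$ shows $\onefive-x_{vw}-\epsilon\ge (1+\epsilon)(r-x_{vw})$ for $\epsilon$ small. If $uv\notin E$ then $ALG(uv)=1$ and $\profit_t(uv)=1-x_{uv}-r$; the triangle inequality yields $x_{uv}\le x_{uw}+x_{vw}+\epsilon\le \twofive+x_{vw}+\epsilon$, so $\profit_t(uv)\ge \twofive+\epsilon-x_{vw}$, which again exceeds $(1+\epsilon)(r-x_{vw})$.

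\emph{Bounding the profit in Case B.} Here $u,v$ lie in different clusters. The bounds $x_{u\hat w}>\fourfive$ and $x_{v\hat w}\le \twofive$ give $x_{uv}> \twofive-\epsilon$ via the approximate triangle inequality. If $uv\in E$ then $\profit_t(uv)=x_{uv}-r>\onefive+\epsilon$, which easily exceeds $(1+\epsilon)(r-x_{vw})\le (1+\epsilon)r$. If $uv\notin E$ then $ALG(uv)=0$ and $\profit_t(uv)=1-x_{uv}\ge \threefive-\epsilon-x_{vw}$ (using $x_{uv}\le x_{uw}+x_{vw}+\epsilon\le \twofive+x_{vw}+\epsilon$), and a short calculation verifies this is at least $(1+\epsilon)(r-x_{vw})$.

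The main obstacle is the Case A, $uv\in E$ subcase: there is no $-r$ absorption in $\profit_t(uv)=x_{uv}$, so we must squeeze the bound out of the strict inequality $x_{uw}>\onefive$ (the case $x_{uw}=\onefive$ is already covered by Lemma~\ref{lem:secondcaseprofit}). Once that subcase is handled, all remaining subcases reduce to straightforward inequalities in $\epsilon$ with $r=\onefive-2\epsilon$ plugged in, completing the proof that $\profit_t(uv)\ge (1+\epsilon)(r-x_{vw})$ for every $v\in\ball_{V_t}(w,r)$.
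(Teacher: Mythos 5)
Your proof is correct and follows essentially the same route as the paper's: locate $u$ in $w$'s cluster, split on the sign of $uv$ and on whether $v$ was already clustered by an earlier center, and close each subcase with the approximate triangle inequality plus $r=\onefive-2\epsilon$; the paper treats the $-$edge case with a single unconditional bound ($\profit_t(uv)\ge 1-x_{uv}-r$) rather than splitting it, but the substance is identical. One small remark: your observation that the strictness $x_{uw}>\onefive$ is what "squeezes out" the Case A, $uv\in E$ bound is not actually needed — the chain $\onefive-\epsilon-x_{vw}\ge(1+\epsilon)r-x_{vw}\ge(1+\epsilon)(r-x_{vw})$ goes through with the non-strict inequality, using only $x_{vw}\ge 0$ and $(1+\epsilon)r\le\onefive-\epsilon$.
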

\begin{proof}
Note that $v \in \ball_{V_t}(w, r)$, so $uv \in \Delta E_t$. If $uv \in {V \choose 2} \setminus E$, we have 
\begin{align*}
    x_{uv} &\leq x_{uw} + x_{vw} + \epsilon \leq \twofive + r + \epsilon \leq \threefive
\end{align*}
Thus, the profit for $uv \in {V \choose 2} \setminus E$ is at least 
\begin{align*}
    \profit_t(uv) &\geq 1 - x_{uv} - r \geq \twofive - r \\
    &\geq \onefive + 2\epsilon \geq (1 + \epsilon)r 
\end{align*}
If $uv \in E$, based on whether $v$ is clustered when we consider $w$, we have two cases. If $v$ has been clustered when we consider $w$, assume that $v$ is included into $\hat{w}$'s cluster, we know that $x_{u\hat{w}} \geq \fourfive$ and $x_{v\hat{w}} \leq \twofive$. By approximate triangle inequality, we have  
\begin{align*}
    x_{uv} \geq x_{u\hat{w}} - x_{v\hat{w}} - \epsilon \geq \fourfive - \twofive - \epsilon \geq \twofive - \epsilon
\end{align*}
and the profit for $uv$ is 
\begin{align*}
    \profit_t(uv) &\geq x_{uv} - r \geq \twofive - \epsilon - (\onefive - 2\epsilon) \\
    &= \onefive + \epsilon \geq (1 + \epsilon)r
\end{align*}
The left case is $v$ is still unclustered when we consider $w$, so $w$ will add $v$ to its cluster since $v \in \ball_{V_t}(w, r)$. At the same time, $w$ also includes $u$ to its cluster because $x_{uw} \leq \twofive$. Therefore, $ALG(uv) = 0$. We have $x_{uv}$ satisfying
\begin{align*}
    x_{uv} &\geq x_{uw} - x_{vw} - \epsilon \geq \onefive - \epsilon - x_{vw} \\
    &\geq (1 + \epsilon)r - x_{vw} \geq (1 + \epsilon)(r - x_{vw})
\end{align*}
and the profit of $uv$ is at least 
\begin{align*}
    \profit_t(uv) \geq x_{uv} \geq (1 + \epsilon)(r - x_{vw})
\end{align*}
In all cases, we have $\profit_t(uv) \geq (1 + \epsilon)(r - x_{vw})$.
\end{proof}

Once we bound the profit for each edge $uv$ such that $v \in \ball_{V_t}(w, r)$, we now are able to lower bound the $P_{\textrm{high}}(u)$.

\begin{lemma}
\label{lem:highprofit}
If $x_{uw} \in (\onefive, \fourfive]$, then $P_{\textrm{high}}(u) \geq (1 + \epsilon) L_t(w)$.
\end{lemma}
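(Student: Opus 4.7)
The plan is to observe that this lemma is an immediate aggregation of the two preceding per-edge bounds. The hypothesis $x_{uw} \in (\onefive, \fourfive]$ is precisely the disjoint union of the ranges handled by Lemma~\ref{lem:thirdcaseprofit} (the range $(\twofive, \fourfive]$) and Lemma~\ref{lem:fourthcaseprofit} (the range $(\onefive, \twofive]$). So my proof would begin by splitting on which of these two subcases applies, and in either case invoking the corresponding lemma to conclude that for every $v \in \ball_{V_t}(w, r)$ we have
\begin{align*}
\profit_t(uv) \;\geq\; (1+\epsilon)\,(r - x_{vw}).
\end{align*}

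Next, I would sum this inequality over $v \in \ball_{V_t}(w,r)$. By the definition of $P_{\textrm{high}}(u)$ and of $L_t(w)$, the sum on the left is exactly $P_{\textrm{high}}(u)$ and the sum on the right is exactly $(1+\epsilon)\,L_t(w)$, giving the desired inequality
\begin{align*}
P_{\textrm{high}}(u) \;=\; \sum_{v \in \ball_{V_t}(w,r)} \profit_t(uv) \;\geq\; (1+\epsilon)\sum_{v \in \ball_{V_t}(w,r)}(r - x_{vw}) \;=\; (1+\epsilon)\,L_t(w).
\end{align*}

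The only sanity check worth flagging is that the per-edge bounds from Lemmas~\ref{lem:thirdcaseprofit} and~\ref{lem:fourthcaseprofit} apply without loss: they require $v \in \ball_{V_t}(w, r)$, which is what we are summing over, and they implicitly use that $uv \in \Delta E_t$ so that $\profit_t(uv)$ is the nontrivial quantity rather than zero. Both vertices lie in $V_t$, and since $w \in H_t$ and $\ball_{V_t}(w,r) \subseteq \ball_{V_t}(w, \twofive) \subseteq F_t$, every such $v$ is settled in round $t$, forcing $uv \in \Delta E_t$. Consequently there is no real obstacle here; the lemma is a clean aggregation step, with all the case-analytic work already absorbed into the two preceding lemmas.
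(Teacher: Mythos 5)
Your proposal is correct and matches the paper's proof: the paper likewise invokes Lemma~\ref{lem:thirdcaseprofit} and Lemma~\ref{lem:fourthcaseprofit} to get $\profit_t(uv) \geq (1+\epsilon)(r - x_{vw})$ for every $v \in \ball_{V_t}(w,r)$, then sums and identifies the right-hand side with $(1+\epsilon)L_t(w)$. Your extra sanity check that each such $v$ lies in $F_t$ (so $uv \in \Delta E_t$) is a reasonable observation that the paper leaves implicit.
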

\begin{proof}
Based on Lemma \ref{lem:thirdcaseprofit} and Lemma \ref{lem:fourthcaseprofit}, we know that for any $v \in \ball_{V_t}(w, r)$, we have $\profit_{t}(uv) \geq (1 + \epsilon)(r - x_{vw})$, so 
\begin{align*}
    P_{\textrm{high}}(u) = \sum_{v \in \ball_{V_t}(w, r)}\profit_t(uv) \geq (1 + \epsilon) \sum_{v \in \ball_{V_t}(w, r)}(r - x_{vw}) = (1 + \epsilon) L_t(w)
\end{align*}
\end{proof}
We still have to bound the profit of $P_{\textrm{low}}(u)$, this has been shown in \cite{kalhan2019correlation} and \cite{davies2023fast}. We give the corresponding lemma for completeness. 
\begin{lemma}
\label{lem:lowprofit}
If $x_{uw} \in (\onefive, \fourfive]$, then $P_{\textrm{low}}(u) \geq - L_t(u)$.
\end{lemma}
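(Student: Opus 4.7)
The plan is to isolate which edges $uv$ with $v \in V_t \setminus \ball_{V_t}(w,r)$ can actually contribute a negative amount to $P_{\textrm{low}}(u)$, and then to upper-bound the total magnitude of those negative contributions by $L_t(u)$. Recall that for any edge $uv$, $\profit_t(uv) = LP(uv) - r \cdot ALG(uv)$ if $uv \in \Delta E_t$ and $\profit_t(uv) = 0$ otherwise, so only edges incident to $F_t$ matter.

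First I would dispose of $-$edges: by Lemma~\ref{lem:roundingcostfornegativeedge}, every $-$edge $uv$ with $u,v \in V_t$ contributes $\profit_t(uv) \geq 0$, so it cannot make $P_{\textrm{low}}(u)$ smaller. Hence it suffices to consider $+$edges $uv \in E$ with $v \in V_t \setminus \ball_{V_t}(w,r)$. For such a $+$edge, $\profit_t(uv) = x_{uv} - r \cdot ALG(uv) \geq -\max(0, r - x_{uv})$, since the worst case is $ALG(uv) = 1$ and $x_{uv} < r$. In particular, $\profit_t(uv) \geq 0$ whenever $x_{uv} \geq r$, so only $+$edges with $x_{uv} < r$ contribute negatively.

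The key observation is that any such $v$ (with $v \in V_t$ and $x_{uv} < r$) lies in $\ball_{V_t}(u, r)$ by definition. Therefore the total negative contribution is bounded as
\begin{align*}
    P_{\textrm{low}}(u) &\geq \sum_{\substack{v \in V_t \setminus \ball_{V_t}(w,r)\\ uv \in E,\ x_{uv} < r}} (x_{uv} - r) \geq -\sum_{v \in \ball_{V_t}(u, r)}(r - x_{uv}) = -L_t(u),
\end{align*}
where the last equality is simply the definition of $L_t(u)$. This gives exactly the desired bound.

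There is essentially no obstacle here: the proof is a direct bookkeeping argument once the right restriction on $v$ is made. The only subtlety worth a sentence is to note that the restriction $v \in V_t \setminus \ball_{V_t}(w,r)$ is irrelevant to the upper bound on the negative contribution, since enlarging the index set only makes the lower bound weaker; the cleanest way to phrase it is to drop that restriction and sum the negative part over all $v \in \ball_{V_t}(u,r)$, which recovers $L_t(u)$.
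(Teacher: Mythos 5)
Your proof is correct and follows essentially the same route as the paper: you bound each $\profit_t(uv)$ from below by $\min(x_{uv}-r,0)$ (equivalently $-\max(0,r-x_{uv})$), using Lemma~\ref{lem:roundingcostfornegativeedge} for $-$edges and the trivial $LP(uv)-r\cdot ALG(uv)\ge x_{uv}-r$ bound for $+$edges, and then enlarge the index set of the negative terms to all of $\ball_{V_t}(u,r)$, which only weakens the lower bound and yields exactly $-L_t(u)$.
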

\begin{proof}
    Note that for each $+$edge $uv \in E$, we have 
\begin{align*}
    \profit_t(uv) \geq x_{uv} - r.
\end{align*}
For each $-$edge $uv \in {V \choose 2} \setminus E$, by lemma \ref{lem:roundingcostfornegativeedge}, we know $\profit_t(uv) \geq 0$. Therefore, for any edge $uv$, we have 
\begin{align*}
    \profit_t(uv) \geq \textrm{min}(x_{uv} - r, 0).
\end{align*}
We can compute $P_{\textrm{low}}(u)$ by
\begin{align*}
    P_{\textrm{low}}(u) &= \sum_{v \in V_t \setminus \ball_{V_t}(w, r)}\profit_t(uv) \geq \sum_{v \in V_t \setminus \ball_{V_t}(w, r)} \textrm{min}(x_{uv} - r, 0) \\
    &\geq \sum_{v \in V_t} \textrm{min}(x_{uv} - r, 0) \geq \sum_{v \in \ball_{V_t}(u, r)} x_{uv} - r = -L_t(u).
\end{align*}
\end{proof}

Combining the profit for $P_{\textrm{high}}(u)$ and $P_{\textrm{low}}(u)$, we can deduce that the profit for any node $u$ is non-negative when $ x_{uw} \in (\onefive, \fourfive]$
\begin{lemma}
If $x_{uw} \in (\onefive, \fourfive]$, then $\profit_t(u) \geq 0$.
\end{lemma}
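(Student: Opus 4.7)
The plan is to combine the two preceding lemmas about $P_{\textrm{high}}(u)$ and $P_{\textrm{low}}(u)$ with the rule used by the algorithm to pick candidate cluster centers. Writing the decomposition
\begin{align*}
    \profit_t(u) \;=\; P_{\textrm{high}}(u) + P_{\textrm{low}}(u),
\end{align*}
Lemma \ref{lem:highprofit} gives $P_{\textrm{high}}(u) \geq (1+\epsilon)\,L_t(w)$ and Lemma \ref{lem:lowprofit} gives $P_{\textrm{low}}(u) \geq -L_t(u)$, so it will suffice to establish
\begin{align*}
    (1+\epsilon)\,L_t(w) \;\geq\; L_t(u).
\end{align*}

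To prove that inequality, I would appeal to how $w$ is chosen by Algorithm \ref{alg:mpcrounding}. The vertex $w$ is a cluster center from $H_t \subseteq S_t \subseteq M_t$, and $M_t$ consists exactly of those $w' \in V_t$ with $L_t(w') \geq L_t^{\textrm{max}}$ (Line \ref{alg:mpcroundingfindcandidatestart}--\ref{alg:mpcroundingfindcandidateend}). Therefore $L_t(w) \geq L_t^{\textrm{max}}$. By the definition of $L_t^{\textrm{max}}$ as the largest value of the form $r(1+\epsilon)^j$ not exceeding $\max_{w' \in V_t} L_t(w')$ (Line \ref{alg:randomizedPIVOTLvalue}), we have
\begin{align*}
    \max_{w' \in V_t} L_t(w') \;<\; (1+\epsilon)\,L_t^{\textrm{max}} \;\leq\; (1+\epsilon)\,L_t(w).
\end{align*}
Since $u$ is still unclustered at step $t$, i.e.\ $u \in V_t$, this yields $L_t(u) \leq \max_{w' \in V_t} L_t(w') \leq (1+\epsilon)\,L_t(w)$, which is precisely what we need.

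Putting the pieces together,
\begin{align*}
    \profit_t(u) \;\geq\; (1+\epsilon)\,L_t(w) - L_t(u) \;\geq\; 0.
\end{align*}
No step here looks like a genuine obstacle: the heavy lifting (controlling the negative $+$edge contributions by the ball profit around $w$) has already been done in Lemmas \ref{lem:thirdcaseprofit}--\ref{lem:lowprofit}. The only subtlety is making sure that the bucketing of $L_t^{\textrm{max}}$ by powers of $(1+\epsilon)$ introduces exactly the $(1+\epsilon)$ slack that Lemma \ref{lem:highprofit} supplies, and this matches perfectly by design of the algorithm.
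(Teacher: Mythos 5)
Your proof is correct and takes essentially the same approach as the paper: decompose $\profit_t(u)$ into $P_{\textrm{high}}(u)+P_{\textrm{low}}(u)$, invoke Lemmas \ref{lem:highprofit} and \ref{lem:lowprofit}, and use the candidate-selection rule (via $L_t^{\textmd{max}}$ and the $(1+\epsilon)$-bucketing) to conclude $L_t(u) \leq (1+\epsilon) L_t(w)$. You spell out the justification for that last inequality in a bit more detail than the paper, which simply asserts it, but the argument is identical.
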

\begin{proof}
    Note that in each round, we have $L_t(u) \leq (1 + \epsilon) L_t(w)$. By Lemma \ref{lem:highprofit} and \ref{lem:lowprofit}, we have 
\begin{align*}
    \profit_t(u) = P_{\textrm{high}}(u) + P_{\textrm{low}}(u) \geq (1 + \epsilon)L_t(w) - L_t(u) \geq 0.
\end{align*}
\end{proof}

\subsection{Running time}

\begin{lemma}
\label{lem:mpcroundingrunningtimebase}
At some round $s$, let $L^{\max}_s$ and $\Delta_s$ be the values we set up at lines \ref{alg:randomizedPIVOTLvalue} and \ref{alg:mpcroundingfindclustercenterstart}. Then, after $O(\log n)$ rounds, at round $e = s + O(\log n)$, with high probability, either $L^{\max}_e < L^{\max}_s$ or $\Delta_e \leq \frac{\Delta_s}{2}$.
\end{lemma}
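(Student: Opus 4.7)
The plan is to assume throughout that $L_t^{\max} = L_s^{\max}$ for every $t \in [s, e]$ (otherwise the first alternative of the conclusion already holds) and to show that $\Delta_e \leq \Delta_s/2$ holds with high probability. I would first establish a monotonicity property: as $V_t$ shrinks, $\ball_{V_t}(w, 2/5)$ shrinks, so each $L_t(w)$ and hence the grid value $L_t^{\max}$ is non-increasing in $t$. Under the constant-threshold assumption, the sets $M_t = \{v \in V_t : L_t(v) \geq L_s^{\max}\}$ therefore form a nested chain $M_s \supseteq M_{s+1} \supseteq \cdots \supseteq M_e$; consequently, the ``ball-in-$M_t$'' sets $B_v^{(t)} := \ball_{V_t}(v, 2/5) \cap M_t$ are monotonically non-increasing (for each fixed $v$), so in particular $\Delta_t$ is non-increasing.

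Next, I would introduce the potential set $U_t := \{v \in M_t : |B_v^{(t)}| > \Delta_s/2\}$, so that $U_e = \emptyset$ is equivalent to $\Delta_e \leq \Delta_s/2$. The crux of the argument is the following per-round claim: for any $v \in U_t$, conditioned on the history through round $t-1$,
\begin{align*}
    \Pr[v \in F_t] \;\geq\; \Pr\big[\,\exists\, w \in B_v^{(t)} : w \in H_t\,\big] \;\geq\; c
\end{align*}
for an absolute constant $c > 0$. I would prove this by Bonferroni inclusion-exclusion applied to the events $\{w \in H_t\}_{w \in B_v^{(t)}}$. For each $w$, since $|B_w^{(t)}| \leq \Delta_t$,
\begin{align*}
    \Pr[w \in H_t] \;=\; p_t(1-p_t)^{|B_w^{(t)}|-1} \;\geq\; \frac{1}{2\Delta_t}\Big(1 - \frac{1}{2\Delta_t}\Big)^{\Delta_t - 1} \;\geq\; \frac{e^{-1/2}}{2\Delta_t}.
\end{align*}
Using $|B_v^{(t)}| \geq \Delta_s/2 \geq \Delta_t/2$, the first-order sum is at least $\frac{|B_v^{(t)}|}{2\Delta_t}\,e^{-1/2} \geq \frac{e^{-1/2}}{4}$, while the pairwise correction is bounded by $\binom{|B_v^{(t)}|}{2} p_t^2 \leq \frac{1}{8}$. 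This leaves a positive constant $c \geq \frac{2e^{-1/2}-1}{8} > 0$. Whenever some $w \in B_v^{(t)}$ lies in $H_t$, the approximate triangle inequality gives $x_{vw} \leq 2/5$, so $v$ is absorbed into $F_t$.

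Finally, by the monotonicity $U_{t+1} \subseteq U_t$ (a node $v$ can only leave $U$ once $v \in F_t$ or once its ball shrinks below $\Delta_s/2$), the probability that a fixed $v \in U_s$ remains in $U$ across all rounds $s, s+1, \ldots, e-1$ is at most $(1-c)^{e-s}$, by chaining the per-round claim across the independent sampling steps. Choosing $e - s = \Theta(\log n)$ sufficiently large, this becomes $\leq n^{-(C+1)}$; a union bound over $|U_s| \leq n$ then gives $\Pr[U_e \neq \emptyset] \leq n^{-C}$, completing the argument. The main obstacle is the Bonferroni step: the first-order term $e^{-1/2}/4 \approx 0.15$ only barely exceeds the pairwise correction $1/8 = 0.125$, leaving a tiny gap of about $0.026$. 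Securing a positive $c$ requires the sharp inequality $(1 - 1/(2\Delta_t))^{\Delta_t-1} \geq e^{-1/2}$ for all $\Delta_t \geq 1$, and the fact that every $w \in B_v^{(t)}$ satisfies $|B_w^{(t)}| \leq \Delta_t$ (which prevents the exponent from blowing up).
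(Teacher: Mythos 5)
Your proof is correct and follows the same high-level framework as the paper's: assuming $L^{\max}_t$ stays equal to $L^{\max}_s$, show that any vertex whose $\twofive$-ball in $M_t$ stays larger than $\Delta_s/2$ is absorbed into $F_t$ with per-round probability bounded below by an absolute constant, then chain over $\Theta(\log n)$ rounds and union-bound over vertices. Where you differ is in how that per-round constant is extracted: you apply the second Bonferroni inequality to the events $\{w \in H_t\}_{w \in B_v^{(t)}}$, using $\Pr[w \in H_t] = p_t(1-p_t)^{|B_w^{(t)}|-1} \ge \tfrac{e^{-1/2}}{2\Delta_t}$ together with $|B_v^{(t)}| > \Delta_t/2$ and $|B_v^{(t)}| \le \Delta_t$ to pin the first-order term above the pairwise correction. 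The paper instead bounds the probability of the disjoint ``exactly one $w$ near $u$ lands in $S_t$ and is isolated within its own ball'' events. Both routes yield an absolute constant; your Bonferroni computation is arguably the more careful of the two, since the paper's displayed equality for $\Pr[A_u]$ drops the condition that no other node of $\ball(u,\cdot)$ is sampled (which is needed for the events to be disjoint), uses the radius $2r$ where $\twofive$ is what the algorithm actually uses, and its intermediate bound $(1-1/\Delta_s)^{\Delta_s} \ge 1/4$ degenerates at $\Delta_s=1$. Two small points worth tightening in your write-up: the approximate triangle inequality is not needed to conclude $v \in F_t$ from $w \in B_v^{(t)} \cap H_t$ (membership $w \in B_v^{(t)}$ already gives $x_{vw} \le \twofive$ directly), and your parenthetical describing how a vertex exits $U_t$ should also include the case that it drops out of $M_{t+1}$ because $L_{t+1}(v)$ has fallen below $L^{\max}_s$ — though this omission does not affect the validity of the nesting $U_{t+1} \subseteq U_t$ that the chaining relies on.
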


\begin{proof}
Assume that $L^{\max}_e = L^{\max}_s$. Note that in each round, we remove some nodes from the graph, so $L_t(u)$ never increases, and the candidate set $M_e \subseteq M_s$. Consider any node $u \in M_s$; we will show that, at $e$-th round, with high probability, either $L_e(u) < L^{\max}_s$ or $|\text{Ball}(u, 2r) \cap M_e| \leq \Delta_s / 2$ .

Assume $L_e(u) \geq L^{\max}_s$ and $|\text{Ball}(u, 2r) \cap M_t| > \Delta_s / 2$ both hold. First, we know $u \in M_e$. Secondly, at any round $t \in [s, e]$, if any node $w \in \text{Ball}(u, 2r) \cap M_t$ is chosen as a cluster center and added to $H_t$, then $u$ will be added to $w$'s cluster and will be removed from $V_t$. Third, for any node $w \in \text{Ball}(u, 2r) \cap M_t$, we have $|\text{Ball}(w, 2r) \cap M_t| \leq \Delta_s$.

Let $A_u$ be the event that at the $t$-th round, exactly one node $w \in \text{Ball}(u, 2r) \cap M_t$ is added to $S_t$ and no node $v \in \text{Ball}(w, 2r) \cap M_t$ is added to $S_t$. Then
\[
\Pr[A_u] = \sum_{w \in \text{Ball}(u, 2r) \cap M_t} p_t \cdot (1 - p_t)^{|\text{Ball}(w, 2r) \cap M_t \setminus \{ w \}|} \geq \sum_{w \in \text{Ball}(u, 2r) \cap M_t} p_t \cdot (1 - p_t)^{\Delta_s}.
\]

Note that $\frac{1}{2\Delta_s} \leq p_t \leq \frac{1}{\Delta_s} \leq \frac{1}{2}$, so
\begin{align*}
    \Pr[A_u] \geq \sum_{w \in \text{Ball}(u, 2r) \cap M_t} \frac{1}{2\Delta_s} \cdot \left(1 - \frac{1}{\Delta_s}\right)^{\Delta_s} \geq \sum_{w \in \text{Ball}(u, 2r) \cap M_t} \frac{1}{8\Delta_s}.
\end{align*}

Therefore,
\begin{align*}
    \Pr[A_u] \geq \frac{\Delta_s}{2} \cdot \frac{1}{8\Delta_s} = \frac{1}{16}.
\end{align*}

So, at each round $t \in [s, e]$, with at least $1/16$ probability, $u$ will be added to some cluster and removed. After $O(\log n)$ rounds, for any node $u \in M_s$ such that $L_e(u) \geq L^{\max}_s$ and $|\text{Ball}(u, 2r) \cap M_e| > \Delta_s / 2$, with high probability, $u$ will be removed from the graph. Our argument holds by applying a standard union bound for all nodes from $M_s$.
\end{proof}
The next lemma gives us an upper bound of the number of rounds,

\begin{lemma}
\label{lem:mpcroundingtimelogn}
Algorithm \ref{alg:mpcrounding} terminates after at most $O(\log^3 n / \epsilon)$ rounds, w.h.p.
\end{lemma}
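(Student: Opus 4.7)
The plan is to iterate Lemma \ref{lem:mpcroundingrunningtimebase} using the two-dimensional potential $(L^{\max}_t, \Delta_t)$, bounding how many times each coordinate can make progress before the algorithm must terminate. By Lemma \ref{lem:mpcroundingrunningtimebase}, starting from any round $s$ for which $V_s \neq \emptyset$, after $O(\log n)$ rounds, with high probability we land in a round $e$ where either $L^{\max}_e < L^{\max}_s$ (a ``type-1'' event) or $\Delta_e \leq \Delta_s/2$ (a ``type-2'' event). My goal is to show that only $O(\log^2 n/\epsilon)$ such events can occur before $V_t$ empties out.

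First I would bound the number of type-1 events. Recall that $L^{\max}_t$ is restricted to the geometric sequence $\{r(1+\epsilon)^j : j \in \mathbb{Z}_{\geq 0}\}$ by Line~\ref{alg:randomizedPIVOTLvalue}. The lower endpoint is $r$, since for every $w \in V_t$ we have $w \in \ball_{V_t}(w,r)$ and thus $L_t(w) \geq r$, forcing $L^{\max}_t \geq r$. For the upper endpoint, $L_t(w) = \sum_{u \in \ball_{V_t}(w,r)}(r - x_{uw}) \leq r \cdot |V_t| \leq r n$. Hence $L^{\max}_t \in [r, rn]$, and since consecutive values in the sequence differ by a factor of $1+\epsilon$, there are at most $\log_{1+\epsilon} n = O(\log n/\epsilon)$ distinct values. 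This caps the number of type-1 events at $O(\log n / \epsilon)$.

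Next I would bound the number of type-2 events between two consecutive type-1 events (or before the first one). While $L^{\max}_t$ is fixed, $\Delta_t$ starts at most $n$ and is at least $1$ (since $u \in \ball_{V_t}(u,2/5) \cap M_t$ whenever $u \in M_t$). Therefore $\Delta_t$ can be halved at most $O(\log n)$ times before $\Delta_t = 1$. If we ever reach a state with $L^{\max}_t = r$ and $\Delta_t = 1$, then Lemma \ref{lem:mpcroundingrunningtimebase} applied to the next $O(\log n)$ rounds must terminate in either $L^{\max}$ strictly decreasing (impossible, as $r$ is already the minimum) or $\Delta$ halving (impossible, since $\Delta \geq 1$); the only way out is $V_t = \emptyset$, i.e., termination.

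Multiplying, the total number of progress events is at most $O(\log n/\epsilon) \cdot O(\log n) = O(\log^2 n /\epsilon)$, and each event costs $O(\log n)$ rounds by Lemma \ref{lem:mpcroundingrunningtimebase}, yielding a total of $O(\log^3 n/\epsilon)$ rounds. Since Lemma \ref{lem:mpcroundingrunningtimebase} fails with probability at most $n^{-c}$ per application, a union bound over the poly-logarithmically many invocations preserves the high-probability guarantee. The main obstacle is really just a bookkeeping one: confirming that the ``progress or terminate'' dichotomy of Lemma \ref{lem:mpcroundingrunningtimebase} remains valid at the boundary $(L^{\max}_t, \Delta_t) = (r, 1)$, which is handled by the argument above that no further progress is possible except through $V_t$ becoming empty.
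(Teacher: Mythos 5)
Your proposal is correct and takes essentially the same approach as the paper: iterate Lemma~\ref{lem:mpcroundingrunningtimebase}, observe that $L^{\max}_t$ can take only $O(\log_{1+\epsilon} n) = O(\log n/\epsilon)$ distinct values and that $\Delta_t$ can halve only $O(\log n)$ times per level of $L^{\max}_t$, and multiply by the $O(\log n)$ rounds per progress event. The only difference is presentational: you spell out the boundary case $(L^{\max}_t, \Delta_t) = (r,1)$ and the fact that $M_t = \emptyset$ forces $V_t = \emptyset$, where the paper treats the same point more informally by saying ``$\Delta_t$ will become $0$.''
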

\begin{proof}
Based on Lemma \ref{lem:mpcroundingrunningtimebase}, we know that after $O(\log n)$ rounds, either $L^{\max}_t$ drops by $1 + \epsilon$ or $\Delta_t$ drops by half. If $L^{\max}_t$ does not decrease for $O(\log^2 n)$ rounds,  $\Delta_t$ will become $0$ and all nodes in $M_t$ will be removed. Therefore, $L^{\max}_t$ drops by $1 + \epsilon$ after $O(\log^2 n)$ rounds. After $O(\log^2 n \log_{(1+\epsilon)}n) = O(\log^3 n /\epsilon)$ rounds,  $L^{\max}_t$ become $0$ and all nodes in $G$ will be clustered.
\end{proof}

Now we can prove the main theorem in this section.

\begin{proof}[Proof of Theorem \ref{thm:roundingmaintheorem}]
By Lemma \ref{lem:correctnessofRoundnig}, we know that Algorithm \ref{alg:mpcrounding} returns a solution such that 
\begin{align*}
    \cost_{\calC}(u) \leq (5 + 55\epsilon) y_u.
\end{align*}

By Lemma \ref{lem:mpcroundingtimelogn}, we know that the algorithm terminates in $O(\log^3 n /\epsilon)$ rounds. In each round, Algorithm \ref{alg:mpcrounding} is highly parallelized and only takes $O(1)$ rounds. In total, Algorithm \ref{alg:mpcrounding} takes $O(\log^3 n / \epsilon)$ rounds and $O( |K| \log^3 n /\epsilon)$ total work, where $K$ is the set of $+$edges and $-$edges whose LP value is less than 1.

\end{proof}

\section{A Constant Round MPC Algorithm}
\label{sec:MPC-solve-LP}
We will show Theorem \ref{thm:constantMPCAlgorithmForCC} in this section. We repeat for convenience, 
\thmconstantMPCAlgorithmForCC*

\subsection{Algorithm}
Theorem \ref{thm:mainthmlp} gives us an $O(\log^3 n)$ rounds MPC algorithm. The bottleneck is the rounding procedure. To achieve a constant rounds MPC algorithm, instead of setting up the LP and rounding, we use the pre-clustering algorithm from \cite{cohen2021correlation}, which is very useful for $\ell_1$-norm correlation clustering. We show that the pre-clustering algorithm can also provide an $O(1)$-approximate ratio for all monotone symmetric
norms simultaneously.

\paragraph{Algorithm Description} The algorithm from \cite{cohen2021correlation} is parameterized by $\beta$ and $\lambda$. It has three steps: 
\begin{enumerate}
    \item The first step is the same as the first step of Algorithm \ref{alg:pre-clusteringlp}, where we compute the graph $H$ (Line \ref{alg:pre-clusteringfirst}).
    \item The algorithm marks a node as light if it loses more than a $\lambda$ fraction of its neighbors in the first step. Otherwise, it marks the node as heavy. The algorithm removes all edges between two light nodes in $H$ (Line \ref{alg:pre-clusteringsecondstart} - Line \ref{alg:pre-clusteringsecondend}).
    \item The last step is to output the connected components $F$ of the final graph.
\end{enumerate}

\begin{algorithm}[ht!]
\caption{Pre-clustering – Algorithm 1 in \cite{cohen2021correlation}. \\
\textbf{Input}: Graph $G = (V, E)$, \\
\textbf{Output}: Clustering $F$.}
\label{alg:pre-clustering}
\begin{algorithmic}[1]
\Function {\textsc{PreClustering}}{$G = (V, E)$}
 \State Let $E_H = \{uv \in E: |N(u) \Delta N(v)| \leq \beta \cdot \max(d(u), d(v))\}$ and $H = (V, E_H)$
 \label{alg:pre-clusteringfirst}
\For{$v \in V$} \label{alg:pre-clusteringsecondstart}
\If{$d_H(v) < (1 - \lambda) d(v)$} Mark it as light
\Else \quad Mark it as Heavy
\EndIf
\EndFor  
\State Let $E_{\tilde{G}} = \{ uv \in E_H: u \mathrm{\ or\ } v \mathrm{\ is\ heavy} \}$ and $\tilde{G} = (V, E_{\tilde{G}})$ \label{alg:pre-clusteringsecondend}
\State Compute its connected components on $\tilde{G}$, denoted as $F$,
\Return $F$.
\EndFunction
    \end{algorithmic}    
\end{algorithm}

The main reason we can achieve a constant rounds MPC algorithm is the simplicity of steps 2 and 3. \cite{cohen2021correlation} already showed that Algorithm \ref{alg:pre-clustering} can be implemented within $O(1)$ rounds and is $O(1)$-approximate for correlation clustering under $\ell_1$-norm objective. We extend their proof for approximate ratio and show that Algorithm \ref{alg:pre-clustering} outputs an $O(1)$-approximate clustering $F$ for any top-$k$ norm objective. More concretely, we have the following lemma:

\begin{lemma}
\label{lemma:preclusteringratio}
Assume that $8\beta + \lambda \leq 3/8$. Given any graph $G$, Algorithm \ref{alg:pre-clustering} outputs a clustering $F$ such that for any integer $k \in [n]$, we have 
\begin{align*}
    \cost^k_{F} \leq \left(\frac{3}{\beta} + \frac{1}{\lambda} + \frac{1}{\beta\lambda} + 8\right) \cdot \opt^k
\end{align*}
where $\opt^k$ is the cost of the optimum solution under the top-$k$ norm.
\end{lemma}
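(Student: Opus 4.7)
The plan is to mirror the coefficient-based top-$k$ analysis developed in Section~\ref{sec:top-k} for Lemma~\ref{lemma:boundratiomain}, this time with the integral clustering $F$ from Algorithm~\ref{alg:pre-clustering} in place of the fractional solution $x$. Fix any $k\in[n]$ and an optimal clustering $\calC$ for the top-$k$ norm; let $C(v)$ be its cluster map and let $U\subseteq V$ be the $k$ vertices with largest $\cost_F(u)$, so that $\cost^k_F=\sum_{u\in U}\cost_F(u)$. For each category of disagreement contributing to this sum I will exhibit a non-negative vector $c\in\R^n_{\ge 0}$ with $|c|_\infty\le \alpha$ and $|c|_1\le \alpha k$ such that the category's contribution is at most $\sum_r c(r)\cost_\calC(r)$; Claim~\ref{claim:using-coefficients} will then bound it by $\alpha\cdot \cost^k_\calC$, exactly as in Section~\ref{sec:top-k}.

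First I would split the disagreement edges of $F$ into five groups, paralleling $\varphi^+_1,\varphi^+_2,\varphi^+_3,\varphi^-_1,\varphi^-_2$ in Section~\ref{sec:top-k}: \textbf{(i)}~$+$edges cut by $\calC$ and \textbf{(ii)}~$-$edges not cut by $\calC$, which are disagreements of $\calC$ and together contribute at most $\cost^k_\calC$; \textbf{(iii)}~$+$edges $uv\in E\setminus E_H$ with $C(u)=C(v)$, handled verbatim by Lemma~\ref{lemma:varphi1} and Lemma~\ref{lemma:f+2} to give at most $\tfrac{2}{\beta}\cdot \cost^k_\calC$; \textbf{(iv)}~$+$edges $uv\in E_H$ with $C(u)=C(v)$ whose endpoints land in different $\tilde G$-components (which forces $uv\notin E_{\tilde G}$, and hence both endpoints are light, since any edge of $E_{\tilde G}$ keeps its endpoints in the same component); and \textbf{(v)}~$-$edges $uv$ with $C(u)\neq C(v)$ whose endpoints land in the same $\tilde G$-component, i.e.\ the new $-$disagreements created by merging.

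For group (iv) I would argue as follows: $u$ being light means $d(u)-d_H(u)>\lambda d(u)$, so at least $\lambda d(u)$ $+$-neighbors $w$ of $u$ satisfy $uw\in E\setminus E_H$, i.e.\ $|N(u)\Delta N(w)|>\beta M_{uw}\ge\beta d(u)$. A case split on whether $C(u)=C(w)$ (invoking Lemma~\ref{lemma:varphi1}) or not (in which case $uw$ is already a $\calC$-disagreement at $u$) shows that the witnesses contribute $\Omega(\lambda\beta d(u))$ to $\cost_\calC$. A Lemma~\ref{lemma:f+2}-style summation then produces a coefficient vector whose $|c|_\infty$ and $|c|_1/k$ are both $O(\tfrac{1}{\beta\lambda})$.

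The hard part will be group (v). For each such pair $uv$ the plan is to take a shortest $u$-$v$ path $u=w_0,w_1,\dots,w_\ell=v$ in $\tilde G$; since $C(w_0)\neq C(w_\ell)$, some consecutive pair $w_iw_{i+1}$ must be a $+$edge cut by $\calC$, giving a natural witness for $uv$. The two obstacles are that (a)~a single witness can be charged by many $uv$ pairs and (b)~the path length $\ell$ may be large. Obstacle (b) is where the hypothesis $8\beta+\lambda\le 3/8$ enters: combined with Lemma~\ref{lemma:Hdegreebound} and the fact that every edge of $E_{\tilde G}$ has a heavy endpoint, it forces two heavy $\tilde G$-neighbors to share a large fraction of common $E_H$-neighbors, so $\tilde G$-components have constant diameter and $\ell=O(1)$. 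Obstacle (a) I would handle by spreading the unit cost of each $uv$ over the $O(1)$ witness vertices of its path and by bounding the multiplicity of any witness via a two-step neighborhood count in the spirit of Lemma~\ref{lemma:f+3}, again using Lemma~\ref{lemma:Hdegreebound} to relate neighbor degrees. Assembling the coefficient vectors of the five groups and applying Claim~\ref{claim:using-coefficients} will yield the stated bound $\tfrac{3}{\beta}+\tfrac{1}{\lambda}+\tfrac{1}{\beta\lambda}+8$.
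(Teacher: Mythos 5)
Your decomposition into groups (i)--(iv) matches the paper's $\varphi^+_1, \varphi^-_1, \varphi^+_2, \varphi^+_4$ and your bounds for them (reusing Lemma~\ref{lemma:varphi1}, Lemma~\ref{lemma:f+2}, and the lightness argument) are essentially the same as the paper's Corollary~\ref{coro:g+2} and Lemma~\ref{lemma:g+3}. However, for group (v) --- the new $-$disagreements created by merging a $\tilde G$-component into one cluster --- your proposal departs from the paper and contains a genuine gap.

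The paper does \emph{not} charge these $-$disagreements directly to $\calC$-disagreements via a per-vertex coefficient vector. Instead it passes to an auxiliary graph $G_2$, obtained by neutralizing all $+$edges whose endpoints lie in different $\tilde G$-components. The cost increase $\cost^k_F(G) - \cost^k_F(G_2)$ is exactly $g^+_1 + g^+_2 + g^+_4$ restricted to $U$, which is bounded by the coefficient arguments for groups (i)--(iv). Then Lemma~\ref{lem:nearoptimalindeletedgraph} shows $\cost^k_F(G_2) \le 7\,\cost^k_{\calC^*}(G_2)$, where $\calC^*$ is the top-$k$ optimum \emph{on $G_2$}, and this chains back to $\opt^k$ via $\cost^k_{\calC^*}(G_2)\le\cost^k_\calC(G_2)\le\opt^k$. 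Crucially, Lemma~\ref{lem:nearoptimalindeletedgraph} works component-by-component using only Lemma~\ref{lemma:preclusteringconnecttomostnodesinsameclustering} (each $u\in CC$ has $d_G(u,CC)\ge\tfrac58|CC|$) and a direct case analysis on how $\calC^*$ partitions each $\tilde G$-component; it never needs a diameter bound or witness paths, and it never needs to express the comparison as a coefficient-weighted sum of $\cost_\calC$ values.

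Your shortest-path charging scheme has two specific problems. First, the ``constant-diameter'' claim for $\tilde G$-components is asserted but not established: Lemma~\ref{lemma:preclusteringconnecttomostnodesinsameclustering} controls $d_G(u,CC)$, i.e.\ $+$edges of $G$, not edges of $\tilde G$, so it says nothing directly about shortest-path lengths \emph{inside} $\tilde G$; the further sketch about heavy $\tilde G$-neighbors sharing many $E_H$-neighbors does not immediately yield a bounded $\tilde G$-diameter either, since those shared neighbors need not be heavy nor connected to both endpoints in $\tilde G$. Second, the multiplicity accounting (how many $-$disagreements $uv$ charge the same witness edge on a path) is left entirely open; this is precisely the subtle part that the paper's Lemma~\ref{lem:nearoptimalindeletedgraph} resolves with a careful split on whether the largest $\calC^*$-subcluster $K_1$ of a component $K$ has $|K_1|\ge|K|/2$, and it is not clear a path-charging argument would reproduce the needed bound. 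To close the argument you should abandon the path-witness approach for group (v) and instead adopt the paper's two-step structure: reduce to the modified graph $G_2$ and prove a $7$-approximation for $F$ there.
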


\begin{proof}[Proof of Theorem \ref{thm:constantMPCAlgorithmForCC}]
By setting $\beta = 0.0275$ and $\lambda = 0.155$, we achieve an approximate ratio of 359. The approximate ratio is affected by the fact that Theorem \ref{theorem:constructHbysampling} only gives us an approximate estimate of the neighbors. However, similar to Algorithm \ref{alg:pre-clusteringlp}, this only adds an extra $1 + \epsilon$ multiplicative factor to the approximate ratio, resulting in a final ratio of $359 + \epsilon$.

By Lemma \ref{lem:Topk2all}, we know that $F$ is a simultaneous $(359 + \epsilon)$-approximate clustering for all monotone symmetric norms.

By Theorem \ref{theorem:constructHbysampling}, we can implement step 1 in $O(1)$ rounds with $\tilde{O}(m /\epsilon^2)$ total memory and work. Step 2 can be implemented in $O(1)$ rounds with $\tilde{O}(m)$ total memory and work. \cite{cohen2021correlation} showed that computing the connected components on $\tilde{G}$ takes $O(1)$ rounds and $\tilde{O}(m)$ total memory and work. Combining all these steps gives us the target running time and total memory and work.
\end{proof}

\subsection{Approximate Ratio}
The proof follows the blueprint of the proof in \cite{cohen2021correlation} and resembles the proof used to bound the approximate ratio for Algorithm \ref{alg:pre-clusteringlp} in Section \ref{sec:boundtopknormcost}. We will first bound the approximate ratio loss for steps 1 and 2, then we compare the final clustering $F$ with the optimal top-$k$ clustering in the graph where we remove all edges deleted in steps 1 and 2.

\paragraph{Notations} 
We fix the integer $k \in [n]$. $\mathcal{C}$ is the clustering that minimizes the top-$k$ norm of the disagreement vector.  For every $v \in V$, $C(v)$ is the cluster in $\calC$ that contains $v$. Let $U$ be the set of $k$ vertices $u$ with the largest $\cost_F(u)$ values. So, the top-$k$ norm of $F$ is $\sum_{u \in U}\cost_F(u)$.

% Again, for every $u \in V$, let $\cost^+_\calC(u), \cost^-_\calC(u)$ and $\cost_\calC(u)$ respectively be the number of $+$edges, $-$edges and edges incident to $u$ that are in disagreement in the clustering $\calC$.

 Similarly, we divide all $+$edges in $G$ into three parts. First, we separate out the parts easily constrained by $\cost^k_\calC$. Let $\varphi^+_1$ be the set of $+$edges in $E_G$ that are cut in $\calC$. For the remaining $+$edges in $E_G$ that are not cut in $\calC$, we divide them into two categories depending on when we remove the edge. Let $\varphi^+_2$ be the set of $+$edges in $E_G\setminus E_H$ that are not cut in $\calC$, and $\varphi^+_4$ be the set of $+$edges in $E_H\setminus E_{\tilde{G}}$ that are not cut in $\calC$. In other words,  $\varphi^+_2$ and $\varphi^+_4$ are edges that are not cut in $\calC$ and removed at steps 1 and 2. respectively. Formally, we set 
\begin{align*}
    \varphi^+_1 &:= \{uv \mid uv \in E_G, C(u) \not= C(v) \},\\
    \varphi^+_2 &:= \{uv \mid uv \in E_G \setminus E_H, C(u) = C(v) \},   \\
    \varphi^+_4 &:= \{uv \mid uv \in E_H \setminus E_{\tilde{G}}, C(u) = C(v) \}, \\
\end{align*} %\snote{We need to change the names later. It is weird to use $\{1,3,4,5,6\}$.}

For every $i \in \{1, 2, 4\}$ and $u \in V$, we let $\varphi^+_i(u)$ be the set of pairs in $\varphi^+_i$ incident to $u$. 
% We use $\phi^+_i(u) = \{v: uv \in \varphi^+_i(u)\}$ to denote the end-vertices of the edges in $\varphi^+_i(u)$ other than $u$; so $|\phi^+_i(u)| = |\varphi^+_i(u)|$. 
For every $i \in \{1, 2, 4\}$,  we define $g^+_i = \sum_{u \in U}|\varphi^+_i(u)|$. Therefore, when we remove edges from the graph $G$, we will lose at most $ g^+_1 + g^+_2 + g^+_4$ number of edges for nodes from $U$. 

We have $g^+_1 \leq \cost^k_{\mathcal{C}}$. Note that in Lemma \ref{lemma:f+2}, we actually show a stronger argument, $\sum_{u \in U}|\varphi^+_2(u)|$ is bounded by $\frac{2}{\beta}\cost^k_{\mathcal{C}}$. This gave us the following Corollary for $g^+_2$. 

\begin{coro}
    \label{coro:g+2}
    There exists a vector $c^+_2 \in \R_{\geq 0}^{n}$ with the following properties:
    \begin{enumerate}[label=(\ref{coro:g+2}\alph*)]
        \item \label{property:g+2-cost} $g^+_2 \leq \sum_{r \in V}c^+_2(r)\cdot \cost_\calC(r)$.
        \item \label{property:g+2-c+2-infty} $c^+_2(r) \leq \frac{2}{\beta} \cdot \frac{|\varphi^+_2(r)|}{d(r)} \leq \frac{2}{\beta}$, for every $r \in V$.
        \item \label{property:g+2-c+2-1} $|c^+_2|_1 \leq \frac2\beta\sum_{u \in U}\frac{|\varphi^+_2(u)|}{d(u)} \leq \frac{2k}{\beta}$.
    \end{enumerate}
\end{coro}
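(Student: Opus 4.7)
The plan is to reuse the proof of Lemma~\ref{lemma:f+2} almost verbatim, observing that the very first line of that proof already bounded $f^+_2$ by $\sum_{u \in U, uv \in \varphi^+_2} 1 = g^+_2$ via the trivial bound $x_{uv} \le 1$. Hence the entire coefficient analysis that followed actually produces a bound on $g^+_2$, not just on $f^+_2$. So the proof here will simply drop that initial step and start directly from $g^+_2$.

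Concretely, I would write
\begin{align*}
    g^+_2 \;=\; \sum_{u \in U}|\varphi^+_2(u)| \;=\; \sum_{u \in U,\, uv \in \varphi^+_2} 1 \;\le\; \sum_{u \in U,\, uv \in \varphi^+_2} \frac{\cost_\calC(u)+\cost_\calC(v)}{\beta \cdot M_{uv}} \;=:\; \sum_{r \in V} c^+_2(r)\cdot \cost_\calC(r),
\end{align*}
where the inequality is exactly Lemma~\ref{lemma:varphi1}. This yields \ref{property:g+2-cost} and defines $c^+_2$. Next, I would repeat the coefficient bookkeeping from Lemma~\ref{lemma:f+2}: for a fixed $r \in V$, the contribution to $c^+_2(r)$ coming from the $\cost_\calC(u)$-terms (with $u=r$) is at most $\sum_{uv \in \varphi^+_2(r)} \frac{1}{\beta M_{uv}} \le \frac{|\varphi^+_2(r)|}{\beta d(r)}$ since $M_{uv}\ge d(r)$, and the same bound holds for the $\cost_\calC(v)$-terms (with $v=r$); summing gives $c^+_2(r) \le \frac{2|\varphi^+_2(r)|}{\beta d(r)}$, which together with the trivial $|\varphi^+_2(r)|\le d(r)$ gives \ref{property:g+2-c+2-infty}.

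For \ref{property:g+2-c+2-1}, I would replace each $\cost_\calC(\cdot)$ by $1$ in the definition of $c^+_2$, yielding $|c^+_2|_1 = \sum_{u\in U,\,uv\in\varphi^+_2}\frac{2}{\beta M_{uv}} \le \frac{2}{\beta}\sum_{u\in U}\frac{|\varphi^+_2(u)|}{d(u)}$ (again using $M_{uv}\ge d(u)$); the final $\le 2k/\beta$ follows from $|\varphi^+_2(u)|\le d(u)$ and $|U|=k$.

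There is essentially no obstacle here: the only subtlety is noticing that the argument of Lemma~\ref{lemma:f+2} never really used the LP values beyond $x_{uv}\le 1$, so the bound it produces is genuinely a bound on the combinatorial quantity $g^+_2$, and the two additional upper bounds $\frac{2}{\beta}$ and $\frac{2k}{\beta}$ are immediate consequences of $|\varphi^+_2(r)|\le d(r)$. I would therefore present the corollary as a direct reading of the proof of Lemma~\ref{lemma:f+2}, making these two extra inequalities explicit.
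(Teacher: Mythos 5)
Your proof is correct and matches the paper's intended argument: the paper states this corollary with essentially no separate proof, pointing out that the first step of Lemma~\ref{lemma:f+2}'s proof already bounds the combinatorial count $\sum_{u\in U,\,uv\in\varphi^+_2}1 = g^+_2$, so the same vector $c^+_2$ and the same coefficient bookkeeping carry over verbatim. You correctly identified this, and the extra inequalities $c^+_2(r)\le 2/\beta$ and $|c^+_2|_1\le 2k/\beta$ indeed follow immediately from $|\varphi^+_2(r)|\le d(r)$ and $|U|=k$.
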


We still need to bound $g^+_4$.

\begin{lemma}
    \label{lemma:g+3}
    There exists a vector $c^+_4 \in \R_{\geq 0}^{n}$ with the following properties:
    \begin{enumerate}[label=(\ref{lemma:g+3}\alph*)]
        \item \label{property:g+3-cost} $g^+_4 \leq \sum_{r \in V}c^+_4(r)\cdot \cost_\calC(r)$.
        \item \label{property:g+3-c+3-infty} $c^+_4(r) \leq  \frac{1}{\lambda} + \frac{1}{\beta} + \frac{1}{\lambda\beta} $, for every $r \in V$.
        \item \label{property:g+3-c+3-1} $|c^+_4|_1 \leq \big( \frac{1}{\lambda} + \frac{1}{\beta} + \frac{1}{\lambda\beta} \big)k$.
    \end{enumerate}
\end{lemma}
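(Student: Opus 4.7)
}

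The plan is to mimic the coefficient--bookkeeping strategy of Lemma~\ref{lemma:f+2} and Lemma~\ref{lemma:f+3}: derive a per--vertex upper bound for $|\varphi^+_4(u)|$ in terms of $\cost_\calC$--quantities, sum the bound over $u\in U$, read off the coefficient vector $c^+_4$, and verify the three properties with Claim~\ref{claim:using-coefficients} in mind. The key structural fact is that $\varphi^+_4$ only contains edges whose \emph{both} endpoints are light; in particular, $|\varphi^+_4(u)|=0$ whenever $u$ is heavy, so we can focus on light $u$.

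\smallskip

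The derivation of the per--vertex bound proceeds as follows. Since $\varphi^+_4(u)\subseteq E_H$, we have $|\varphi^+_4(u)|\leq d_H(u)\leq d(u)$. For light $u$ the definition gives $\lambda\, d(u)\leq d(u)-d_H(u)$, i.e.\ $u$ has lost at least $\lambda\, d(u)$ neighbors when passing from $G$ to $H$. Each such lost neighbor $w$ corresponds to an edge $uw\in E\setminus E_H$, which is either cut by $\calC$ (contributing $1$ to $\cost^+_\calC(u)\leq \cost_\calC(u)$) or satisfies $C(u)=C(w)$, in which case $uw\in\varphi^+_2(u)$. Hence $\lambda\,d(u)\leq \cost_\calC(u)+|\varphi^+_2(u)|$ for every light $u$, and consequently
\begin{equation*}
|\varphi^+_4(u)|\ \leq\ d(u)\ \leq\ \frac{\cost_\calC(u)+|\varphi^+_2(u)|}{\lambda}\qquad\text{for every }u\in V.
\end{equation*}
Summing over $u\in U$ gives $g^+_4\leq \frac{1}{\lambda}\bigl(\sum_{u\in U}\cost_\calC(u)+\sum_{u\in U}|\varphi^+_2(u)|\bigr)$. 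For the second term I plug in Lemma~\ref{lemma:varphi1}: for each $uv\in\varphi^+_2$, $1\leq \frac{\cost_\calC(u)+\cost_\calC(v)}{\beta\, M_{uv}}$, and then bound $M_{uv}\geq d(u)$ or $M_{uv}\geq d(v)$ as appropriate. This expresses $g^+_4$ as $\sum_r c^+_4(r)\cost_\calC(r)$ where $c^+_4(r)$ is a sum of three pieces: $\frac{1}{\lambda}\mathbb{1}[r\in U]$ from the first term, a contribution $\leq \frac{|\varphi^+_2(r)|}{\lambda\beta\, d(r)}\mathbb{1}[r\in U]$ from ``$u=r$'' in the second term, and a contribution $\leq \frac{|U\cap\phi^+_2(r)|}{\lambda\beta\, d(r)}$ from ``$v=r$''. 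Bounding $|\varphi^+_2(r)|,|U\cap\phi^+_2(r)|\leq d(r)$ yields the $|c^+_4|_\infty$ bound, and a symmetric double--counting argument (as in the $|c^+_2|_1$ calculation) yields the $|c^+_4|_1\leq (\cdots)k$ bound.

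\smallskip

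The main obstacle will be squeezing the coefficient bound down to exactly $\frac{1}{\lambda}+\frac{1}{\beta}+\frac{1}{\lambda\beta}$. The naive execution sketched above gives only $\frac{1}{\lambda}+\frac{2}{\lambda\beta}$, which is weaker by a $\frac{1-\lambda}{\lambda\beta}$ term -- in the final theorem this gap would inflate the approximation ratio noticeably. To close this gap I plan to exploit the disjointness of $\varphi^+_2(u)$ and $\varphi^+_4(u)$ as subsets of the $+$edges at $u$, which sharpens the per--vertex bound to $|\varphi^+_4(u)|\leq d(u)-|\varphi^+_2(u)| \leq \frac{\cost_\calC(u)}{\lambda}+\frac{(1-\lambda)\,|\varphi^+_2(u)|}{\lambda}$. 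Routing $|\varphi^+_2(u)|$ through Lemma~\ref{lemma:varphi1} with this sharper prefactor, and then balancing the ``$u=r$'' and ``$v=r$'' contributions (the former only charges $r\in U$, while the latter contributes across all $r$) should yield the claimed $|c^+_4|_\infty$ and $|c^+_4|_1$ bounds. This delicate balancing between the two halves of the $\varphi^+_2$ charging is exactly the step that must be done carefully.
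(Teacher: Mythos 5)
Your plan is genuinely different from the paper's, and unfortunately it does not reach the stated constants. Tracing through your sharpened per--vertex bound $|\varphi^+_4(u)|\leq d(u)-|\varphi^+_2(u)|\leq \frac{\cost_\calC(u)}{\lambda}+\frac{(1-\lambda)|\varphi^+_2(u)|}{\lambda}$, then routing $|\varphi^+_2(u)|$ through Lemma~\ref{lemma:varphi1}, gives a coefficient vector with
\[
c^+_4(r)\ \leq\ \frac{1}{\lambda}\ +\ \frac{1-\lambda}{\lambda\beta}\cdot\frac{|\varphi^+_2(r)|}{d(r)}\ +\ \frac{1-\lambda}{\lambda\beta}\cdot\frac{|U\cap\phi^+_2(r)|}{d(r)}\ \leq\ \frac{1}{\lambda}+\frac{2(1-\lambda)}{\lambda\beta},
\]
because for $r\in U$ both the ``$u=r$'' and ``$v=r$'' halves of the Lemma~\ref{lemma:varphi1} charging fire, and the $\ell_\infty$ bound is controlled by exactly those vertices. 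There is no further balancing to exploit: restricting the ``$u=r$'' half to $r\in U$ does not help the sup-norm, since $U$ is where the worst case lives. Now $\frac{2(1-\lambda)}{\lambda\beta}\leq \frac{1}{\beta}+\frac{1}{\lambda\beta}=\frac{1+\lambda}{\lambda\beta}$ iff $\lambda\geq 1/3$. The paper sets $\lambda=0.155$, so your bound is strictly weaker in the regime that matters, and the lemma as stated is not established by your route. (The same factor $\frac{2(1-\lambda)}{\lambda\beta}$ shows up in your $|c^+_4|_1$ bound, so that part fails too.)

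The missing ingredient is a \emph{truncation} of the deleted-edge set. The paper does not write $\lambda d(u)\leq |\varphi^+_2(u)|+\cost_\calC(u)$ and then charge against all of $\varphi^+_2(u)$. Instead it fixes an arbitrary subset $D(u)\subset\{uw:uw\in E\setminus E_H\}$ of size \emph{exactly} $\lambda d(u)$ (which exists because $u$ is light), rewrites each $1$ in $g^+_4=\sum_{u\in U,\,uv\in\varphi^+_4}1$ as $\frac{|D(u)|}{\lambda d(u)}$, and only then splits $D(u)$ into $D(u)\cap\varphi^+_2$ and $D(u)\setminus\varphi^+_2$. The crucial payoff is the cap $|D(u)\cap\varphi^+_2(u)|\leq \lambda d(u)$, which is a factor $\lambda$ smaller than the $|\varphi^+_2(u)|\leq d(u)$ you are forced to use. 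In the key triple sum $\sum_{u\in U}\frac{|\varphi^+_4(u)|\cdot|D(u)\cap\varphi^+_2(u)|\cdot\cost_\calC(u)}{\lambda\beta\,d(u)^2}$, plugging in $|\varphi^+_4(u)|\leq d(u)$ and $|D(u)\cap\varphi^+_2(u)|\leq\lambda d(u)$ makes the extra $\lambda$ cancel the $1/\lambda$ out front, yielding coefficient $\frac{1}{\beta}$ on $\cost_\calC(u)$ rather than $\frac{1}{\lambda\beta}$. Your disjointness sharpening only shaves a $(1-\lambda)$ factor, which does not substitute for this cancellation. To repair the proof, replace the inequality $\lambda d(u)\leq |\varphi^+_2(u)|+\cost_\calC(u)$ and the downstream use of $|\varphi^+_2(u)|$ by the truncated set $D(u)$ and $|D(u)\cap\varphi^+_2(u)|\leq\lambda d(u)$.
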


% \begin{lemma}
% \label{lem:deletededgesatstep2}
% For any graph $G$, any clustering $\mathcal{C}$, any integer $k$ (s.t. $1\le k\le n$),  any subset $U \subset V$ such that $|U| = k$. For any $u_i \in U$, let $\varphi^+_4(u) = \{uv \mid uv \in E(H) \setminus E(\Tilde{G}), \mathcal{C}(u) = \mathcal{C}(v) \}$ be the set of edges that are deleted at step \ref{alg:pre-clusteringsecond} in Algorithm~\ref{alg:pre-clustering} and are not cut in $\mathcal{C}$. Then, 
% \begin{align*}
%     \sum_{u \in U} |\varphi^+_4(u)| \leq \big( \frac{1}{\lambda} + \frac{1}{\beta} + \frac{1}{\lambda\beta} \big)\cdot \cost^k_{\mathcal{C}}
% \end{align*}

% % For a given instance $G$, any clustering $\mathcal{C}$ and any integer $k$ (s.t. $1\le k\le n$), we choose $k$ \textbf{different} vertices $u_1,u_2,u_3,...,u_k \in V$, which form the set $U$. \textbf{Convert all $+$ edges between two light vertices into $-$ edges.} Let the number of edges converted incident to $u_i \in U$ that are not cut in $\mathcal{C}$ to be $\phi(u_i)$. Then, $\sum_{u_i \in U} \phi(u_i)$ is at most $O(1) \cdot \mathbf{OBJ}(G,k,\mathcal{C})$.
% \end{lemma}

\begin{proof}
For any vertex $u \in U$ with $|\varphi^+_4(u)| > 0$, we know that we lose at least $\lambda$ fraction of nodes at step 1. Therefore, let $D(u) \subset \{uv \mid uv \in E_G \setminus E_H \}$ be an arbitrary subset of deleted edges at step 1 with $|D(u)| = \lambda \cdot d(u)$. Then, we have 

\begin{align*}
    g^+_4 &= \sum_{u \in U, uv \in \varphi^+_4} 1 \leq \sum_{u \in U, uv \in \varphi^+_4} \frac{|D(u)|}{\lambda d(u)}
    = \sum_{u \in U, uv \in \varphi^+_4} \frac{1}{\lambda d(u)} \cdot \sum_{uw \in D(u)} 1 \\
    &= \sum_{u \in U, uv \in \varphi^+_4} \frac{1}{\lambda d(u)} \cdot \sum_{uw \in D(u) \cap \varphi^+_2} 1 + \sum_{u \in U, uv \in \varphi^+_4} \frac{1}{\lambda d(u)} \cdot \sum_{uw \in D(u) \setminus \varphi^+_2} 1 \\
    &\leq \sum_{u \in U, uv \in \varphi^+_4} \frac{1}{\lambda d(u)} \cdot \sum_{uw \in D(u) \cap \varphi^+_2} 1 + \sum_{u \in U, uv \in \varphi^+_4} \frac{1}{\lambda d(u)} \cdot \cost_{\calC}(u) \\
    &\leq \sum_{u \in U, uv \in \varphi^+_4} \frac{1}{\lambda d(u)} \cdot \sum_{uw \in D(u) \cap \varphi^+_2} 1 + \frac{1}{\lambda} \sum_{u \in U} \cost_{\calC}(u)
\end{align*}

The second inequality holds because every edge in $D(u) \setminus \varphi^+_2(u)$ will contribute one disagreement to $u$. The third inequality holds because $|\varphi^+_4(u)| \leq d(u)$.

By Lemma \ref{lemma:varphi1}, we know that for any $uw \in \varphi^+_2(u)$, we have 
\begin{align*}
    \frac{1}{\beta} \cdot \big( \frac{\cost_{\mathcal{C}}(u)}{M_{uw}}+ \frac{\cost_{\mathcal{C}}(w)}{M_{uw}}\big) \geq 1
\end{align*}

so, 
\begin{align*}
    g^+_4 &\leq \sum_{u \in U, uv \in \varphi^+_4} \frac{1}{\lambda d(u)}  \sum_{uw \in D(u) \cap \varphi^+_2}\frac{1}{\beta} \big( \frac{\cost_{\mathcal{C}}(u)}{M_{uw}}+ \frac{\cost_{\mathcal{C}}(w)}{M_{uw}}\big) + \frac{1}{\lambda} \sum_{u \in U} \cost_{\calC}(u) \\
    &\leq \frac{1}{\lambda\beta} \sum_{\substack{u \in U, uv \in \varphi^+_4(u) \\ uw \in D(u) \cap \varphi^+_2(u)}} \frac{\cost_{\mathcal{C}}(u)}{d(u) \cdot d(u)}+ \frac{1}{\lambda\beta}\sum_{\substack{u \in U, uv \in \varphi^+_4(u) \\ uw \in D(u) \cap \varphi^+_2(u)}} 
 \frac{\cost_{\mathcal{C}}(w)}{d(u) \cdot M_{uw}} + \frac{1}{\lambda} \sum_{u \in U} \cost_{\calC}(u) \\
     &\leq \frac{1}{\lambda\beta} \sum_{u \in U} \frac{|\varphi^+_4(u)|\cdot |D(u) \cap \varphi^+_2(u)|\cdot \cost_{\mathcal{C}}(u)}{d(u) \cdot d(u)}+ \frac{1}{\lambda\beta} \sum_{\substack{u \in U, uv \in \varphi^+_4(u) \\ uw \in D(u) \cap \varphi^+_2(u)}} 
 \frac{\cost_{\mathcal{C}}(w)}{d(u) \cdot M_{uw}} + \frac{1}{\lambda} \sum_{u \in U} \cost_{\calC}(u) \\
      &\leq \frac{1}{\lambda\beta} \sum_{u \in U} \frac{d(u)\cdot \lambda d(u) \cdot \cost_{\mathcal{C}}(u)}{d(u) \cdot d(u)}+ \frac{1}{\lambda\beta} \sum_{\substack{u \in U, uv \in \varphi^+_4(u) \\ uw \in D(u) \cap \varphi^+_2(u)}} 
 \frac{\cost_{\mathcal{C}}(w)}{d(u) \cdot M_{uw}} + \frac{1}{\lambda} \sum_{u \in U} \cost_{\calC}(u) \\
      &= \frac{1}{\lambda\beta} \cdot \sum_{\substack{u \in U, uv \in \varphi^+_4(u) \\ uw \in D(u) \cap \varphi^+_2(u)}} 
 \frac{\cost_{\mathcal{C}}(w)}{d(u) \cdot M_{uw}}+ \big(\frac{1}{\lambda} + \frac{1}{\beta} \big)\sum_{u \in U} \cost_{\mathcal{C}}(u) \\
 &= \sum_{r \in V} c^+_4(r) \cdot \cost_{\mathcal{C}}(r) 
\end{align*}

The fourth inequality holds because $|\varphi^+_4(u)| \leq d(u)$ and $|D(u) \cap \varphi^+_2(u)| \leq \lambda d(u)$.

   To show \ref{property:g+3-c+3-infty}, we bound the coefficients for $\cost_\calC(u)$ and $\cost_\calC(w)$ respectively.  If $u \in U$, the coefficient for $\cost_\calC(u)$ is $\frac{1}{\lambda} + \frac{1}{\beta}$; if $u \notin U$, the coefficient is $0$.  The coefficient for $\cost_\calC(w)$ is 
   \begin{align*}
\frac{1}{\lambda\beta} \cdot \sum_{\substack{uw \in D(u) \cap \varphi^+_2(u) \\ u \in U, uv \in \varphi^+_4(u) }} \frac{1}{d(u) M_{uw}} 
    &\leq \frac{1}{\lambda\beta} \cdot \sum_{\substack{uw \in D(u) \cap \varphi^+_2(u) \\ u \in U }} \frac{|\varphi^+_4(u)|}{d(u) M_{uw}} 
    &\leq \frac{1}{\lambda\beta} \cdot \sum_{\substack{uw \in D(u) \cap \varphi^+_2(u) \\ u \in U }} \frac{1}{ M_{uw}} \leq \frac{1}{\lambda\beta}
   \end{align*}
   
 Therefore, $c^+_4(r) \leq \frac{1}{\lambda} + \frac{1}{\beta} + \frac{1}{\lambda\beta}$.

    To bound $|c^+_4|_1$, we can replace $\cost_\calC(w)$ and $\cost_\calC(u)$ with 1. Then 
\begin{align*}
   |c^+_4|_1 &= \frac{1}{\lambda\beta} \sum_{\substack{u \in U, uv \in \varphi^+_4(u) \\ uw \in D(u) \cap \varphi^+_2(u)}} 
 \frac{1}{d(u) \cdot M_{uw}} +  \big(\frac{1}{\lambda} + \frac{1}{\beta} \big) \sum_{u \in U} 1\\
    &\leq \frac{1}{\lambda\beta} \sum_{u \in U} \frac{|\varphi^+_4(u)|\cdot |\varphi^+_2(u)|}{d(u) \cdot d(u)} +  \big(\frac{1}{\lambda} + \frac{1}{\beta} \big) k \\
  & \leq \big( \frac{1}{\lambda} + \frac{1}{\beta} + \frac{1}{\lambda\beta} \big)k
\end{align*}
 This proves \ref{property:g+3-c+3-1}. 
\end{proof}

\begin{lemma}
\label{lemma:g+1+2+3}
$g^+_1 + g^+_2 + g^+_4 \leq (\frac{3}{\beta} + \frac{1}{\lambda} + \frac{1}{\beta\lambda} + 1)\cost^k_{\calC}$
\end{lemma}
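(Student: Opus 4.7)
The plan is to combine the three pieces that have already been set up: the trivial bound on $g^+_1$, Corollary \ref{coro:g+2} for $g^+_2$, and Lemma \ref{lemma:g+3} for $g^+_4$. Each of the latter two bounds is expressed as $\sum_{r \in V} c(r) \cost_\calC(r)$ with explicit control on both $|c|_\infty$ and $|c|_1$, which is exactly the form needed for Claim \ref{claim:using-coefficients} to convert into a multiple of $\cost^k_\calC$.

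First I would handle $g^+_1$ directly. Every edge counted in $\varphi^+_1(u)$ is incident to $u$ and is cut in $\calC$, so it contributes to $\cost_\calC(u)$. Hence $g^+_1 = \sum_{u \in U} |\varphi^+_1(u)| \leq \sum_{u \in U} \cost_\calC(u) = \cost^k_\calC$.

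Next, I apply Claim \ref{claim:using-coefficients} with $\alpha = 2/\beta$ to the vector $c^+_2$ from Corollary \ref{coro:g+2}: properties \ref{property:g+2-c+2-infty} and \ref{property:g+2-c+2-1} give $|c^+_2|_\infty \leq 2/\beta$ and $|c^+_2|_1 \leq (2/\beta) k$, so together with \ref{property:g+2-cost} this yields $g^+_2 \leq (2/\beta)\cost^k_\calC$. Likewise, for the vector $c^+_4$ from Lemma \ref{lemma:g+3}, properties \ref{property:g+3-c+3-infty} and \ref{property:g+3-c+3-1} give $|c^+_4|_\infty \leq 1/\lambda + 1/\beta + 1/(\lambda\beta)$ and $|c^+_4|_1 \leq (1/\lambda + 1/\beta + 1/(\lambda\beta)) k$; combining with \ref{property:g+3-cost} via Claim \ref{claim:using-coefficients} gives $g^+_4 \leq (1/\lambda + 1/\beta + 1/(\lambda\beta))\cost^k_\calC$.

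Summing the three bounds yields
\[
g^+_1 + g^+_2 + g^+_4 \leq \left(1 + \frac{2}{\beta} + \frac{1}{\lambda} + \frac{1}{\beta} + \frac{1}{\lambda\beta}\right) \cost^k_\calC = \left(\frac{3}{\beta} + \frac{1}{\lambda} + \frac{1}{\lambda\beta} + 1\right) \cost^k_\calC,
\]
which matches the claim. There is no real obstacle here; the work was already done in proving the two lemmas/corollary above, and the only thing to verify is that the arithmetic of combining $1 + 2/\beta + (1/\lambda + 1/\beta + 1/(\lambda\beta))$ gives the stated constant.
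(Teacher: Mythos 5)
Your proof is correct and follows the same route as the paper: bound $g^+_1$ directly by $\cost^k_\calC$, then invoke Claim~\ref{claim:using-coefficients} separately with the coefficient vectors $c^+_2$ and $c^+_4$ supplied by Corollary~\ref{coro:g+2} and Lemma~\ref{lemma:g+3}, and sum. The arithmetic $1 + \tfrac{2}{\beta} + \tfrac{1}{\lambda} + \tfrac{1}{\beta} + \tfrac{1}{\lambda\beta} = 1 + \tfrac{3}{\beta} + \tfrac{1}{\lambda} + \tfrac{1}{\lambda\beta}$ checks out.
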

\begin{proof}
    By Claim \ref{claim:using-coefficients}, Corollary \ref{coro:g+2} and Lemma \ref{lemma:g+3}, we have 
    \begin{align*}
        g^+_1 + g^+_2 + g^+_4 &\leq \cost^k_{\calC} + \sum_{r \in V} c^+_2(r) \cdot \cost_{\calC}(r) + \sum_{r \in V} c^+_4(r) \cdot \cost_{\calC}(r) \\
        &\leq (\frac{3}{\beta} + \frac{1}{\lambda} + \frac{1}{\beta\lambda} + 1)\cost^k_{\calC}
    \end{align*}
\end{proof}

Lemma \ref{lemma:g+1+2+3} gives us a way to bound the cost of deleted edges. Now consider the non-complete graph $G_2$ obtained from $G$ by removing any $+$edge $(u, v)$ (i.e., changing it into a "neutral" edge, and the cost of this edge is $0$) where $u$ and $v$ belong to different connected components of $\tilde{G}$. Note that for any clustering, the cost in $G_2$ is no more than the cost in $G$ since we set some $+$edge costs to 0. Now we are going to bound the cost of $F$ in $G_2$, where $F$ is the clustering output by Algorithm \ref{alg:pre-clustering}. This can give us the relationship between $F$ and $\calC$ in $G_2$. 

Since we will deal with multiple input graphs $G_2$, we include $G_2$ explicitly when necessary. Given a correlation clustering instance $G_2$, an integer $k$, and a clustering $\mathcal{C}$, we denote the disagreement vector in $G_2$ by $\cost_\calC(G_2)$. Consequently, the disagreement for a node $u$ in $G_2$ is specified by $\cost_\calC(G_2, u)$. For a subset $S \subseteq V$, the top-$k$ value on $G_2$ is represented by $\cost^k_\calC(G_2, S) = \max_{T \subseteq S, |T| = k} \sum_{u \in T} \cost_\calC(G_2, u)$. When $k \geq |S|$, then $\cost^k_\calC(G_2, S) = \sum_{u \in S} \cost_\calC(G_2, u)$

The following lemma provides the key insights that $F$ is a good clustering in $G_2$.

\begin{lemma}[Lemma 3.4 of \cite{cohen2021correlation}]
\label{lemma:preclusteringconnecttomostnodesinsameclustering}
    Suppose $5\beta + 2\lambda < 1$. Let $CC \in F$ be a connected component of $\widetilde{G}$ such that $|CC|\ge 2$. Then for each vertex $u\in CC$ we have that 
    \begin{align*}
        d(u,CC)\ge (1-8\beta-\lambda)|CC|.
    \end{align*}
\end{lemma}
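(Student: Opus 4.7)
The plan is to leverage the existence of a heavy vertex near $u$ inside $CC$ and combine two ingredients: the degree-preservation property of $E_H$-edges (Lemma~\ref{lemma:Hdegreebound}) and the neighborhood-similarity property $|N(a)\Delta N(b)|\le \beta M_{ab}$ that holds for every $ab\in E_H$.

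First I would pick a heavy vertex $w\in CC$ close to $u$: if $u$ itself is heavy, set $w=u$; otherwise, since $u\in CC$ with $|CC|\ge 2$, $u$ has at least one incident $E_{\widetilde G}$-edge, and because $\widetilde G$ drops every edge between two light vertices, the other endpoint $w$ of that edge must be heavy. In both cases $uw\in E_H$ (trivially via the self-loop when $w=u$), so $|N(u)\Delta N(w)|\le \beta M_{uw}$ and $d(u),d(w)$ are within a factor $1/(1-\beta)$ of each other by Lemma~\ref{lemma:Hdegreebound}.

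Second I would lower-bound $|N(w)\cap CC|$. Every $x\in N_H(w)$ satisfies $wx\in E_H$, and because $w$ is heavy we also have $wx\in E_{\widetilde G}$, placing $x\in CC$; furthermore $wx\in E\subseteq E_G$ gives $x\in N(w)$. Hence $N_H(w)\subseteq N(w)\cap CC$, and heaviness of $w$ gives $|N_H(w)|=d_H(w)\ge(1-\lambda)d(w)$, so $|N(w)\cap CC|\ge (1-\lambda)d(w)$. Then I would upper-bound $|CC\setminus N(w)|$: for each $v\in CC\setminus N(w)$, $v$ admits a $\widetilde G$-neighbor $v'\in CC$ with either $v$ or $v'$ heavy, and I intend to chase a short $\widetilde G$-path from $w$ to $v$, accumulating symmetric-difference error via the identity $|N(a)\Delta N(c)|\le|N(a)\Delta N(b)|+|N(b)\Delta N(c)|$. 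The hypothesis $5\beta+2\lambda<1$ is precisely what keeps the drift controlled, yielding a bound of the form $|CC\setminus N(w)|=O((\beta+\lambda)\,d(w))$ and hence $|CC|=(1+O(\beta+\lambda))d(w)$.

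Finally I combine the pieces through $|N(u)\cap CC|\ge|N(w)\cap CC|-|N(w)\setminus N(u)|\ge(1-\lambda)d(w)-\beta M_{uw}$, and divide by the upper bound on $|CC|$ after substituting $M_{uw}\le d(w)/(1-\beta)$. Tracking the constants (and using $5\beta+2\lambda<1$ to keep denominators positive) gives the ratio $(1-8\beta-\lambda)$ claimed in the lemma. The main obstacle will be the upper bound on $|CC\setminus N(w)|$: because $\widetilde G$-paths inside $CC$ can a priori be long, naively composing the per-edge bound $\beta M_{ab}$ along a path produces a multiplicative factor $1/(1-\beta)^k$ that blows up. The fix, following \cite{cohen2021correlation}, is to exploit the structure that every light vertex in $CC\setminus N(w)$ is $E_H$-adjacent to some heavy vertex in $CC$, so one only ever has to control one or two $\widetilde G$-hops away from $w$; the quantitative assumption $5\beta+2\lambda<1$ is exactly what makes this two-hop analysis close.
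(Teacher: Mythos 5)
Your high-level skeleton — anchor $u$ at a nearby heavy vertex $w$ (either $u$ itself or a heavy $\widetilde G$-neighbor), use heaviness to get $|N(w)\cap CC|\ge |N_H(w)|\ge(1-\lambda)d(w)$, transfer this to $u$ via $uw\in E_H$, and divide by an upper bound on $|CC|$ — is the right shape, and it matches where the coefficients $\beta$ and $\lambda$ should enter. The problem is that the crucial step, the upper bound $|CC\setminus N(w)| = O(\beta)\,d(w)$, is asserted rather than proved. You correctly observe that chaining $|N(a)\Delta N(b)|\le\beta M_{ab}$ along an arbitrary $\widetilde G$-path inside $CC$ loses a $1/(1-\beta)^k$ factor, but your proposed fix does not actually escape this: the statement ``every light $v\in CC\setminus N(w)$ is $E_H$-adjacent to some heavy $w'\in CC$'' controls a single hop from $v$ to $w'$, and says nothing about the $\widetilde G$-distance from $w'$ back to the anchor $w$. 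There is no reason $w'$ is one or two hops from $w$; it can be arbitrarily far. So the asserted ``two-hop analysis'' does not follow from the structural fact you invoke, and since bounding $|CC|$ in terms of $d(w)$ is the entire technical content of the lemma, this is a genuine gap.

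A secondary issue is the constant tracking. If one plugs $|CC|\le(1+c_1\beta+c_2\lambda)d(w)$ into your final inequality $|N(u)\cap CC|\ge(1-\lambda-\tfrac{\beta}{1-\beta})d(w)$, the resulting ratio is roughly $1-(1+c_1)\beta-(1+c_2)\lambda$; recovering exactly $1-8\beta-\lambda$ forces $c_1\le 7$ (roughly) and, more delicately, $c_2$ to essentially vanish. Your claimed bound $|CC|=(1+O(\beta+\lambda))\,d(w)$ carries a $\lambda$-term that is not obviously compatible with the target coefficient of $\lambda$, so ``tracking the constants'' cannot be deferred here — it has to be checked. For context, the present paper does not reprove this statement; it is cited verbatim as Lemma~3.4 of \cite{cohen2021correlation}, whose proof establishes the $|CC\setminus N(w)|$ bound by a more global structural argument rather than the local two-hop chase you sketch.
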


Lemma \ref{lemma:preclusteringconnecttomostnodesinsameclustering} tells us that each node $u \in V$ is connected to no less than $\frac{5}{8}$ fractions of the vertices that belong to the same cluster in $F$ when assuming $8\beta + \lambda \leq 3/8$. Now we can bound the cost of $F$ in $G_2$.

\begin{lemma}
\label{lem:nearoptimalindeletedgraph}
Let $G_2$ be a non-complete graph obtained from G by removing any $+$ edge {u, v} (i.e., changing it into a
“neutral” edge) where u and v belong to different connected components of $\Tilde{G}$. Let $\mathcal{C}^*$ be the top-$k$ optimal clustering for graph $G_2$. Assuming $8\beta + \lambda \leq 3/8$. Then, our algorithm outputs solution $F$ such that 
\begin{align*}
    \cost^k_{F}(G_2) \leq 7\cdot \cost^k_{\calC^*}(G_2).
\end{align*}
\end{lemma}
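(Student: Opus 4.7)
First, since $F$ places each connected component $C$ of $\tilde G$ in its own cluster and $G_2$ deletes every $+$edge between distinct components, $F$'s disagreements in $G_2$ are exactly the $-$edges inside components: $\cost_F(G_2,u) = |C(u)| - d(u, C(u))$, which by Lemma~\ref{lemma:preclusteringconnecttomostnodesinsameclustering} is at most $\delta|C(u)|$ with $\delta := 8\beta+\lambda \leq 3/8$. Since top-$k$ norms decompose across vertex partitions as $\cost^k_F(G_2) = \max_{\sum_C k_C = k,\, k_C \le |C|}\sum_C \cost^{k_C}_F(G_2, C)$, and the same holds for $\calC^*$, it suffices to prove the per-component inequality $\cost^{k'}_F(G_2, C) \leq 7\,\cost^{k'}_{\calC^*}(G_2, C)$ for each component $C$ and each $k' \in [|C|]$.

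For the per-component bound, fix $C$ and set $S_u := C^*_u \cap C$ for each $u \in C$. A direct edge-counting calculation gives the identity
\[
  \cost_F(G_2, u) - \cost_{\calC^*}(G_2, u)\big|_{\text{edges inside }C} \;=\; |C \setminus S_u| - 2\, d(u, C \setminus S_u),
\]
and plugging in $d(u, C \setminus S_u) \geq \max\{(1-\delta)|C| - |S_u|,\, 0\}$ from Lemma~\ref{lemma:preclusteringconnecttomostnodesinsameclustering} bounds the right-hand side by a piecewise-linear function $\phi(|S_u|)$ that vanishes on $[0,(1-2\delta)|C|]$, rises linearly to $\delta|C|$ at $|S_u|=(1-\delta)|C|$, and decays linearly back to $0$ at $|S_u|=|C|$. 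Summing over a top-$k'$ subset $T$ of $C$ maximizing $\sum_{u \in T}\cost_F(G_2, u)$ gives
\[
  \cost^{k'}_F(G_2, C) \;\leq\; \cost^{k'}_{\calC^*}(G_2, C) + \sum_{u \in T}\phi(|S_u|),
\]
so the task reduces to showing the excess is at most $6\,\cost^{k'}_{\calC^*}(G_2, C)$.

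The excess is nonzero only when $u$ lies in a ``big'' $\calC^*$-sub-cluster $S_i$ of $C$ with $|S_i| > (1-2\delta)|C| = |C|/4$; at most three such big sub-clusters can fit inside one $C$. A short computation shows that whenever $|S_u| \leq (4/7)|C|$ the pointwise lower bound $\cost_{\calC^*}(u) \geq (1-\delta)|C| - |S_u|$ already absorbs $\phi(|S_u|)$ within a factor of $7$, so the only remaining case is one big cluster $S_1$ of size $\gamma_1|C|$ with $\gamma_1 > 4/7$. In that case, Lemma~\ref{lemma:preclusteringconnecttomostnodesinsameclustering} forces $d(v, S_1) \geq (\gamma_1 - \delta)|C|$ for every $v \in C \setminus S_1$; these cut $+$edges give a pointwise lower bound $\cost_{\calC^*}(v) \geq (\gamma_1 - \delta)|C|$ on $C \setminus S_1$, and counting each cut edge from its $S_1$ endpoint contributes at least $|C \setminus S_1|(\gamma_1 - \delta)|C|$ in total to $\sum_{u \in S_1}\cost_{\calC^*}(u)$. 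Splitting into the two sub-cases $k' \leq |C \setminus S_1|$ (use the pointwise bound) and $k' > |C \setminus S_1|$ (use the summed bound from both sides of the cut) shows that the excess is at most $6\,\cost^{k'}_{\calC^*}(G_2, C)$ once $\delta \leq 3/8$ is plugged in, yielding the desired factor of $7$.

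The hardest sub-regime is $\gamma_1 \in (1-\delta, 1)$, i.e.\ $|S_1|$ very close to $|C|$: individual vertices in $S_1$ can have $\cost_{\calC^*}(u) = 0$ while $\cost_F(u)$ is still positive, so no per-vertex charging is possible; the excess must be absorbed entirely by the $\cost_{\calC^*}$ that the near-clique structure of $C$ forces from $+$edges cut across the $S_1$--$(C \setminus S_1)$ boundary, distributed across the vertices of $S_1$ via pigeonhole. The parameter regime $8\beta+\lambda \leq 3/8$ is exactly what makes the final arithmetic close at the constant $7$ rather than a larger number.
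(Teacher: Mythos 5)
Your overall strategy matches the paper's at the top level---reduce to a per-component, per-$k'$ inequality via the partition of $V$ into $\tilde G$-components, and appeal to Lemma~\ref{lemma:preclusteringconnecttomostnodesinsameclustering} to compare $F$-cost against $\mathcal{C}^*$-cost inside each component---but the details you use to close the gap are genuinely different. The paper first normalizes $\mathcal{C}^*$ so that it refines $F$ (observing that splitting a $\mathcal{C}^*$-cluster along component boundaries cannot increase cost), then case-splits on whether the largest sub-cluster $K_1$ has size $< |K|/2$ or $\ge |K|/2$, and in the second case partitions the top-$t$ set into $U_1 \subseteq K_1$ and $U_2 \subseteq K \setminus K_1$ and charges them separately ($U_2$ via the pointwise bound $\cost_{\mathcal{C}^*}(u) \ge |K|/8$, $U_1$ via a $-$edge count across the $K_1$--$(K\setminus K_1)$ cut, further splitting on whether $|U_1| \ge |K\setminus K_1|$). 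You instead work directly with $S_u := C^*_u \cap C$, derive the clean per-vertex excess identity and the piecewise-linear bound $\phi(|S_u|)$, and case-split on the size of the (at most one) sub-cluster exceeding $(4/7)|C|$. Your parameterization is arguably cleaner and makes the role of $\delta = 8\beta+\lambda$ more transparent, and it avoids the paper's WLOG normalization step.

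That said, the last third of your argument is compressed to a point where it is not self-verifying. Two places need explicit care. First, your per-vertex charging for $T\setminus S_1$ and your pigeonhole charging for $T\cap S_1$ both draw from the same top-$k'$ budget $\cost^{k'}_{\mathcal{C}^*}(G_2,C)$; to avoid double-counting you need to exhibit a single set $T'$ of size $k'$ (take $T' = (T\setminus S_1) \cup B$ with $B \subseteq S_1$, $|B| = |T\cap S_1|$) such that $6\sum_{u\in T\setminus S_1}\cost_{\mathcal{C}^*}(u) + |T\cap S_1|\,\phi(\gamma_1|C|) \le 6\sum_{u\in T'}\cost_{\mathcal{C}^*}(u)$, which reduces to checking $\phi(\gamma_1|C|) \le 6\cdot\frac{1}{|S_1|}\sum_{u\in S_1}\cost_{\mathcal{C}^*}(u)$ (the pigeonhole). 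Second, your "short computation" that per-vertex charging works when $|S_u|\le (4/7)|C|$ uses the threshold $1-8\delta/7$ evaluated at the worst case $\delta=3/8$; that is fine, but it should be stated that the threshold is $\delta$-dependent and $(4/7)|C|$ is the binding case. I carried out these checks and the arithmetic does close at the constant $7$ under $8\beta+\lambda\le 3/8$, so your route is correct---but as written it relies on the reader reconstructing the charging scheme, whereas the paper's case analysis, while less elegant, is explicit at every step.
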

\begin{proof}

Consider any cluster $\tilde{C} \in \mathcal{C}^*$. We claim that the nodes within $\tilde{C}$ originate from the same cluster in $F$. If this were not the case, $\tilde{C}$ could be divided further without increasing the disagreement vector. For a cluster $K$ in $F$, the optimal solution $\mathcal{C}^*$ might partition $K$ into subclusters $K_1, K_2, \ldots, K_l$, arranged such that $|K_1| \geq |K_2| \geq \ldots \geq |K_l|$. Assume that $\cost^k_{F}(G_2)$ involves $t$ nodes from $K$. We aim to demonstrate that:
\begin{align*}
    \cost^t_{F}(G_2, K) &\leq 7 \cdot \cost^t_{\mathcal{C}^*}(G_2, K).
\end{align*}
Applying this relation iteratively across all clusters in $F$ yields the desired result. Consider any node $u$ in $K$. According to Lemma \ref{lemma:preclusteringconnecttomostnodesinsameclustering}, we have:
\begin{align*}
    d(u, K) &\geq \frac{5|K|}{8}.
\end{align*}
Consequently, it follows that:
\begin{align*}
    \cost_{F}(G_2, u) &\leq \frac{3|K|}{8}.
\end{align*}

To show $\cost^t_{F}(G_2, K) \leq 7\cdot \cost^t_{\mathcal{C}^*}(G_2, K)$, we consider two cases: $|K_1| < |K|/2$ or $|K_1| \geq |K|/2$. 

If $|K_1| < |K|/2$, then for any $u \in K_i$, since $d(u, K) \geq \frac{5|K|}{8}$, we have:
\begin{align*}
    d(u, K \setminus K_i) &\geq \frac{5|K|}{8} - |K_i| \\
    &\geq \frac{|K|}{8}.
\end{align*}
This implies that each node $u \in K$ connects to at least $\frac{|K|}{8}$ nodes outside its own cluster in $\mathcal{C}^*$. Consequently, $\cost_{\mathcal{C}^*}(G_2, u) \geq \frac{|K|}{8}$. Therefore, we can conclude:
\begin{align*}
    \cost^t_{F}(G_2, K) &\leq 3 \cdot \cost^t_{\mathcal{C}^*}(G_2, K).
\end{align*}

Now assume $|K_1| \geq \frac{|K|}{2}$. Let $U_1$ be the set of vertices that $\cost^t_{F}(G_2)$ considers in $K_1$, and $U_2$ be the set of vertices it considers in $K \setminus K_1$. 

We analyze the cost increase when combining $K_1, K_2, \ldots, K_l$ into $K$. First, we focus on nodes from $U_2$. Since $|K_1| \ge \frac{|K|}{2}$, the sizes of $K_2, K_3, \ldots, K_l$ are each no more than $\frac{|K|}{2}$. For any node $u$ in $K \setminus K_1$, we have:
\begin{align*}
    \cost_{\mathcal{C}^*}(G_2, u) &\geq d(u, K) - d(u,K\setminus K_1) \geq d(u, K) - \frac{|K|}{2} \geq \frac{|K|}{8}.
\end{align*}
For any node in $U_2$, we have:
\begin{align*}
    \cost_{F}(G_2, u) &\leq \frac{3|K|}{8}.
\end{align*}
Therefore, the maximum increase in cost for $U_2$ when we combine $K_1, K_2, \ldots, K_l$ into $K$ is at most:
\begin{align*}
    \cost^t_{F}(G_2, U_2) - \cost^t_{\mathcal{C}^*}(G_2, U_2) &\leq 2 \cdot \cost^t_{\mathcal{C}^*}(G_2, U_2) \\
    &\leq 2 \cdot \cost^t_{\mathcal{C}^*}(G_2, K).
\end{align*}

The challenging case concerns the increase in cost due to $U_1$. We consider the $-$edges between $U_1$ and $K \setminus K_1$, as these edges will lead to an increased cost for $U_1$. When $|U_1| \leq \frac{3|K|}{8}$, the number of $-$edges between $U_1$ and $K \setminus K_1$ is at most:
\begin{align*}
    %|K \setminus K_1| \leq 
    (|K| - |K_1|) \cdot |U_1|.
\end{align*}
When $|U_1| > \frac{3|K|}{8}$, for each node $u \in K \setminus K_1$, we have:
\begin{align*}
    d(u, U_1) &\geq \frac{5|K|}{8} - (|K| - |U_1|) \\
    &\geq |U_1| - \frac{3|K|}{8}.
\end{align*}
Thus, the number of $+$edges between $U_1$ and $K \setminus K_1$ is at least:
\begin{align*}
    \sum_{u\in U_1}d(u, K \setminus K_1) &= \sum_{v \in K \setminus K_1}d(v, U_1) \\
    &\geq (|K| - |K_1|) (|U_1| - \frac{3|K|}{8}),
\end{align*}
and the number of $-$edges between $U_1$ and $K \setminus K_1$ is at most:
\begin{align*}
    |U_1| \cdot |K \setminus K_1| - \sum_{u\in U_1}d(u, K \setminus K_1) &\leq (|K| - |K_1|) \cdot |U_1| - (|K| - |K_1|) (|U_1| - \frac{3|K|}{8}) \\
    &\leq (|K| - |K_1|) \cdot \frac{3|K|}{8}.
\end{align*}
Therefore, the increase in disagreement for $U_1$ is at most:
\begin{align*}
    \cost^t_{F}(G_2, U_1) - \cost^t_{\mathcal{C}^*}(G_2, U_1) &\leq (|K| - |K_1|) \cdot \min(|U_1|, \frac{3|K|}{8}).
\end{align*}
Now we connect this increase to the cost of $K$ in $\calC^*$. When $|U_1| \geq |K \setminus K_1|$, we have 
\begin{align*}
    \cost^t_{\mathcal{C}^*}(G_2, K) &\geq \sum_{u \in K \setminus K_1} \cost_{\mathcal{C}^*}(G_2, u) \\
    &\geq (|K| - |K_1|) \frac{|K|}{8}.
\end{align*}
The second inequality holds since for each node $u \in K\setminus K_1$, we have $\cost_{\mathcal{C}^*}(G_2, u) \geq |K| / 8$. This implies:
\begin{align*}
    \cost^t_{F}(G_2, U_1) - \cost^t_{\mathcal{C}^*}(G_2, U_1) &\leq 3\cost^t_{\mathcal{C}^*}(G_2, K).
\end{align*}
When $|U_1| < |K \setminus K_1|$, if we consider $|U_1|$ nodes from $K \setminus K_1$ for $\mathcal{C}^*$, we find:
\begin{align*}
    \cost^t_{\mathcal{C}^*}(G_2, K) &\geq |U_1| \cdot \frac{|K|}{8}.
\end{align*}
On the other side, we have
\begin{align*}
    \cost^t_{F}(G_2, U_1) - \cost^t_{\mathcal{C}^*}(G_2, U_1) &\leq (|K| - |K_1|) \cdot \min(|U_1|, \frac{3|K|}{8}) \\
    &\leq \frac{|K|}{2} \cdot |U_1|.
\end{align*}
The second inequality holds since $|K_1| \geq |K| / 2$. Combining these cases, the increase for $U_1$ is at most $4\cost^t_{\mathcal{C}^*}(G_2, K)$ and:
\begin{align*}
    \cost^t_{F}(G_2, K) - \cost^t_{\mathcal{C}^*}(G_2, K) &\leq 
    \cost^t_{F}(G_2, U_1) - \cost^t_{\mathcal{C}^*}(G_2, U_1) + \cost^t_{F}(G_2, U_2) - \cost^t_{\mathcal{C}^*}(G_2, U_2) \\
    &\leq 4\cost^t_{\mathcal{C}^*}(G_2, K) + 2\cost^t_{\mathcal{C}^*}(G_2, K) \\
    &\leq 6\cost^t_{\mathcal{C}^*}(G_2, K)
\end{align*}
providing the target statement.
\end{proof}

Now we can show our main lemma regarding the approximate ratio.

\begin{proof}[Proof of Lemma \ref{lemma:preclusteringratio}]
 Recall that $\calC$ and $\calC^*$ represent the top-$k$ optimal clustering for graph $G$ and graph $G_2$, respectively. Let $\opt^k$ denote the cost of the optimum solution under the top-$k$ norm for $G$. When we transfer the graph $G_2$ to $G$, we will incur an increase in the cost of $F$. Lemma \ref{lemma:g+1+2+3} provides the upper bounds on the increase in the cost of $F$, given by:
\begin{align*}
    \cost^k_{F}(G) &\leq \cost^k_{F}(G_2) + g^+_1 + g^+_2 + g^+_4 \\
    &\leq \left(\frac{3}{\beta} + \frac{1}{\lambda} + \frac{1}{\beta\lambda} + 1\right)\opt^k + \cost^k_{F}(G_2).
\end{align*}
On the other hand, based on Lemma \ref{lem:nearoptimalindeletedgraph}, we know that in $\tilde{G}$, $F$ is a $7$-approximate solution in $G_2$. Thus, we have:
\begin{align*}
    \cost^k_{F}(G_2) &\leq 7 \cost^k_{\mathcal{C}^*}(G_2) 
    \leq 7 \cost^k_{\calC}(G_2) 
    \leq 7\opt^k,
\end{align*}
which establishes the desired inequality.
\end{proof}

\bibliographystyle{alpha}
\bibliography{local}

\newcommand{\etalchar}[1]{$^{#1}$}
\begin{thebibliography}{KCMNT08}

\bibitem[ACN08]{ACN08}
Nir Ailon, Moses Charikar, and Alantha Newman.
\newblock Aggregating inconsistent information: {R}anking and clustering.
\newblock {\em Journal of the ACM}, 55(5):1--27, 2008.

\bibitem[AHK{\etalchar{+}}09]{agrawal2009generating}
Rakesh Agrawal, Alan Halverson, Krishnaram Kenthapadi, Nina Mishra, and Panayiotis Tsaparas.
\newblock Generating labels from clicks.
\newblock In {\em Proceedings of the Second ACM International Conference on Web Search and Data Mining}, pages 172--181, 2009.

\bibitem[AW22]{DBLP:conf/innovations/Assadi022}
Sepehr Assadi and Chen Wang.
\newblock Sublinear time and space algorithms for correlation clustering via sparse-dense decompositions.
\newblock In {\em Proceedings of the 13th Conference on Innovations in Theoretical Computer Science (ITCS)}, volume 215 of {\em LIPIcs}, pages 10:1--10:20, 2022.

\bibitem[BBC04]{BBC04}
Nikhil Bansal, Avrim Blum, and Shuchi Chawla.
\newblock Correlation clustering.
\newblock {\em Machine learning}, 56(1):89--113, 2004.

\bibitem[BFS12]{blelloch2012greedy}
Guy~E Blelloch, Jeremy~T Fineman, and Julian Shun.
\newblock Greedy sequential maximal independent set and matching are parallel on average.
\newblock In {\em Proceedings of the twenty-fourth annual ACM symposium on Parallelism in algorithms and architectures}, pages 308--317, 2012.

\bibitem[CALM{\etalchar{+}}21]{cohen2021correlation}
Vincent Cohen-Addad, Silvio Lattanzi, Slobodan Mitrovi{\'c}, Ashkan Norouzi-Fard, Nikos Parotsidis, and Jakub Tarnawski.
\newblock Correlation clustering in constant many parallel rounds.
\newblock In {\em International Conference on Machine Learning}, pages 2069--2078. PMLR, 2021.

\bibitem[CCAL{\etalchar{+}}24]{cao2024understanding}
Nairen Cao, Vincent Cohen-Addad, Euiwoong Lee, Shi Li, Alantha Newman, and Lukas Vogl.
\newblock Understanding the cluster linear program for correlation clustering.
\newblock In {\em Proceedings of the 56th Annual ACM Symposium on Theory of Computing}, pages 1605--1616, 2024.

\bibitem[CCMU21]{DBLP:conf/wdag/CambusCMU21}
M{\'{e}}lanie Cambus, Davin Choo, Havu Miikonen, and Jara Uitto.
\newblock Massively parallel correlation clustering in bounded arboricity graphs.
\newblock In {\em 35th International Symposium on Distributed Computing (DISC)}, volume 209 of {\em LIPIcs}, pages 15:1--15:18, 2021.

\bibitem[CDK14]{chierichetti2014correlation}
Flavio Chierichetti, Nilesh Dalvi, and Ravi Kumar.
\newblock Correlation clustering in {M}ap{R}educe.
\newblock In {\em Proceedings of the 20th ACM International Conference on Knowledge Discovery and Data Mining (SIGKDD)}, pages 641--650, 2014.

\bibitem[CGS17]{charikar2017local}
Moses Charikar, Neha Gupta, and Roy Schwartz.
\newblock Local guarantees in graph cuts and clustering.
\newblock In {\em International Conference on Integer Programming and Combinatorial Optimization (IPCO)}, pages 136--147, 2017.

\bibitem[CGW05]{CGW05}
Moses Charikar, Venkatesan Guruswami, and Anthony Wirth.
\newblock Clustering with qualitative information.
\newblock {\em Journal of Computer and System Sciences}, 71(3):360--383, 2005.

\bibitem[CHS24]{cao2024breaking}
Nairen Cao, Shang-En Huang, and Hsin-Hao Su.
\newblock Breaking 3-factor approximation for correlation clustering in polylogarithmic rounds.
\newblock In {\em Proceedings of the 2024 Annual ACM-SIAM Symposium on Discrete Algorithms (SODA)}, pages 4124--4154. SIAM, 2024.

\bibitem[CKL{\etalchar{+}}24]{cambus20243+}
M{\'e}lanie Cambus, Fabian Kuhn, Etna Lindy, Shreyas Pai, and Jara Uitto.
\newblock A (3+ ɛ)-approximate correlation clustering algorithm in dynamic streams.
\newblock In {\em Proceedings of the 2024 Annual ACM-SIAM Symposium on Discrete Algorithms (SODA)}, pages 2861--2880. SIAM, 2024.

\bibitem[CKP08]{chakrabarti2008graph}
Deepayan Chakrabarti, Ravi Kumar, and Kunal Punera.
\newblock A graph-theoretic approach to webpage segmentation.
\newblock In {\em Proceedings of the 17th International conference on World Wide Web (WWW)}, pages 377--386, 2008.

\bibitem[CLLN23]{CLLN23}
Vincent {Cohen-Addad}, Euiwoong Lee, Shi Li, and Alantha Newman.
\newblock Handling correlated rounding error via preclustering: {A} 1.73-approximation for correlation clustering.
\newblock In {\em Proceedings of the 64rd Annual IEEE Symposium on Foundations of Computer Science (FOCS)}, 2023.

\bibitem[CLN22]{CLN22}
Vincent {Cohen-Addad}, Euiwoong Lee, and Alantha Newman.
\newblock Correlation clustering with {S}herali-{A}dams.
\newblock In {\em Proceedings of 63rd Annual IEEE Symposium on Foundations of Computer Science (FOCS)}, pages 651--661, 2022.

\bibitem[CMSY15]{CMSY15}
Shuchi Chawla, Konstantin Makarychev, Tselil Schramm, and Grigory Yaroslavtsev.
\newblock Near optimal {LP} rounding algorithm for correlation clustering on complete and complete $k$-partite graphs.
\newblock In {\em Proceedings of the 47th Annual ACM Symposium on Theory of Computing (STOC)}, pages 219--228, 2015.

\bibitem[CS19]{chakrabarty2019approximation}
Deeparnab Chakrabarty and Chaitanya Swamy.
\newblock Approximation algorithms for minimum norm and ordered optimization problems.
\newblock In {\em Proceedings of the 51st Annual ACM SIGACT Symposium on Theory of Computing}, pages 126--137, 2019.

\bibitem[CSX12]{chen2012clustering}
Yudong Chen, Sujay Sanghavi, and Huan Xu.
\newblock Clustering sparse graphs.
\newblock In {\em Advances in Neural Information Processing Systems (Neurips)}, pages 2204--2212, 2012.

\bibitem[DG08]{dean2008mapreduce}
Jeffrey Dean and Sanjay Ghemawat.
\newblock Mapreduce: simplified data processing on large clusters.
\newblock {\em Communications of the ACM}, 51(1):107--113, 2008.

\bibitem[DMN23]{davies2023fast}
Sami Davies, Benjamin Moseley, and Heather Newman.
\newblock Fast combinatorial algorithms for min max correlation clustering.
\newblock {\em arXiv preprint arXiv:2301.13079}, 2023.

\bibitem[DMN24]{davies2023one}
Sami Davies, Benjamin Moseley, and Heather Newman.
\newblock Simultaneously approximating all lp-norms in correlation clustering.
\newblock In {\em 51st International Colloquium on Automata, Languages, and Programming, {ICALP} 2024}, 2024.

\bibitem[FN19]{fischer2019tight}
Manuela Fischer and Andreas Noever.
\newblock Tight analysis of parallel randomized greedy mis.
\newblock {\em ACM Transactions on Algorithms (TALG)}, 16(1):1--13, 2019.

\bibitem[HIA24]{heidrich20244}
Holger~SG Heidrich, Jannik Irmai, and Bjoern Andres.
\newblock A 4-approximation algorithm for min max correlation clustering.
\newblock In {\em AISTATS}, pages 1945--1953, 2024.

\bibitem[KCMNT08]{kalashnikov2008web}
Dmitri~V. Kalashnikov, Zhaoqi Chen, Sharad Mehrotra, and Rabia Nuray-Turan.
\newblock Web people search via connection analysis.
\newblock {\em IEEE Transactions on Knowledge and Data Engineering}, 20(11):1550--1565, 2008.

\bibitem[KMZ19]{kalhan2019correlation}
Sanchit Kalhan, Konstantin Makarychev, and Timothy Zhou.
\newblock Correlation clustering with local objectives.
\newblock {\em Advances in Neural Information Processing Systems}, 32, 2019.

\bibitem[Lub85]{luby1985simple}
Michael Luby.
\newblock A simple parallel algorithm for the maximal independent set problem.
\newblock In {\em Proceedings of the seventeenth annual ACM symposium on Theory of computing}, pages 1--10, 1985.

\bibitem[PM16]{puleo2016correlation}
Gregory Puleo and Olgica Milenkovic.
\newblock Correlation clustering and biclustering with locally bounded errors.
\newblock In {\em International Conference on Machine Learning}, pages 869--877. PMLR, 2016.

\bibitem[PPO{\etalchar{+}}15]{DBLP:conf/nips/PanPORRJ15}
Xinghao Pan, Dimitris~S. Papailiopoulos, Samet Oymak, Benjamin Recht, Kannan Ramchandran, and Michael~I. Jordan.
\newblock Parallel correlation clustering on big graphs.
\newblock In {\em Advances in Neural Information Processing Systems (Neurips)}, pages 82--90, 2015.

\bibitem[SDE{\etalchar{+}}21]{shi2021scalable}
Jessica Shi, Laxman Dhulipala, David Eisenstat, Jakub {\L}{k{a}}cki, and Vahab Mirrokni.
\newblock Scalable community detection via parallel correlation clustering.
\newblock {\em arXiv preprint arXiv:2108.01731}, 2021.

\bibitem[VGW18]{DBLP:conf/www/VeldtGW18}
Nate Veldt, David~F. Gleich, and Anthony Wirth.
\newblock A correlation clustering framework for community detection.
\newblock In {\em Proceedings of the 2018 ACM World Wide Web Conference (WWW)}, pages 439--448, 2018.

\end{thebibliography}

\appendix

\section{Reduction from All Monotone Symmetric Norms to Top-$k$ Norms}
\label{sec:all-norm}
% \begin{definition}[Top-$k$ norm]
%     Given a norm $f$, for any vector $x\in \mathbb{R}^n$, $f(x)$ is equal to the sum of the $k$ largest coordinates of $x$, then $f$ is called the top-$k$ norm. For simplicity, let $\topk(x)$ denote the top-$k$ norm of $x$.
% \end{definition} 

\begin{definition}[Ordered Norms]
    For any vector $x\in \mathbb{R}_{\geq 0}^n$, let $x^\downarrow$ denote the vector $x$ with its coordinates sorted in non-increasing order. Given weight vector $w \in \R_{\geq 0}^n$ with $w_1 \geq w_2 \geq \cdots \geq w_n$, the $w$-ordered norm of $x$ is defined as $\order(w;x)=\sum_{i=1}^n w_i x_i^\downarrow$.
\end{definition}

\begin{lemma}[Lemma 5.2 of \cite{chakrabarty2019approximation}]\label{lem:Ordered2all}
    For any monotone and symmetric norm $f:\mathbb{R}^n\rightarrow \mathbb{R}_+$, define the set $\mathbb{B}_+(f):=\{x\in \mathbb{R}_+^n:f(x)\le 1\}$, and $W=\{w\in \mathbb{R}_+^n : w_1\ge w_2 \ge \cdots \ge w_n, w\ is\ a\ subgradient\ of\\ f\ at\ some\ x\in \mathbb{B}_+(f)\}$. Then we have $f(x)=\max_{w\in W} \order(w;x)$ for every $x\in \mathbb{R}_{\geq 0}^n$.
\end{lemma}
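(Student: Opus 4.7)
The plan is to establish the identity $f(x) = \max_{w \in W} \order(w; x)$ for every $x \in \R^n_{\geq 0}$ by exploiting the standard bi-duality for norms, combined with the symmetric and absolute structure of $f$. Since any norm is convex and $1$-homogeneous, Euler's identity gives $\langle w, y \rangle = f(y)$ for every $w \in \partial f(y)$, and the subgradient inequality then reduces to $w \in B^\circ := \{w : \langle w, z \rangle \leq f(z) \text{ for all } z \in \R^n\}$, i.e., the dual unit ball. Because $0 \in \mathbb{B}_+(f)$ and $\partial f(0) = B^\circ$, the set $W$ in the lemma simplifies to $B^\circ \cap \R^n_{\geq 0} \cap \{w : w_1 \geq \cdots \geq w_n\}$, so the claim reduces to $f(x) = \max_{w \in W} \langle w, x^\downarrow \rangle$.

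For the upper bound, I fix $w \in W$ and show $\order(w; x) \leq f(x)$. The definition of $B^\circ$ gives $\langle w, \sigma(x) \rangle \leq f(\sigma(x)) = f(x)$ for every permutation $\sigma$, where the equality uses symmetry of $f$. Since $w$ and $x^\downarrow$ are both sorted non-increasingly, the rearrangement inequality yields $\order(w; x) = \langle w, x^\downarrow \rangle = \max_\sigma \langle w, \sigma(x) \rangle \leq f(x)$, establishing $f(x) \geq \max_{w \in W} \order(w; x)$.

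For the lower bound I use bi-duality: for any norm, $f(x) = \max_{w \in B^\circ} \langle w, x \rangle$, which I apply to $x^\downarrow$. To see the maximum is attained in $W$, I first show that the dual norm $f^\circ$ inherits absoluteness and symmetry from $f$: for any sign pattern $s \in \{\pm 1\}^n$, writing $s \odot w$ for the coordinatewise product, $f^\circ(s \odot w) = \sup_{f(x) \leq 1} \langle s \odot w, x \rangle = \sup_{f(x) \leq 1} \langle w, s \odot x \rangle = f^\circ(w)$ by absoluteness of $f$ (a consequence of monotonicity and symmetry), and the analogous argument works for coordinate permutations. Consequently, given any maximizer $w \in B^\circ$ of $\langle w, x^\downarrow \rangle$, the vector $|w|$ of coordinatewise absolute values also lies in $B^\circ$ and satisfies $\langle |w|, x^\downarrow \rangle \geq \langle w, x^\downarrow \rangle$ because $x^\downarrow \geq 0$; sorting $|w|$ non-increasingly keeps it in $B^\circ$ (by symmetry of $f^\circ$) and, by rearrangement, does not decrease its inner product with $x^\downarrow$. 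This produces $w^* \in W$ with $\order(w^*; x) = \langle w^*, x^\downarrow \rangle = f(x^\downarrow) = f(x)$.

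The main obstacle is the duality step: one must invoke bi-duality $f = f^{\circ\circ}$ for a general norm and verify that $f^\circ$ preserves absoluteness and symmetry. These are classical facts about symmetric gauge functions but should be stated explicitly. A secondary subtlety is identifying $\partial f(0)$ with the full dual ball $B^\circ$; this follows because the subgradient inequality at $0$ reads $f(z) \geq \langle w, z \rangle$ for all $z$, which is precisely $w \in B^\circ$.
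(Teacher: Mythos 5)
The paper does not prove this lemma; it is imported verbatim as Lemma~5.2 of \cite{chakrabarty2019approximation} and used as a black box, so there is no in-paper proof to compare against. Your reconstruction via norm duality is, as far as I can tell, correct and is essentially the standard argument behind the cited result.

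A few remarks on the argument itself. The identification $W = B^\circ \cap \R^n_{\geq 0} \cap \{w : w_1 \geq \cdots \geq w_n\}$ is right, and you handle both inclusions: Euler's identity $\langle w, y\rangle = f(y)$ for $w \in \partial f(y)$ converts the subgradient inequality into $w \in B^\circ$, and conversely $0 \in \mathbb{B}_+(f)$ with $\partial f(0) = B^\circ$ gives the reverse containment. The upper bound is immediate once one notes $\langle w, x^\downarrow\rangle \leq f(x^\downarrow) = f(x)$ for $w \in B^\circ$ and $f$ symmetric; the detour through $\max_\sigma \langle w, \sigma(x)\rangle$ is harmless but unnecessary, since $x^\downarrow$ is just one particular permutation of $x$. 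For the lower bound, the replacement $w \mapsto (|w|)^\downarrow$ (absolute value, then sort) stays in $B^\circ$ because $f^\circ$ inherits absoluteness and symmetry from $f$, and the rearrangement inequality shows the objective cannot decrease --- this is the crux and you carry it out correctly.

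One small point worth making explicit: the paper's working definition of a norm is on $\R^n_{\geq 0}$, and ``absoluteness'' of $f$ is really the fact that the canonical extension $f(x) := f(|x|)$ to all of $\R^n$ is well-defined and a genuine absolute norm (this is what lets you invoke biduality $f = f^{\circ\circ}$ on $\R^n$). You allude to this as ``a consequence of monotonicity and symmetry''; it deserves a one-line justification, since the identity $f^\circ(s \odot w) = f^\circ(w)$ you use is exactly where this extension enters. With that spelled out, the proof is complete.
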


\lemmatopktolpnorm*
\begin{proof}[Proof of Lemma~\ref{lem:Topk2all}]
    For any $w=(w_1,w_2,\dots,w_n)$ such that $w_1\ge w_2 \ge \cdots \ge w_n\ge 0$, if we set $w'=(w'_1,w'_2,\dots,w'_n)$ as 
    \begin{align*}
    w_i'= \left\{
        \begin{array}{rcl}
            w_i-w_{i+1} & & {i\in [1,n-1]}\\
            w_n & & {i=n}
        \end{array}
        \right.
    \end{align*}

    Let $\topk(x)$ denote the top-$k$ norm of $x$. Then we have $\order(w;x)=\sum_{k=1}^n w_k'\cdot \topk(x)$.

    Let $\mathbb{B}_+(f):=\{x\in \mathbb{R}_+^n:f(x)\le 1\}$ and $W=\{w\in \mathbb{R}_+^n : w_1\ge w_2 \ge \cdots \ge w_n, w\ is\ a\ subgradient\ of\\ f\ at\ some\ x\in \mathbb{B}_+(f)\}$. We construct a new set $W'=\{w'|w\in W : w'=(w'_1 = w_1-w_2, w'_2 = w_2-w_3,\dots,w'_{n- 1} = w_{n-1}-w_n, w'_n = w_n) \}$. By Lemma~\ref{lem:Ordered2all}, we have $f(x)=\max_{w\in W} \order(w;x)=\max_{w'\in W'}\sum_{k=1}^n w_k'\cdot \topk(x)$.

    Let $y$ be the disagreement vector for the given clustering $\calC_{ALG}$. For any symmetric monotone norm $f: \R_{\geq 0}^n \to \R_{\geq 0}$, define $y^*_f$ to be the disagreement vector for the optimal clustering under the norm $f$. By the assumption that $\calC_{ALG}$ is a simultaneous $\rho$-approximation for every top-$k$ norm, we have we have $\topk(y)\le \rho \cdot \topk(y^*_{\text{top-k}})$ for every $k \in [n]$, where $y^*_{\text{top-k}}$ is the disagreement vector for the optimal clustering under top-k objective. Now we bound $f(y)$ in terms of $f(y^*_f)$ for any monotone symmetric norm $f$:
    \begin{align*}
        f(y)&=\max_{w'\in W'}\sum_{k=1}^n w_k'\cdot \topk(y)
        \le \max_{w'\in W'}\sum_{k=1}^n w_k'\cdot \rho \cdot \topk(y^*_{\text{top-k}})
        = \rho \cdot \max_{w'\in W'}\sum_{k=1}^n w_k'\cdot \topk(y^*_{\text{top-k}})\\
        &\le \rho \cdot \max_{w'\in W'}\sum_{k=1}^n w_k'\cdot \topk(y^*_f)
        = \rho \cdot f(y^*_f). \qedhere
    \end{align*}
\end{proof}
% {\color{gray}
% By lemma~\ref{lem:Topk2all}, we conclude that the single clustering returned by algorithm~\ref{alg:pre-clustering} is simultaneously a $(\frac{3}{\beta} + \frac{1}{\lambda} + \frac{1}{\beta\lambda}  + 4)$-approximation for all monotone and symmetric norm objectives, and the single clustering returned by algorithm~\ref{alg:pre-clusteringlp} with KMZ rounding algorithm is simultaneously a 60-approximation for all monotone and symmetric norm objectives.
% }

\section{Proof for Theorem \ref{theorem:constructHbysampling}}
\label{sec:constructH}
We repeat Theorem \ref{theorem:constructHbysampling} for convenience. 
\thmcnostructH*

\begin{proof}

The algorithm \ref{alg:constrcuctHbysampling} works as follows: For any edge $uv \in E$, assume $d(u) \geq d(v)$ without loss of generality. Let $j_u$ be the $j$ such that $2^{j-1} < d(u) \leq 2^j$. We then sample nodes from $N(u) \Delta N(v)$ with probability $\min(\frac{\tau}{\beta 2^{j_u}}, 1)$ (Line 5). Let $X_{uv}$ be the number of sampled nodes from $N(u) \Delta N(v)$. If $X_{uv}$ is at least $(1 + \frac{\epsilon}{2}) \cdot \tau \frac{d(u)}{2^j}$, we set $I(uv)$ to $0$ since there are too many sampled nodes. Otherwise, $I(uv)$ is set to $1$ (Lines 6-11).

Note that $d(u) \leq 2^j$, so in expectation, we will only sample $O(\tau / \beta)$ nodes from $N(u)$ and $N(v)$, which bounds the number of sampled nodes and total number of memory.

\begin{algorithm}[ht!]
\caption{Construct graph H.\\
\textbf{Input}: Graph $G = (V, E)$\\
\textbf{Output}: For any edge $uv \in E$, set $I(uv) = 1$ if $|N(u) \Delta N(v)| \leq \beta \max(d(u), d(v))$; set $I(uv) = 0$ if $|N(u) \Delta N(v)| \geq (1+\epsilon)\beta \max(d(u), d(v))$. %A partition of $V$, $\mathcal{S}$.
}
\label{alg:constrcuctHbysampling}
\begin{algorithmic}[1]
\Function {\textsc{Agreement}}{$G = (V, E)$}
    \State $\tau \leftarrow \frac{400 \log n}{\epsilon^2}$
    \For{$j \in [0, \log n]$}
    \label{alg:AllNormCCBySamplingsetlpvaluestart-1}
    \State $ S(j) \leftarrow \emptyset$
    \For{$v \in V$} add $v$ to $S(j)$ with probability $\min(\tau / (\beta2^j), 1)$
    \EndFor
    \For{$uv \in E$ } 
    \If{$2^{j-1} < M_{uv} \leq 2^{j}$} \Comment{$M_{uv} = \max(d(u), d(v))$}
    \State $X_{uv} \leftarrow |(N(u) \cap S(j)) \Delta (N(v) \cap S(j))|$
    \If{$X_{uv} > (1 + \frac{\epsilon}{2}) \cdot \tau \frac{M_{uv}}{2^j}$}
        \State $I(uv) \leftarrow 0$
    \Else
    \State $I(uv) \leftarrow 1$
    \EndIf
    \EndIf
    \EndFor
    \EndFor
    \EndFunction
\end{algorithmic}
\end{algorithm}

To show that Algorithm \ref{alg:constrcuctHbysampling} correctly outputs, assume $\tau / \beta$ is a power of 2 for simplicity. When $d(u) \leq \tau / \beta$, we have $2^{j_u} \leq \tau / \beta$ and $\frac{\tau}{\beta 2^{j_u}} \geq 1$. Hence, we will set $S(j)$ as $V$ and $X_{uv} = |N(u) \Delta N(v)|$ is the exact estimate. Now assume $d(u) > \tau / \beta$. We have $\tau / (\beta 2^{j_u}) < 1$ and by linearity of expectation, we have
\begin{align*}
    \E[X_{uv}] = \frac{\tau}{\beta 2^{j_u}} \cdot |N(u) \Delta N(v)|.
\end{align*}

If $|N(u) \Delta N(v)| > (1 + \epsilon)\beta \max(d(u), d(v))$, then 
\begin{align*}
    \E[X_{uv}] > \frac{\tau}{\beta 2^{j_u}} \cdot (1 + \epsilon)\beta d(u) = (1 + \epsilon) \tau \cdot \frac{d(u)}{2^{j_u}}.
\end{align*}
Note that $d(u) \geq 2^{j_u} / 2$. Hence, by the Chernoff bound \ref{thm:chernoffleq}, we have 
\begin{align*}
    \Pr[X_{uv} \leq (1 + \frac{\epsilon}{2}) \cdot \tau \frac{d(u)}{2^{j_u}}] \leq \exp\left(-\left(\frac{\epsilon}{2}\right)^2 \cdot \frac{\tau}{8}\right) \leq n^{-10}.
\end{align*}

On the other side, when $|N(u) \Delta N(v)| \leq \beta \max(d(u), d(v))$, we have 
\begin{align*}
    \E[X_{uv}] \leq \frac{\tau}{\beta 2^{j_u}} \cdot \beta d(u) = \tau \cdot \frac{d(u)}{2^{j_u}}.
\end{align*}
Hence, by the Chernoff bound \ref{thm:chernoffgeq}, we have 
\begin{align*}
    \Pr[X_{uv} > (1 + \frac{\epsilon}{2}) \cdot \tau \frac{d(u)}{2^{j_u}}] \leq \exp\left(-\left(\frac{\epsilon}{2}\right)^2 \cdot \frac{\tau}{6}\right) \leq n^{-10}.
\end{align*}
Thus, by the union bound, we have for all $uv \in E$, the algorithm outputs “Yes” if $|N(u) \Delta N(v)| \leq \beta \max(d(u), d(v))$, and outputs “No” if $|N(u) \Delta N(v)| > (1 + \epsilon) \beta \max(d(u), d(v))$ with probability at least $1 - \frac{1}{n^6}$.

The last thing we want to show is the total memory and total work. All computations are naturally parallelized. We only need to bound the number of sampled nodes for each $u$. Note that when we consider node $u$, we must have $2^{j_u} \geq d(u)$. Let $Y_u$ be the number of sampled nodes when we consider $u$, then 
\begin{align*}
    \E[Y_u] = \frac{\tau}{\beta 2^{j_u}} \cdot d(u) \leq \frac{\tau}{\beta}.
\end{align*}
Thus, by the Chernoff bound, we have 
\begin{align*}
    \Pr[Y_u \geq \frac{2\tau}{\beta}] \leq \exp\left(-\frac{\tau}{3\beta}\right) \leq n^{-10}.
\end{align*}
Hence, the total memory and total work required is $\tilde{O}(m \tau) = \tilde{O}(m / \epsilon^2)$.

\end{proof}

\section{Proof for Lemma \ref{lemma:samplingfinalvalue}}
\label{sec:proofofsamplingfinalvalue}
We repeat Theorem \ref{lemma:samplingfinalvalue} for convenience.
\lemmasmpaling*

\begin{proof}

Assume that $d(u) \geq d(v)$ without loss of generality and $\tau$ is a power of 2 for simplicity. Let $j_u$ be the $j$ such that $2^{j-1} < d(u) \leq 2^j$. We then sample nodes from $N_H(u) \cap N_H(v)$ with probability $\min(\frac{\tau}{\beta 2^{j_u}}, 1)$. 

% Let $X_{uv}$ be the number of sampled nodes from $N(u) \Delta N(v)$. 
% If $X_{uv}$ is at least $(1 - \frac{\epsilon}{2}) \cdot \tau \frac{d(u)}{2^{j_u}}$, we set $I(uv)$ to 0 since there are too many sampled nodes. Otherwise, $I(uv)$ is set to 1 (Lines 6-11).
If $|N_H(u)| \leq \tau$, then when we consider $uv$, we have $2^{j_u} \leq \tau$ and we sample each node with probability $\tau / 2^{j_u} \geq 1$. Therefore, we will compute $|N_H(u) \cap N_H(v)|$ exactly and $\tilde{x}_{uv} = x_{uv}$. Otherwise, we have $d_H(u) > \tau$. Define the random variables:
\begin{align*}
    W_{uv} &= \frac{2^{j_u}}{\tau}\sum_{w\in N_H(u) \cap N_H(v)}\mathbb{1}(w\in S(j_u))
\end{align*}
so we have $\E[W_{uv}] = |N_H(u) \cap N_H(v)|$ and $\tilde{x}_{uv} = 1 - \frac{W_{uv}}{d(u)}$. We will divide $|N_H(u) \cap N_H(v)|$ into 3 cases, and in each case, we will show our argument holds with probability at least $1 - 1 / n^6$.

The first case is $|N_H(u) \cap N_H(v)| \geq (1 - \frac{\epsilon}{2}) \cdot d(u)$, we have $x_{uv} \leq \frac{\epsilon}{2}$ and 
\begin{align*}
    \E[W_{uv}] = |N_H(u) \cap N_H(v)| \geq (1 - \frac{\epsilon}{2}) \cdot d(u).
\end{align*}
By Chernoff bound \ref{thm:chernoffgeq}, we have 
\begin{align*}
    \Pr[W_{uv} \leq (1 - \epsilon) \cdot d(u)] &= \Pr[W_{uv} \leq \frac{1 - \epsilon}{1 - \epsilon / 2} \cdot (1 - \frac{\epsilon}{2}) \cdot d(u)] \\ 
    &\leq \Pr[W_{uv} \leq (1 - \frac{\epsilon}{2}) \cdot (1 - \frac{\epsilon}{2}) \cdot d(u)] \\
    &\leq \exp \left(- (\frac{\epsilon}{2})^2 \cdot \frac{d(u)}{4} \right) \\
    &\leq \exp \left(- \frac{\tau}{16} \right) \leq \exp(-10\log n) \leq n^{-10}.
\end{align*}

Thus, we also have $\tilde{x}_{uv} = 1 - \frac{W_{uv}}{d(u)} \leq \epsilon$, with probability at least $1 - n^{-8}$ for all $uv$. If $uv \in E$, at line \ref{alg:AllNormCCBySamplingRoundingstart}, we will set $\tilde{x}_{uv}$ to 0 and $\tilde{x}_{uv} \leq x_{uv}$. If $uv \in {V \choose 2} \setminus E$, we have $1 - \tilde{x}_{uv} \leq 1 \leq (1 + \epsilon)(1 - x_{uv})$. So Lemma \ref{lemma:samplingfinalvaluelpvalue} holds when $|N_H(u) \cap N_H(v)| \geq (1 - \frac{\epsilon}{2}) \cdot d(u)$.

The second case is $|N_H(u) \cap N_H(v)| \leq \frac{\epsilon}{2} \cdot d(u)$. Again, we have 
\begin{align*}
    \E[W_{uv}] = |N_H(u) \cap N_H(v)| \leq \frac{\epsilon}{2} \cdot d(u).
\end{align*}
By Chernoff bound \ref{thm:chernoffleq}, we have
\begin{align*}
    \Pr[W_{uv} \geq \epsilon \cdot d(u)] &= \Pr[W_{uv} \geq (1 + 1) \cdot \frac{\epsilon}{2} \cdot d(u)] \\ 
    &\leq \exp \left(- \frac{\epsilon d(u)}{6} \right) \\
    &\leq \exp \left(- \frac{\tau}{6} \right) \leq \exp(-10\log n) \leq n^{-10}.
\end{align*}
Thus, we also have $\tilde{x}_{uv} \geq 1 - \epsilon$, with probability at least $1 - n^{-8}$ for all $uv$. Note that $x_{uv} \geq 1 - \frac{\epsilon}{2}$. If $uv \in E$, we have $\tilde{x}_{uv} \leq 1 \leq (1 + \epsilon) x_{uv}$. If $uv \in {V \choose 2} \setminus E$, at line \ref{alg:AllNormCCBySamplingRoundingmid}, we will set $\tilde{x}_{uv}$ to 1 and $1 - \tilde{x}_{uv} \leq (1 + \epsilon)(1 - x_{uv})$.

The remaining case is $|N_H(u) \cap N_H(v)| \in \left( \frac{\epsilon}{2} \cdot d(u), (1 - \frac{\epsilon}{2}) \cdot d(u) \right)$. In this case, we have
\begin{align*}
    \Pr[\tilde{x}_{uv} > (1 + \epsilon)x_{uv}] &= \Pr[1 - \frac{W_{uv}}{d(u)} > (1 + \epsilon)(1 - \frac{|N_H(u) \cap N_H(v)|}{d(u)})] \\
    &= \Pr[\frac{W_{uv}}{d(u)} + \epsilon \leq (1 + \epsilon) \frac{|N_H(u) \cap N_H(v)|}{d(u)}] \\
\end{align*}

Note that $|N_H(u) \cap N_H(v)| < (1 - \frac{\epsilon}{2}) \cdot d(u)$, so $\frac{|N_H(u) \cap N_H(v)|}{(1 - \frac{\epsilon}{2}) \cdot d(u)} < 1$, 
and
\begin{align*}
    \Pr[\tilde{x}_{uv} > (1 + \epsilon)x_{uv}] &\leq \Pr[\frac{W_{uv}}{d(u)} + \epsilon \cdot \frac{|N_H(u) \cap N_H(v)|}{(1 - \frac{\epsilon}{2}) \cdot d(u)} \leq (1 + \epsilon) \frac{|N_H(u) \cap N_H(v)|}{d(u)}] \\
    &\leq \Pr[W_{uv} \leq (1 + \epsilon - \frac{\epsilon}{1 - \epsilon/2}) |N_H(u) \cap N_H(v)|] \\
    &\leq \Pr[W_{uv} \leq (1 - \frac{\epsilon^2}{2 - \epsilon}) |N_H(u) \cap N_H(v)|] 
\end{align*}
Since $\E[W_{uv}] = |N_H(u) \cap N_H(v)|$, by chernoff bound \ref{thm:chernoffleq}, we have 
\begin{align*}
    \Pr[\tilde{x}_{uv} > (1 + \epsilon)x_{uv}] &\leq \exp\left( -\left(\frac{\epsilon^2}{2}\right)^2 \frac{|N_H(u) \cap N_H(v)|}{3}\right) \\
    &\leq \exp\left( -\frac{\epsilon^4}{4} \frac{\epsilon d(u)}{6}\right) \leq n^{-10}.
\end{align*}

Similarly, we have
\begin{align*}
    \Pr[1 - \tilde{x}_{uv} > (1 + \epsilon)(1 - x_{uv})] &= \Pr[\frac{W_{uv}}{d(u)} > (1 + \epsilon) \frac{|N_H(u) \cap N_H(v)|}{d(u)}] \\
    &\leq \Pr[W_{uv} > (1 + \epsilon) |N_H(u) \cap N_H(v)|] \\
    &\leq \exp\left( -\epsilon^2 \frac{|N_H(u) \cap N_H(v)|}{2}\right) \\
    &\leq \exp\left( -\epsilon^2 \frac{\epsilon d(u)}{2}\right) \leq n^{-10}.
\end{align*}
Thus, when $|N_H(u) \cap N_H(v)| \in \left( \frac{\epsilon}{2} \cdot d(u), (1 - \frac{\epsilon}{2}) \cdot d(u) \right)$, Lemma \ref{lemma:samplingfinalvaluelpvalue} holds for any edge from $E \cup K$. There might be some $-$edges in $\binom{V}{2} \setminus (E \cup K)$ with $x > 1 - \epsilon$, and for these edges, we set their $\widetilde{x}$ value as 1, resulting in zero cost. Combining all together, we know that Lemma \ref{lemma:samplingfinalvaluelpvalue} holds for all $uv$, with probability at least $1 - n^{-6}$.

For Lemma \ref{lemma:samplingfinalvaluetriangle}, consider edges from $\binom{V}{2} \setminus (E \cup K)$. We know that $x_{uv} \geq 1 - \epsilon$ and we set $\tilde{x}_{uv}$ as 1. For edges from $E \cup K$, we have three cases. For the first and second cases, we have $x_{uv} - \epsilon \leq \tilde{x}_{uv} \leq x_{uv} + \epsilon$. For the third case, we already have $\tilde{x}_{uv} < (1 + \epsilon)x_{uv} < x_{uv} + \epsilon$ with high probability. For the other direction,
\begin{align*}
    \Pr[\tilde{x}_{uv} < x_{uv} - \epsilon] &\leq \Pr[\tilde{x}_{uv} < (1 + \epsilon)x_{uv} - \epsilon] \\
    &= \Pr[1 - \tilde{x}_{uv} > 1 + \epsilon - (1 + \epsilon)x_{uv}] \\
    &\leq n^{-10}.
\end{align*}
In either case, we have $x_{uv} - \epsilon < \tilde{x}_{uv} < x_{uv} + \epsilon$ holding with probability at least $1 - n^{-6}$ for all $uv$. Hence, the approximate triangle inequality holds with high probability.

\end{proof}

% \section{Primal Dual}
% \input{PrimalDual}

% \input{Notes-Shili-May82024}

\end{CJK*}\end{document}